\newif\ifarxiv
\arxivtrue 

\ifarxiv
  \documentclass[11pt]{article}
  \usepackage[margin=1in]{geometry}
  \bibliographystyle{plainnat}
  \usepackage[slantedGreek]{mathpazo}
  \usepackage{graphicx}
  \usepackage{amsfonts}
  \usepackage{amssymb}
  \usepackage{amsthm}
  \usepackage{float}
  \usepackage{amsmath}%
  \usepackage[square,sort,comma]{natbib}
  \usepackage[colorlinks,citecolor=magenta,urlcolor=blue]{hyperref} 
  \linespread{1.1}
\else
  \bibliographystyle{plainnat}
\fi

\usepackage{color}




\usepackage{tikz}
\usetikzlibrary{positioning}

\usepackage{subcaption}

%
%
%



%


\renewcommand{\det}[1]{\left| #1 \right|}













\newcommand{\ben}{\begin{enumerate}}
\newcommand{\een}{\end{enumerate}}
\newcommand{\beq}{\begin{equation}}
\newcommand{\eeq}{\end{equation}}
\newcommand{\bde}{\begin{description}}
\newcommand{\ede}{\end{description}}

\newcommand{\NormRV}{\mathcal{N}}

\newcommand{\CauchyRV}{\mathcal{C}a}




\newtheoremstyle{slplain}
  {1\baselineskip\@plus.2\baselineskip\@minus.2\baselineskip}
  {.5\baselineskip\@plus.2\baselineskip\@minus.2\baselineskip}
  {\slshape}
  {}
  {\bfseries}
  {.}
  { }
  {}

\theoremstyle{slplain}

\newtheorem{theorem}{Theorem}

\newtheorem{corollary}[theorem]{Corollary}

\newtheorem{definition}[theorem]{Definition}

\newtheorem{lemma}[theorem]{Lemma}

\newtheorem{proposition}[theorem]{Proposition}
\newtheorem{remark}[theorem]{Remark}

\usepackage{booktabs,array}

\newcount\rowc


%
%

\newtheorem{assumption}{Assumption}

\ifarxiv
\title{Bayesian Global-Local Regularization}
\author{	
	\makebox[.4\linewidth]{Jyotishka Datta}\\\textit{Department of Statistics}\\\textit{Virginia Tech}\\\and 
	\makebox[.4\linewidth]{Nick Polson}\\\textit{ Chicago Booth}\\\textit{  University of Chicago}\\\and 
	\makebox[.4\linewidth]{Vadim Sokolov\footnote{Jyotishka Datta is Assistant Professor in the Department of Statistics at Virginia Tech. Nick Polson is Professor of Econometrics and Statistics at Chicago Booth, University of Chicago: ngp@chicagobooth.edu. Vadim Sokolov is Associate Professor at Volgenau School of Engineering, George Mason University, USA: vsokolov@gmu.edu.}}\\\textit{ Department of Systems Engineering }\\\textit{  and Operations Research}\\\textit{ George Mason University}
}
\date{First Draft: January 1, 2025 \\This Draft: \today}
\else
\fi

\graphicspath{{./fig/}}

\begin{document}
\ifarxiv
\maketitle
\begin{abstract}
\noindent We propose a unified framework for global-local regularization that bridges the gap between classical techniques---such as ridge regression and the nonnegative garotte---and modern Bayesian hierarchical modeling. By estimating local regularization strengths via marginal likelihood under order constraints, our approach generalizes Stein's positive-part estimator and provides a principled mechanism for adaptive shrinkage in high-dimensional settings. We establish that this isotonic empirical Bayes estimator achieves near-minimax risk (up to logarithmic factors) over sparse ordered model classes, constituting a significant advance in high-dimensional statistical inference. Applications to orthogonal polynomial regression demonstrate the methodology's flexibility, while our theoretical results clarify the connections between empirical Bayes, shape-constrained estimation, and degrees-of-freedom adjustments.
\end{abstract}
\else
\fi

\noindent Keywords: Bayesian, hyperparameter choice, model selection, regularization. 

\newpage

\section{Introduction}

Regularization lies at the heart of modern high-dimensional statistics and machine learning, serving as the primary mechanism to control the bias-variance trade-off \citep{hastie2016elements}. In regimes where the number of parameters $p$ rivals or exceeds the sample size $n$, classical unregularized estimators fail, and one must impose structure---sparsity, smoothness, or shape constraints---to recover the signal. Historically, two distinct cultures have addressed this challenge: the classical frequentist approach, exemplified by ridge regression \citep{hoerl1970ridge}, subset selection, and the nonnegative garotte \citep{breiman1995better}; and the Bayesian hierarchical modeling approach, which relies on global-local shrinkage priors \citep{polson2010shrink, polson2012local} to adaptively penalize coefficients.

While both traditions share the goal of complexity reduction, they often operate in parallel with distinct theoretical languages and algorithmic tools. Classical methods frequently rely on cross-validation or $C_p$-type criteria for tuning, which can be computationally intensive and unstable \citep{shao1993linear}. Conversely, fully Bayesian approaches offer principled uncertainty quantification but can suffer from computational bottlenecks and sensitivity to hyperparameter specifications in high dimensions.

In this paper, we propose a significant advance in high-dimensional inference: a \textit{unified global-local regularization framework} that bridges these two worlds. Our methodology interprets the classical regularization penalties as arising from hierarchical Gaussian priors where the local variances are treated as hyperparameters to be estimated. A key innovation of our work is the use of \textit{isotonic regression} to estimate these local variance profiles under natural ordering constraints. This strategy, inspired by the empirical Bayes insights of \citet{deaton1980empirical} and \citet{xu2007some}, allows for a data-driven discovery of the signal structure---whether it be the decay of coefficients in a polynomial expansion or the sparsity pattern in a wavelet basis---without the need for ad-hoc thresholds or discrete model selection.

Our contributions are threefold. First, we formalize the connection between generalized ridge regression and isotonic empirical Bayes estimation, showing how methods like the nonnegative garotte and Stein's positive-part estimator \citep{james1961estimation} emerge as special cases of our framework. Second, we provide rigorous theoretical guarantees for our method. We prove that our isotonic empirical Bayes estimator achieves near-minimax optimal risk (up to logarithmic factors) over a broad class of sparse, ordered signals (Theorem~\ref{thm:main}). This result serves as a theoretical anchor, validating that the flexibility of our global-local formulation does not come at the cost of statistical efficiency. Finally, we demonstrate the practical utility of our framework through applications to orthogonal polynomial regression, illustrating how ``ordering'' priors can effectively regularize complex models.

For ease of reading, we summarize the contributions more explicitly:
\begin{enumerate}
\item We introduce an ordered-variance empirical Bayes formulation of global--local shrinkage in which the local variance profile is estimated by marginal likelihood under monotonicity constraints, yielding a simple isotonic regression (PAVA) computation.
\item We establish adaptive, near-minimax risk bounds (up to logarithmic factors) for the resulting shrinkage rule over ordered sparse classes, and we provide an end-to-end (unconditional) guarantee under an explicit dyadic-margin condition.
\item We connect the resulting shrinkage profiles to classical regularization rules in canonical coordinates, clarifying how ridge-type procedures and Stein-type shrinkage can be viewed within the same variance-component framework.
\end{enumerate}

The remainder of the paper is organized as follows. Section~\ref{sec:background} introduces the global--local framework and motivating examples (including Deaton/Xu-style ordered shrinkage). Section~\ref{sec:global_local_framework} explains the regression/regularization connections in canonical coordinates. Section~\ref{sec:risk_bounds} states our main frequentist risk guarantee for isotonic empirical Bayes shrinkage and clarifies the role of Gaussian cloning/cross-fitting as a proof device. Section~\ref{sec:concentration_results} then gives posterior contraction statements for the isotonic EB posterior.

The rest of the paper is organized as follows.  Section~\ref{sec:background} by establishes the foundational global--local shrinkage framework in the context of Gaussian sequence models. Then we demonstrate when empirical Bayes estimation collapses and when it remains stable, we review classical perspectives on variance component estimation including ANOVA priors, the connections to Tikhonov regularization, and Stein's positive-part rule. These historical approaches provide conceptual context for our isotonic empirical Bayes methodology. The section concludes with motivating examples from Deaton and Xu that illustrate how ordered-variance structures arise naturally in polynomial regression and spectral decompositions.

Section~\ref{sec:global_local_framework} connects the ordered-variance empirical Bayes perspective to classical regularization in regression. We show how many familiar regularization procedures---ridge regression, generalized ridge regression, principal components regression, and positive-part Stein shrinkage---emerge as special cases of hierarchical Gaussian models with structured variance components. The key insight is that regularization rules become coordinatewise shrinkage rules in canonical (orthogonal) coordinates, and these shrinkage factors correspond to variance components that can be estimated under monotonicity constraints.

Section~\ref{sec:risk_bounds} provides the main theoretical contribution: frequentist risk bounds for isotonic empirical Bayes global--local shrinkage. We prove that the isotonic empirical Bayes estimator achieves near-minimax optimal risk (up to logarithmic factors) over ordered sparse classes. The analysis employs a Gaussian cloning technique that decouples hyperparameter estimation from posterior mean computation, enabling sharp risk characterization. We also provide an end-to-end unconditional guarantee under an explicit dyadic-margin condition that quantifies when the signal structure is sufficiently separated from noise.

Section~\ref{sec:concentration_results} translates the frequentist risk bounds into Bayesian posterior contraction results. We establish that the Gaussian posterior induced by isotonic empirical Bayes shrinkage contracts around the true parameter at near-minimax rates in $\ell_2$ norm. These concentration guarantees complement the risk bounds and demonstrate that the empirical Bayes posterior provides valid uncertainty quantification.

Finally, Section~\ref{sec:discussion_conclusion} synthesizes the main insights and discusses methodological implications, computational considerations, and directions for future work. The appendices contain detailed proofs, extensions to unknown noise variance, connections to Sobolev-type smoothness classes, and variants of the main results.

\section{Global and Local Shrinkage}
\label{sec:background}

Many statistical estimation problems can be recast as \textit{Gaussian sequence models}. In linear regression, the singular value decomposition $X = U D V^\top$ allows projection onto the canonical basis: $\hat{\alpha}_i \sim \mathcal{N}(\alpha_i, \sigma^2/d_i^2)$, where $\alpha_i = V^\top \beta$ are rotated coefficients. Similarly, in polynomial regression with orthogonal basis functions (e.g. Legendre polynomials), the regression function $f(x) = \sum_j \theta_j \psi_j(x)$ leads to coefficient estimates $\hat{\theta}_j \sim \mathcal{N}(\theta_j, \sigma^2/n)$. Both models reduce to heteroskedastic (or homoskedastic) Gaussian means problems, enabling shrinkage estimation via global-local priors. This perspective unifies classical regularization and Bayesian shrinkage in a common framework.

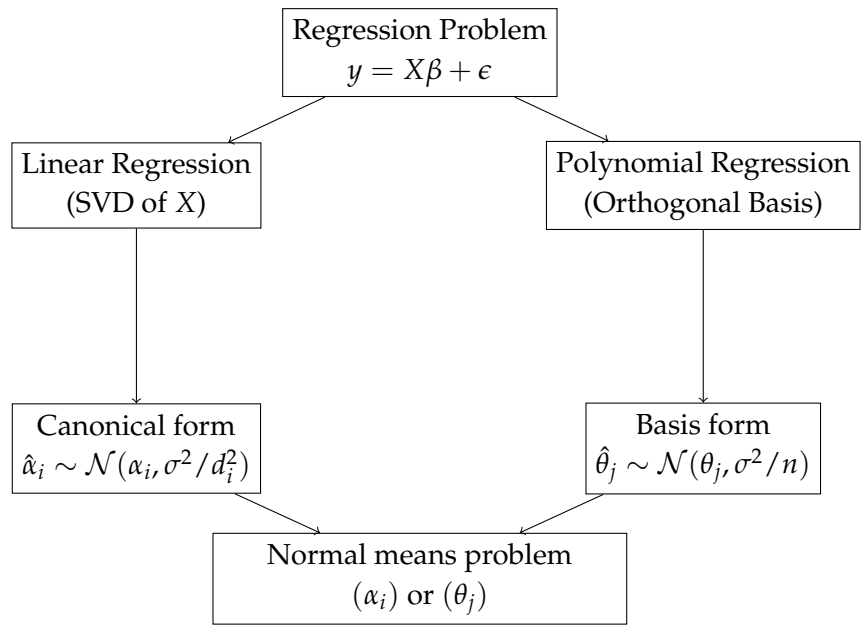
\begin{figure}[H]
\centering
\begin{tikzpicture}[node distance=2.5cm, every node/.style={align=center}]
  \node (regression) [rectangle, draw, minimum width=3.5cm, minimum height=1cm] {Regression Problem \\ $y = X\beta + \epsilon$};
  \node (svd) [rectangle, draw, below left of=regression, xshift=-2cm] {Linear Regression \\ (SVD of $X$)};
  \node (poly) [rectangle, draw, below right of=regression, xshift=2cm] {Polynomial Regression \\ (Orthogonal Basis)};
  \node (means1) [rectangle, draw, below of=svd, yshift=-1cm] {Canonical form \\ $\hat{\alpha}_i \sim \mathcal{N}(\alpha_i, \sigma^2/d_i^2)$};
  \node (means2) [rectangle, draw, below of=poly, yshift=-1cm] {Basis form \\ $\hat{\theta}_j \sim \mathcal{N}(\theta_j, \sigma^2/n)$};
  \node (gl) [rectangle, draw, below of=regression, yshift=-4.5cm, minimum width=5.5cm] {Normal means problem \\ $(\alpha_i)$ or $(\theta_j)$};

  \draw[->] (regression) -- (svd);
  \draw[->] (regression) -- (poly);
  \draw[->] (svd) -- (means1);
  \draw[->] (poly) -- (means2);
  \draw[->] (means1) -- (gl);
  \draw[->] (means2) -- (gl);
\end{tikzpicture}
\caption{Both linear and polynomial regression reduce to normal means problems under suitable transformations.}
\label{fig:normal_means_reduction}
\end{figure}

The popular global-local (GL, henceforth) regression framework typically involves a set of local shrinkage parameters ($\lambda_i^2$) that identify the non-null effects and two global shrinkage parameters: one for controlling the amount of sparsity ($\tau$) and another global variance parameter $\sigma^2$. For example, in the context of a Gaussian sequence model or a sparse normal means model, we define a standard GL framework as $Y_i \mid \theta_i \sim \NormRV(\theta_i, \sigma^2)$, with $\theta_i \sim \NormRV(0, \lambda_i^2 \tau^2 \sigma^2)$, $\lambda_i \sim p(\lambda_i)$ and $\sigma^2, \tau^2 \sim p(\sigma^2, \tau^2)$, with the horseshoe prior \citep{carvalho2010horseshoe} corresponding to the half-Cauchy prior on the local $\lambda_i$. Handling the global shrinkage parameter $\tau$ can be a delicate issue, as the marginal likelihood has a mode at $0$ \citep{tiao1965bayesian}, and therefore an empirical Bayes approach has a potential, possibly rare, danger of collapsing to zero \citep{carvalho2009handling}. For both the full Bayes and empirical Bayes approach, multiple authors have proved the optimality of the posterior mean, both in terms of minimaxicity and decision-theoretic Bayes risk \citep{datta2013asymptotic, van2014asymptotically, van2014horseshoe, van2015conditions, ghosh2016asymptotic, ghosh2016testing}. \cite{piironen2017sparsity}, on  the other hand, provides an effective model size approach in which the connections between the sparsity level and the global shrinkage parameter $\tau$ have been exploited to handle $\tau$. The collapse of $\tau$ to zero, especially in small-sample settings or under absolute scaling, has been studied in both computational and theoretical work. \citet{carvalho2009handling} and \citet{piironen2017sparsity} noted that empirical Bayes estimation can suffer from this degeneracy, and \citet{van2016many} attempted to mitigate it via truncation strategies. 

\subsection{Global Parameters: Unknown \texorpdfstring{$\sigma^2, \tau^2$}{sigma2, tau2}} 

Although handling the global shrinkage parameter has received a lot of attention in the GL literature, the issue of the global variance component as it relates to $\tau$ is relatively less explored, and theoretical work often proceeds by assuming $\sigma^2 = 1$. However, 
\citet{polson2010shrink} emphasized this issue, demonstrating that absolute priors such as $\tau \sim \CauchyRV^+(0, 1)$ can lead to posterior bimodality, while $\tau \sim \CauchyRV^+(0, \sigma)$ stabilizes the inference. Their example, with only two observations and a true signal $\theta_i = 20$, showed drastic differences between the two priors. In our simulations, we replicate their toy example by comparing the horseshoe posterior under four combinations of $\tau$ and $\sigma^2$ priors. In this experiment, either both parameters are estimated via an empirical Bayes (MMLE) approach, or $\sigma^2$ is assigned a Jeffreys' prior, and for $\tau$, we use either a $\CauchyRV^+(0,1)$ prior, a $\CauchyRV^+(0,\sigma)$ prior, or an empirical Bayes approach. We find that two out of the four scenarios avoid collapse and produce coherent unimodal posteriors: the one where both $\sigma^2$ and $\tau^2$ are fixed or $p(\sigma^2) \propto 1/\sigma^2$ and $\CauchyRV^+(0,\sigma)$. Similar phenomena are reported by \citet{lee2024tail}, who propose a tail-adaptive prior to handle collapsing behavior, and in empirical work by \citet{abba2018adapting} who proposed a Bayesian analog of the square root lasso method \citet{belloni2011square}. 

\begin{figure}[ht!]
    \centering
    \includegraphics[width=0.7\textwidth]{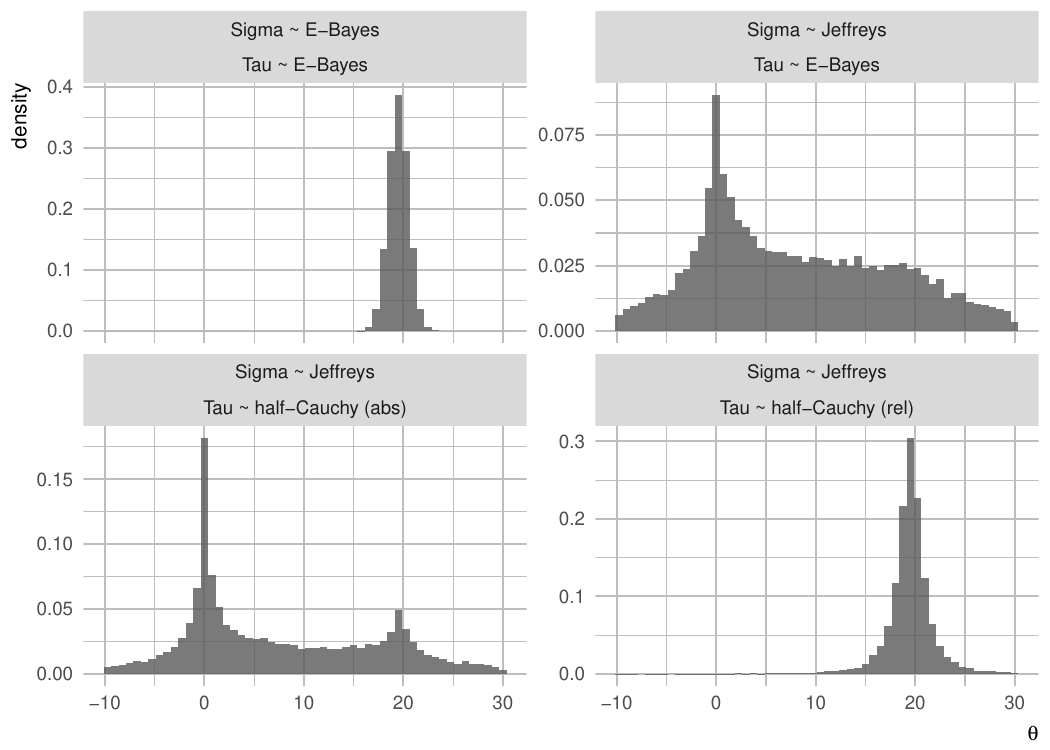}
    \caption{Effect of different choices of global and noise prior scaling on posterior shrinkage profiles in Horseshoe regression. Relative scaling of $\tau$ with $\sigma$ avoids the collapsing behavior seen in absolute scaling.}
    \label{fig:effect_of_sigma}
\end{figure}

\subsection{Collapse properties in Gaussian means}
\label{sec:collapse_properties}

Empirical Bayes (Type-II maximum likelihood) selects hyperparameters by maximizing the marginal likelihood. In many shrinkage models this optimization can concentrate on the boundary and set a scale parameter to zero, producing a fully collapsed fit (all coefficients shrunk to the prior mean). Because the marginal likelihood is typically intractable in richer models, it is helpful to start in a setting where everything is explicit: we can characterize exactly when the empirical Bayes estimate collapses, and when it stays away from zero.

The Gaussian means model is the simplest case where the EB objective is analytic. It yields a necessary and sufficient condition for non-collapse, \(\hat\tau^2>0\), and it serves as a template for global--local extensions in more structured problems, where the ambient dimension \(p\) is naturally replaced by an ``effective dimension'' determined by the design or spectrum.

In particular, the characterization below shows that EB collapse, \(\hat\tau^2=0\), occurs exactly when the observed energy \(\|Y\|_2^2\) does not exceed the noise-level benchmark \(p\sigma^2\).

Consider the Gaussian sequence model
\begin{equation}
Y_i = \theta_i + \varepsilon_i,
\qquad
\varepsilon_i \stackrel{\mathrm{i.i.d.}}{\sim} N(0,\sigma^2),
\quad i=1,\dots,p,
\label{eq:means_sigma}
\end{equation}
with known $\sigma^2>0$. Suppose a global Gaussian prior
\begin{equation}
\theta_i \mid \tau^2 \sim N(0,\tau^2\sigma^2),
\qquad i=1,\dots,p,
\label{eq:global_gauss_prior}
\end{equation}
where $\tau^2\ge 0$ is the global shrinkage/variance parameter.
Then the marginal model is
\begin{equation}
Y_i \mid \tau^2 \sim N\!\big(0,\sigma^2(1+\tau^2)\big),
\quad \text{independent across } i.
\label{eq:marg}
\end{equation}
Let $\hat\tau^2$ denote the Type-II ML estimator (empirical Bayes):
\[
\hat\tau^2 \in \arg\max_{\tau^2\ge 0}\ \log m(Y\mid \tau^2).
\]

\begin{theorem}[Exact characterization of EB collapse/non-collapse]
\label{thm:exact_collapse}
In model \eqref{eq:means_sigma}--\eqref{eq:global_gauss_prior}, the empirical Bayes estimator has the closed form
\begin{equation}
\hat\tau^2 \;=\;\left(\frac{\|Y\|_2^2}{p\sigma^2}-1\right)_+.
\label{eq:tau_hat_closed}
\end{equation}
Equivalently,
\begin{equation}
\hat\tau^2 = 0
\quad\Longleftrightarrow\quad
\|Y\|_2^2 \le p\sigma^2,
\qquad
\hat\tau^2>0
\quad\Longleftrightarrow\quad
\|Y\|_2^2 > p\sigma^2.
\label{eq:collapse_iff}
\end{equation}
\end{theorem}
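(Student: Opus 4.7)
The plan is to reduce the optimization to a one-dimensional problem in the reparameterized variable $v := 1+\tau^2 \ge 1$ and solve it explicitly, then read off the boundary behavior. From \eqref{eq:marg}, the $Y_i$'s are i.i.d.\ $N(0,\sigma^2 v)$, so the marginal log-likelihood (up to additive constants not depending on $\tau^2$) is
\[
\ell(v) \;=\; -\frac{p}{2}\log v \;-\; \frac{\|Y\|_2^2}{2\sigma^2\,v},
\qquad v \ge 1.
\]
I would first verify that $\ell$ is smooth and strictly concave on $(0,\infty)$, either by a direct second-derivative computation or by recognizing the parameterization as an exponential-family log-likelihood in $1/v$. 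This ensures the unconstrained maximizer is unique.

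Next I would compute the unconstrained stationary point. Differentiating,
\[
\ell'(v) \;=\; -\,\frac{p}{2v} \;+\; \frac{\|Y\|_2^2}{2\sigma^2\,v^2},
\]
which vanishes exactly at $v^{\star} = \|Y\|_2^2/(p\sigma^2)$. Strict concavity on $(0,\infty)$ together with the constraint $v \ge 1$ then reduces the EB problem to two cases: if $v^{\star} \ge 1$, the constrained maximizer is $\hat v = v^{\star}$; if $v^{\star} < 1$, the constrained maximizer is at the boundary $\hat v = 1$, because $\ell$ is decreasing on $[v^{\star},\infty)$ and in particular on $[1,\infty)$.

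Translating back via $\hat\tau^2 = \hat v - 1$, the two cases give
\[
\hat\tau^2 \;=\; \max\!\left\{0,\; \frac{\|Y\|_2^2}{p\sigma^2}-1\right\} \;=\; \left(\frac{\|Y\|_2^2}{p\sigma^2}-1\right)_+,
\]
which is exactly \eqref{eq:tau_hat_closed}. The equivalence \eqref{eq:collapse_iff} follows immediately from the definition of the positive part: $\hat\tau^2=0$ iff $v^{\star}\le 1$ iff $\|Y\|_2^2 \le p\sigma^2$, with the strict inequality version giving $\hat\tau^2 > 0$.

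There is no real obstacle here; the only point worth being careful about is the boundary case $v^{\star}=1$ (equivalently $\|Y\|_2^2 = p\sigma^2$), where both the interior formula and the boundary value agree at $\hat\tau^2 = 0$, so the characterization in \eqref{eq:collapse_iff} is consistent with the convention that the weak inequality defines collapse. One could optionally close the argument by noting that the critical equation $p\sigma^2 v = \|Y\|_2^2$ is the method-of-moments identity matching $\mathrm{Var}(Y_i)=\sigma^2(1+\tau^2)$ to the empirical second moment, which reinforces the interpretation of $\hat\tau^2$ as a shrinkage-adjusted signal-to-noise ratio.
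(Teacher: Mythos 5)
Your proposal is correct and follows essentially the same route as the paper's proof: both reduce to the one-dimensional marginal log-likelihood, find the unconstrained stationary point via the first-order condition, and project onto the constraint set, with your substitution $v = 1+\tau^2$ being only a cosmetic reparameterization of the paper's direct differentiation in $\tau^2$.

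One justification you offer is false as stated, though, and worth correcting: $\ell(v) = -\tfrac{p}{2}\log v - \tfrac{\|Y\|_2^2}{2\sigma^2 v}$ is \emph{not} strictly concave on $(0,\infty)$. Computing
\[
\ell''(v) \;=\; \frac{p}{2v^2} \;-\; \frac{\|Y\|_2^2}{\sigma^2 v^3},
\]
one sees $\ell'' < 0$ only for $v < 2v^{\star}$ and $\ell'' > 0$ for $v > 2v^{\star}$, so the function is concave--convex, not globally concave. Fortunately your argument does not actually need concavity in $v$: the derivative you computed factors as
\[
\ell'(v) \;=\; \frac{1}{2v^2}\left(\frac{\|Y\|_2^2}{\sigma^2} - p\,v\right),
\]
which has the sign of $v^{\star}-v$, so $\ell$ is strictly increasing on $(0,v^{\star})$ and strictly decreasing on $(v^{\star},\infty)$ --- strict unimodality, which is all your two-case analysis requires. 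Alternatively, your parenthetical remark about the exponential family is the clean fix: in the precision $u = 1/v$ the log-likelihood $\tfrac{p}{2}\log u - \tfrac{\|Y\|_2^2}{2\sigma^2}u$ is strictly concave, and monotonicity of $u \mapsto 1/u$ transfers uniqueness of the maximizer back to $v$. With that one sentence repaired, the proof is complete; your explicit treatment of the boundary case $v^{\star}=1$ and the closing method-of-moments interpretation are both points the paper's terser proof leaves implicit.
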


\begin{proof}
From \eqref{eq:marg},
\[
\log m(Y\mid \tau^2)
:=
-\frac{p}{2}\log\!\big(2\pi\sigma^2(1+\tau^2)\big)
-\frac{1}{2\sigma^2(1+\tau^2)}\|Y\|_2^2.
\]
Differentiate with respect to $\tau^2$:
\[
\frac{\partial}{\partial \tau^2}\log m(Y\mid\tau^2)
:=
-\frac{p}{2}\frac{1}{1+\tau^2}
+\frac{1}{2\sigma^2}\frac{\|Y\|_2^2}{(1+\tau^2)^2}.
\]
Setting the derivative to zero yields
\[
\frac{\|Y\|_2^2}{\sigma^2(1+\tau^2)} = p
\quad\Longleftrightarrow\quad
\tau^2 = \frac{\|Y\|_2^2}{p\sigma^2}-1.
\]
Since $\tau^2\ge 0$, the constrained maximizer is \eqref{eq:tau_hat_closed}.
The equivalences in \eqref{eq:collapse_iff} follow immediately.
\end{proof}

\subsection{ANOVA (Priors for Variance Components)}  

Having discussed the role of global variance components such as $\sigma^2$ and $\tau^2$, we now focus on the specification and estimation of variance components across model parameters in hierarchical and ANOVA-type settings. In the Bayesian one-way random effects model, the observations are modeled as  
\begin{equation}
y_{ij} = \mu + a_i + e_{ij}, \quad i = 1, \dots, k; \quad j = 1, \dots, n,
\label{eq:random_effects_model}
\end{equation}
where $\mu$ is a common location parameter, $a_i \sim N(0, \sigma_a^2)$ represents the random effect associated with the group $i$, and $e_{ij} \sim N(0, \sigma^2)$ is the residual error. The $a_i$ and $e_{ij}$ are assumed to be independent. In this model, the variance of an individual observation is $\operatorname{Var}(y_{ij}) = \sigma^2 + \sigma_a^2$, and the likelihood depends on both variance components.

The joint likelihood of $(\mu, \sigma^2, \sigma_a^2)$ is given by
\begin{equation}
\begin{aligned}
L(\mu, \sigma^2, \sigma_a^2 \mid \mathbf{y}) &\propto (\sigma^2)^{-k(n-1)/2} (\sigma^2 + n\sigma_a^2)^{-k/2} \\
&\quad \times \exp\left\{ -\frac{1}{2} \left[ \frac{S_1}{\sigma^2} + \frac{S_2}{\sigma^2 + n\sigma_a^2} + \frac{nk(\bar{y} - \mu)^2}{\sigma^2 + n\sigma_a^2} \right] \right\},
\end{aligned}
\label{eq:likelihood}
\end{equation}
where $S_1 = \sum_{i=1}^k \sum_{j=1}^n (y_{ij} - \bar{y}_{i\cdot})^2$ is the sum of squares within the group and $S_2 = \sum_{i=1}^k n(\bar{y}_{i\cdot} - \bar{y}_{\cdot\cdot})^2$ is the sum of squares between-groups.

It is well known that classical estimators, such as the ANOVA-based estimator of $\sigma_a^2$, can take negative values even when variances are constrained to be nonnegative. Hill \citep{hill1965inference} documented this issue through examples in which the unbiased estimator for $\sigma_a^2$ is negative with positive probability. To address this, Bayesian methods impose proper or conditionally proper priors on the variance components. \citet{chaloner1987bayesian} systematically compared several prior distributions for $\sigma^2$ and $\sigma_a^2$, analyzing their impact on posterior propriety and inference.

In particular, she considered the following three noninformative joint priors:
\begin{equation}
p(\sigma_a^2, \sigma^2) \propto (\sigma_a^2 + \sigma^2)^{-2},
\label{eq:prior1}
\end{equation}
\begin{equation}
p(\sigma_a^2, \sigma^2) \propto (\sigma^2)^{-1}(\sigma_a^2 + \sigma^2)^{-1},
\label{eq:prior2}
\end{equation}
\begin{equation}
p(\sigma_a^2, \sigma^2) \propto (\sigma^2)^{-1/2}(\sigma_a^2 + \sigma^2)^{-3/2},
\label{eq:prior3}
\end{equation}
each motivated by different invariance considerations. The first of these (Equation~\ref{eq:prior1}) corresponds to a uniform prior on the intraclass correlation coefficient $\rho = \sigma_a^2 / (\sigma_a^2 + \sigma^2)$. Chaloner found that all three priors led to similar posterior inferences in balanced designs, and used Equation~\ref{eq:prior1} for simulations due to its interpretability.

In addition to these noninformative priors, Chaloner also evaluated informative priors for $\sigma_a^2$ based on inverse gamma distributions. Specifically, she used inverse exponential distributions by setting the shape parameter $\alpha = 1$ and varying the scale $\beta$, so that the prior takes the form
\begin{equation}
p(\mu, \sigma^2, \sigma_a^2) \propto (\sigma^2)^{-1} (\sigma_a^2)^{-2} \exp\left\{ -\frac{1}{\beta \sigma_a^2} \right\},
\label{eq:invexp}
\end{equation}
where the prior mode of $\sigma_a^2$ is $\beta^{-1} / 2$. Adjusting $\beta$, she evaluated priors centered on 0.1, 1.0 and 10, showing that posterior summaries were robust to moderate prior changes.

Finally, these structures are connected with modern global-local shrinkage priors \citep{polson2010shrink, polson2012local,polson2012half}. A commonly used independent Jeffreys-like prior is
\begin{equation}
p(\sigma^2, \sigma_a^2) = \frac{1}{\sigma^2} \cdot \frac{1}{\sigma_a^2},
\label{eq:jeffreys}
\end{equation}
while alternative joint priors such as
\begin{equation}
p(\sigma^2, \sigma_a^2) \propto \frac{1}{\sigma^2} \cdot \frac{1}{\sigma^2 + \sigma_a^2}
\label{eq:dependent}
\end{equation}
penalize imbalance between variance components. A critical insight from \citet{polson2012half} is that while Jeffreys' prior \ref{eq:jeffreys} on the error variance $p(\sigma^2) \propto \sigma^{-2}$ generally poses no problem, applying the same form to local scale parameters—e.g., $p(\lambda^2) \propto \lambda^{-2}$—leads to an improper posterior. This motivates the use of proper shrinkage priors like the half-Cauchy, which are integrable at the origin and yield well-behaved posteriors even under heavy shrinkage.

Equation~\ref{eq:dependent} is closely related to global-local shrinkage priors such as the half-Cauchy prior $C^+(0, \sigma)$, advocated by \citet{gelman2006prior} and extended by \citet{carvalho2010horseshoe, polson2010shrink, polson2012half} in hierarchical and high-dimensional inference. Recent work has revisited the foundational insights of \citet{tiao1965bayesian}, particularly in the context of adaptive shrinkage. \citet{polson2012half} note that the decomposition of variance in hierarchical models by Tiao and Tan directly anticipates modern scale-mixture priors, enabling flexible and computationally tractable shrinkage rules. In a related development, \citet{morris2011hierarchical} examined the one-way random effects model from an empirical Bayes perspective, connecting the classical ANOVA decomposition to contemporary marginal likelihood approaches to estimate variance components and random effects in high-dimensional settings.

\subsection{Example: Tikhonov Regularization}
In a foundational paper, \citet{mackay1992bayesian} showed how regularized regression problems such as ridge and Tikhonov estimation can be interpreted as Bayesian inference under Gaussian priors, where the regularization parameters correspond to global scale parameters in the prior. In MacKay’s framework, these global scales modulate the prior precision matrix over regression coefficients and appear as hyperparameters in marginal likelihood, which can be estimated using empirical Bayes or integrated with a fully Bayesian treatment.

The Tikhonov regularization framework provides a general setting for regression. Given observed data $\{(x_i, y_i)\}_{i=1}^n$ and a parametric model $f(x,w)$ with parameter vector $w \in \mathbb{R}^k$, we often define a data misfit functional
\[
E_D(w) = \sum_{i=1}^n \left( y_i - f(x_i,w) \right)^2,
\]
which yields a Gaussian likelihood
\[
p(y \mid w) = \frac{1}{Z_D}\,\exp\!\left(-\frac{1}{2\sigma^2}\,E_D(w)\right),
\]
where $\sigma^2$ denotes the noise variance and $Z_D$ is a normalization constant.

A Gaussian prior on the weights can be specified as
\[
p(w) = \frac{1}{Z_W}\,\exp\!\left(-\frac{1}{2\sigma_w^2}\,E_W(w)\right),
\]
where $E_W(w)$ denotes a quadratic penalty on $w$ (e.g. $E_W(w)=w^T w$), $\sigma_w^2$ is the prior variance, and $Z_W$ is its normalization constant. See \citet{BoxTiao1964} for a discussion of such priors. The hyperparameters $\sigma^2$ and $\sigma_w^2$ control the strength of the noise and the prior, and are commonly denoted by $\beta^{-1}=\sigma^2$ and $\alpha^{-1}=\sigma_w^2$ in the notation of MacKay~\citep{mackay1992bayesian}.

The normalization constants $Z_D$ and $Z_W$ can be viewed as partition functions\footnote{A \emph{partition function} is the integral (or sum) of an unnormalized exponential density, serving as the normalization constant that makes the distribution proper, and it plays a role in computing marginal likelihoods in Bayesian inference and has parallels in statistical mechanics.} for the data and prior distributions, respectively, and together determine the marginal likelihood (or model evidence). The product of $Z_D$ and $Z_W$ is related to the marginal likelihood (evidence) $p(D \mid \sigma^2, \sigma_w^2)$. Estimation of these hyperparameters can be performed by Type-II maximum likelihood or by a full Bayes treatment. 

Following MacKay~\citep{mackay1992bayesian}, define the precisions $\tau_D^2 = 1/\sigma^2$ and $\tau_w^2 = 1/\sigma_w^2$, and let $B$ and $C$ denote the Hessians of $E_D(w)$ and $E_W(w)$ in the maximum a posteriori (MAP) estimate $w^{\mathrm{MP}}$. The Hessian of the negative log-posterior is then $A = \tau_w^2 C + \tau_D^2 B$. Evaluating the log-evidence under a Gaussian Laplace approximation yields
\[
\log p(D \mid \tau_w^2,\tau_D^2) = - \tau_w^2 E_w^{\mathrm{MP}} - \tau_D^2 E_D^{\mathrm{MP}} - \frac{1}{2}\log\det A - \log Z_W(\tau_w^2) - \log Z_D(\tau_D^2) + \frac{k}{2}\log(2\pi),
\]
where $E_D^{\mathrm{MP}}$ and $E_W^{\mathrm{MP}}$ denote the data and prior energy terms at the MAP estimate, and $k$ is the number of parameters. Equivalently, the evidence can be written as
\[
p(D \mid \tau_w^2,\tau_D^2) = \frac{Z_M(\tau_w^2,\tau_D^2)}{Z_W(\tau_w^2)\, Z_D(\tau_D^2)},
\]
where $Z_M$ is the normalization of the posterior.

MacKay~\citep{mackay1992bayesian} argued that this evidence is preferable to cross-validation. The associated Occam factor, which penalizes excessively small prior variances, is
\[
-\tau_w^2 E_W^{\mathrm{MP}} - \frac{1}{2}\log\det A - \log Z_W(\tau_w^2),
\]
and can also be expressed as
\[
(2\pi)^{k/2}\det{A}^{-1/2}/Z_W(\tau_w^2),
\]
Represents the reduction in the effective volume of parameter space from before to after.

Setting the derivatives of the logarithmic evidence with respect to $\tau_w^2$ to zero yields the most probable value $\hat{\tau}_w^2$, which satisfies
\[
2 \hat{\tau}_w^2 E_W^{\mathrm{MP}} = k - \hat{\tau}_w^2\,\mathrm{tr}(A^{-1}).
\]
Defining the prior penalty misfit as $\chi_w^2 = 2\tau_w^2 E_W(w)$ and the effective number of parameters as
\[
\gamma = k - \hat{\tau}_w^2\,\mathrm{tr}(A^{-1}) = k - \sum_{a=1}^k \frac{\hat{\tau}_w^2}{\lambda_a + \hat{\tau}_w^2},
\]
where $\{\lambda_a\}_{a=1}^k$ are the eigenvalues of $\tau_D^2 B$, we recover the usual degree-of-freedom adjustment. Similarly, the data mismatch is measured by $\chi_D^2 = 2\tau_D^2 E_D(w)$.

\subsection{Example: Stein's Positive-Part Estimator as a Bayes Rule}
A classic illustration of the interplay between shrinkage and Bayesian inference is Stein’s positive-part estimator, which arises as the posterior mode under a particular prior on the shrinkage weight; this perspective, originally developed by \citet{takada1979steins}, reframes Stein’s estimator as a decision-theoretic Bayes rule with a data-driven shrinkage coefficient.


Consider the classical problem of estimating a $p$-dimensional normal mean:
\[
y_i \mid \theta_i \sim N(\theta_i, 1), \quad \text{for } i = 1, \dots, p.
\]
with $\theta_i \sim N(0, \tau^2)$. The posterior mean under this prior is
\[
\mathbb{E}[\theta_i \mid y_i, \tau^2] = \left( 1 - \kappa \right) y_i, \quad \text{where } \kappa = \frac{1}{1 + \tau^2}.
\]
Reparameterizing the shrinkage as a function of $\kappa$, Takada \citep{takada1979steins} proposed placing a prior directly on $\kappa$. Specifically, let
\[
p(\kappa) \propto (1 - \kappa)^{p/2} \kappa^{-pa/2}, \quad \text{where } a = 1 - \frac{t(p - 2)}{p},
\]
which corresponds to an inverse-beta-like prior supported on $(0,1)$. Under this setup, the posterior mode of $\theta$ takes the form:
\[
\hat{\theta} = (1 - \hat{\kappa}) y,
\]
where $\hat{\kappa}$ maximizes
\[
g(\kappa) = t(p - 2)\log \kappa - \kappa \|y\|^2.
\]
Solving this yields
\[
\hat{\kappa} = \min\left(1, \frac{t(p - 2)}{\|y\|^2}\right).
\]
Setting $t = 1$, we recover Stein’s positive-part shrinkage rule:
\[
\hat{\theta}_i = \left(1 - \frac{p - 2}{\|y\|^2} \right)_+ y_i.
\]
This gives us a Bayesian justification for the positive-part estimator as a posterior mode under a prior on the shrinkage weight. \citet{takada1979steins} also analyzes the corresponding posterior mean and shows that, under certain conditions, it is admissible and minimax.

\subsection{Ordered Shrinkage Priors: Motivating Examples}\label{sec:gl-shrinkage}

\citet{deaton1980empirical} considers a hierarchical Bayesian approach to polynomial regression, motivated by the empirical Bayes framework. The model is:
\[
Y_i = P(x_i) + \epsilon_i, \quad \epsilon_i \sim N(0, \sigma^2),
\]
where the regression function is a linear combination of orthonormal basis polynomials:
\[
P(x) = \sum_{j=1}^m \theta_j \psi_j(x).
\]
The polynomials $\{\psi_j(x)\}$ are assumed to be orthonormal with respect to the empirical distribution of $x_i$:
\[
\sum_{i=1}^{n} \psi_j(x_i) \psi_k(x_i) = \delta_{jk}.
\]
Let $Q$ denote the design matrix with $Q_{ij} = \psi_j(x_i)$. Under this orthonormal basis, the OLS estimate of the coefficient vector is simply $\hat{\theta} = Q^\top Y$, and the sampling distribution satisfies:
\[
\hat{\theta}_j \mid \theta_j \sim N(\theta_j, \sigma^2).
\]

Deaton's empirical Bayes method introduces a prior of the form $\theta_j \sim N(0, \sigma_j^2)$, with the constraint that $\sigma_1^2 \geq \sigma_2^2 \geq \ldots \geq \sigma_m^2$. This reflects the assumption that lower-order basis components tend to capture more signal, while higher-order ones are more likely to be negligible. Equivalently, define:
\[
z_j = \frac{\sigma^2}{\sigma_j^2 + \sigma^2}, \quad \text{so that} \quad \theta_j \mid \hat{\theta}_j \sim N\left( (1 - z_j) \hat{\theta}_j, (1 - z_j)\sigma^2 \right),
\]
with the ordering constraint:
\[
0 < z_1 \leq z_2 \leq \ldots \leq z_m \leq 1.
\]
Introducing $\kappa_j = z_j / \sigma^2$ and $\kappa_{m+1} = 1 / \sigma^2$, the marginal log-likelihood for $(\kappa_1, \ldots, \kappa_m, \kappa_{m+1})$ is (up to a constant):
\begin{equation}
\label{eq:deaton_loglik}
\log L = \left(\tfrac{1}{2} \bar{n} \right) \log \kappa_{m+1} - \tfrac{1}{2} \kappa_{m+1} W_{m+1} + \sum_{j=1}^{m} \left\{ \left(\gamma_j - \tfrac{1}{2} \right) \log \kappa_j - \tfrac{1}{2} \kappa_j \hat{\theta}_j^2 \right\},
\end{equation}
where $\bar{n}$ is a function of prior degrees of freedom, and $W_{m+1} = s + 2 / \beta$, with $s = \sum_i (y_i - Q\hat{\theta})^2$ the residual sum of squares. Following Deaton, we choose $\gamma_j = 2$ and $\beta = 5$ for all $j$.

The unconstrained maximum likelihood estimates of the $\kappa_j$ are:
\[
\hat{\kappa}_j = \frac{2\gamma_j - 1}{\hat{\theta}_j^2}, \quad \hat{\kappa}_{m+1} = \frac{\bar{n}}{W_{m+1}}.
\]
However, to enforce the monotonicity constraint $\kappa_1 \leq \kappa_2 \leq \cdots \leq \kappa_m \leq \kappa_{m+1}$, Deaton applies isotonic regression to the unconstrained $\kappa_j$ estimates (e.g., using the pool-adjacent-violators algorithm (PAVA)). Let $\hat{\kappa}_j^{\text{iso}}$ denote the constrained estimates.

The estimated shrinkage weights are then given by:
\[
\hat{z}_j = \hat{\kappa}_j^{\text{iso}} \cdot \hat{\sigma}^2, \quad \hat{\theta}_j^{\text{shrink}} = (1 - \hat{z}_j) \hat{\theta}_j.
\]

\vspace{0.5em}

\noindent Figure~\ref{fig:deaton_example} illustrates this procedure in a simulated setting that mirrors the numerical examples from \citet{deaton1980empirical}. The true signal is sparse and concentrated in the low-order terms of a Legendre polynomial basis. The observed data $Y$ are generated with noise standard deviation $\sigma = 1$, and OLS estimates are obtained. The empirical Bayes procedure then computes posterior mean shrinkage estimates after isotonic adjustment of the hypervariances. The resulting estimates exhibit sparsity aligned with the prior ordering assumption.

\begin{figure}[ht]
\centering
\includegraphics[width=0.75\textwidth]{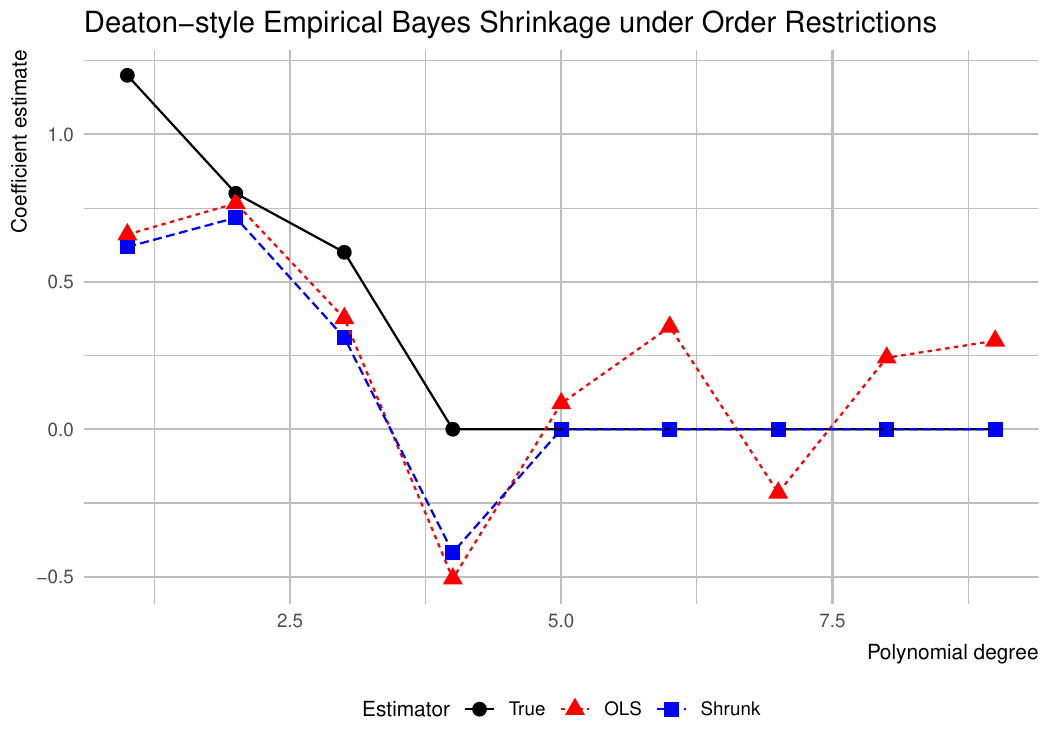}
\caption{Posterior mean estimates of $\theta_j$ using Deaton’s empirical Bayes shrinkage method with order restrictions. Lower-order terms (left) are preserved while higher-order coefficients are shrunk aggressively toward zero.}
\label{fig:deaton_example}
\end{figure}

\paragraph{Model class choice: sparse monotone versus Sobolev.}
The ordered-variance prior in Deaton's construction implicitly targets a function class in which the orthogonal-series coefficients are concentrated in low-order terms and then decay along the natural ordering of the basis. This sparse ordered viewpoint directly matches the structure in \citet{deaton1980empirical} and \citet{xu2007some}, and it aligns with polynomial (or more generally orthogonal-series) regression where it is natural to expect higher-order components to contribute less signal. It also leads to a clean risk story: the empirical Bayes estimator of the variance profile via isotonic regression is tailored to this ordered structure, and lower bounds depend on the exact formulation of the ordered parameter space.
Sobolev balls provide a complementary, smoothness-driven alternative, but they require specifying a smoothness index $\beta$, and in a Bayesian treatment they naturally raise the question of why one would not instead adopt Gaussian process priors (or related spline/RKHS priors) as the primary modeling device. That direction is valuable but brings different technical machinery and shifts the focus away from the ordered shrinkage profiles inspired by \citet{deaton1980empirical} and \citet{xu2007some} and their applied interpretability in polynomial regression settings. For these reasons, we take the sparse monotone class as the primary target of our discussion and theory, with Sobolev-type smoothness as a secondary perspective.



\subsubsection{Shrinkage in Infinite Gaussian Means Models}

We now consider the problem of estimating a sequence of normal means in a hierarchical setting. This formulation captures the structure of both the James--Stein problem and empirical Bayes estimation under global-local shrinkage.

Suppose the observations follow the hierarchical model:
\begin{align*}
  y_i &= \theta_i + \epsilon_i, \quad \epsilon_i \sim N(0, \sigma^2), \\
  \theta_i &= \mu + \tau u_i, \quad u_i \sim N(0, 1),
\end{align*}
with possibly heterogeneous variances if we allow $\tau = \tau_i$ to vary across components. This corresponds to a classical variance components model, sometimes referred to as the one-way random effects or ANOVA model.

The classical estimation of $(\sigma^2, \tau^2)$ is known to be challenging. In particular, the Neyman--Scott paradox \citep{neyman1948consistent} shows that the MLE of $\sigma^2$ is inconsistent when the number of parameters grows with the sample size. In small-sample settings, standard estimators may yield negative variance estimates, as shown by \citet{hill1965inference}. These pathologies motivate Bayesian shrinkage methods.

A fully Bayesian approach imposes priors on the variance components. When $\tau_i$ is allowed to vary across $i$, this leads to a global-local model. In infinite-dimensional or growing-$p$ settings, the prior decay schedule for the local variances can strongly affect statistical performance, which motivates procedures that estimate (or adapt to) the local variance profile from data rather than fixing a single parametric decay form.

\begin{remark}
The optimal local variance $\tau_i^2$ can be estimated using the least concave majorant (LCM) of the CUSUM of $X_{in}^2 - \sigma^2/n$, which serves as a form of isotonic regression. This perspective is closely related to Deaton's order-restricted empirical Bayes construction for polynomial regression \citep{deaton1980empirical} and to subsequent developments based on LCM/PAVA-type estimators \citep{xu2007some}.
\end{remark}

In his PhD thesis \cite{xu2007some} showed how least concave majorants (LCM) can be used to estimate a sequence of prior variances $\tau_i^2$ from observed data $X_i^2$. The method constructs the LCM of the cumulative sum of squared signals, effectively performing an adaptive isotonic regression on the variance sequence. This procedure recovers structure in sparse or smoothly decaying sequences, and is particularly useful for modeling high-dimensional global-local priors.

Figure~\ref{fig:xu_variance_combined} reproduces two illustrations from \citet{xu2007some}. The left panel (Example 2.3) shows the log-CUSUM of $X_i^2$ (dots) for $r=1,\ldots,1005$, along with its least concave majorant (blue line), providing a smoothed approximation of signal accumulation. The right panel (Example 2.4) compares three quantities on a log-scale: (i) true variance decay $\tau_i^2 = 1/i^2$ (dashed), (ii) noisy variance proxies $X_i^2$ (dots), and (iii) refined estimates from the slope of the LCM (blue line). The LCM estimates exhibit superior stability in estimating $\tau_i^2$.

\begin{figure}[ht!]
\centering
\begin{subfigure}[b]{0.48\linewidth}
  \includegraphics[width=\linewidth]{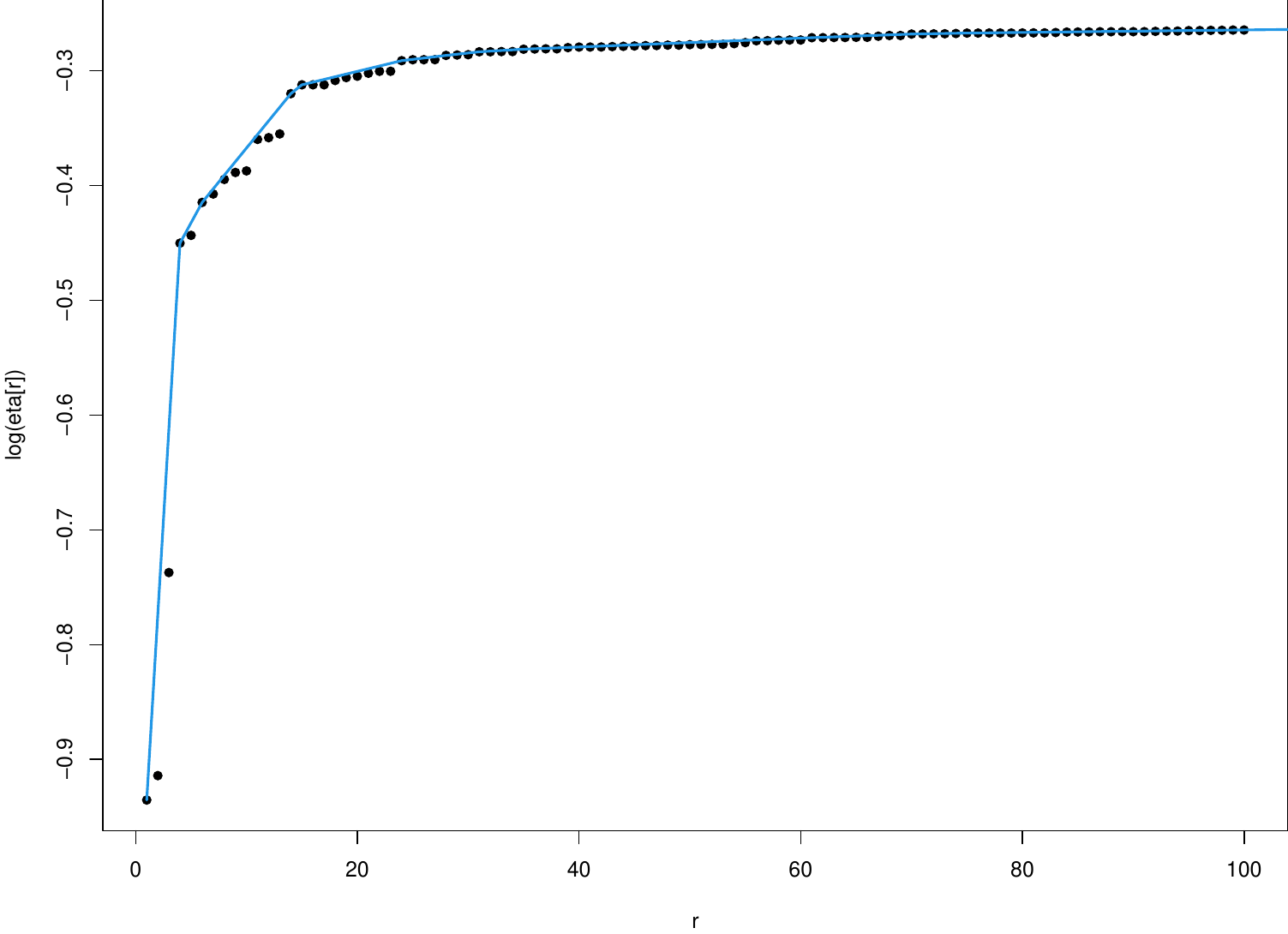}
  \caption{CUSUM of $X_i^2$ (dots) and LCM (blue line)}
\end{subfigure}
\hfill
\begin{subfigure}[b]{0.48\linewidth}
  \includegraphics[width=\linewidth]{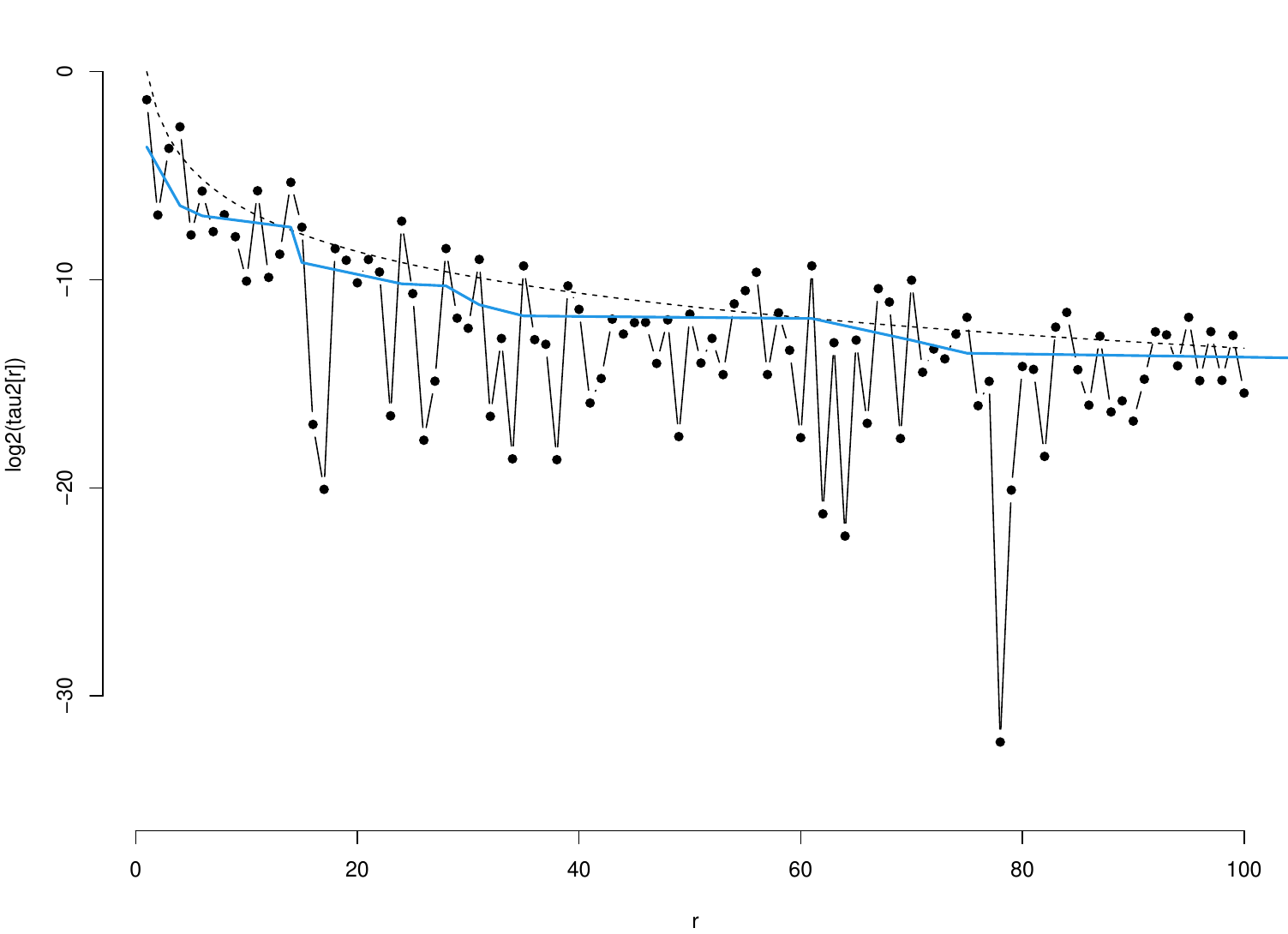}
  \caption{True vs. empirical vs. LCM-refined $\tau_i^2$ (log scale)}
\end{subfigure}
\caption{Illustration of Xu's variance component estimation method via least concave majorants (LCM). Adapted from Figures 2.5 and 2.6 of \citet{xu2007some}.}
\label{fig:xu_variance_combined}
\end{figure}

\paragraph{Infinite Gaussian Means}

The Bayes estimate of the sequence is given by:
\[
\hat{\theta}_{in} = \frac{\hat{\tau}_i^2}{\hat{\tau}_i^2 + \frac{\sigma^2}{n}}X_{in} = \hat{w}_i X_{in}
\]
The optimal weights are  $1-z_i$, then we order and estimate the $\hat{z}_i$ values.
\[
E\left(X_{in}^2 - \frac{\sigma^2}{n}\right) = \tau_i^2
\]
Least concave majorant (LCM) of the cumulative sum (CUSUM) of $X_{in}^2 - \frac{\sigma^2}{n}$ provides the estimator $\hat{\tau}_i^2$. Let $\eta_r = \sum_{i=1}^r X_i^2$ and consider the piecewise-linear interpolation of its LCM. Define the slope at each $i$ as
\begin{equation*}
\hat{\tau}_i^2 = \left(\min_{1 \leq h \leq i} \max_{i \leq j \leq p} \frac{1}{j - h + 1} \sum_{l=h}^j \left(X_l^2 - \frac{\sigma^2}{n}\right) \right)_+.
\end{equation*}
This isotonic estimator shrinks noisy $X_i^2$ estimates toward the underlying monotone structure implied by the decay of the true signal.

\section{Regression and Regularization}
\label{sec:global_local_framework}

We now connect the ordered-variance empirical Bayes perspective to classical regularization in regression. The key point is that many regularization rules become simple coordinatewise shrinkage rules after transforming to a canonical (orthogonal) basis, and these shrinkage factors can be interpreted as (possibly structured) variance components in a hierarchical Gaussian model.

Consider the sparse normal means problem, or the Gaussian compound decision problem: $y_i = \theta_i + \epsilon_i$ where
$$
\theta_i \mid \kappa_i  \sim \NormRV\left ( 0 , \frac{1- \kappa_i}{ \kappa_i} \right). 
$$
This parameterization yields a broad family of shrinkage priors through priors on $\kappa_i$ (including global--local choices such as the horseshoe); see, e.g., \citet{bhadra2016default} for discussion of default behavior for nonlinear functionals.

\subsection{Generalized Ridge Regression in the Canonical Basis}

We revisit the ridge regression framework from the perspective of the canonical (rotated) coordinate system, consistent with the singular value decomposition $X = U D W^\top$ and rotated coefficients $\alpha = W^\top \beta$, and $Z = U^\top Y$, $\alpha = V^\top \beta$, and $\nu = U^\top \epsilon$. In this basis, the regression model becomes:
\[
Y = X \beta + \epsilon = U D \alpha + \epsilon \Rightarrow U^\top Y = Z = D \alpha + \nu, 
\]
and projecting onto $U$, we obtain:
\[
\hat{\alpha}_i = \alpha_i + \epsilon_i, \quad \epsilon_i \sim N(0, \sigma^2/d_i^2), \quad 1 \le i \le p, 
\]
where $d_i$ is the $i$th singular value and $\hat{\alpha}_i$ is the least-squares estimate of $\alpha_i$. We now consider the following hierarchical model:
\begin{align*}
z_i \mid \alpha_i, \sigma^2 &\sim \mathcal{N}(d_i \alpha_i, \sigma^2), \quad 1 \leq i \leq p \\
z_i \mid \sigma^2 &\sim \mathcal{N}(0, \sigma^2), \quad p+1 \leq i \leq n. 
\end{align*}

Now suppose, we place independent priors $\alpha_i \sim N(0, v_i)$ on the coefficients in this rotated basis. The posterior mean under this Gaussian model yields a shrinkage estimator of the form:
\begin{equation}
\alpha_i^* = \kappa_i \hat{\alpha}_i, \quad \text{with} \quad \kappa_i = \frac{d_i^2 v_i}{\sigma^2 + d_i^2 v_i} = \frac{d_i^2}{d_i^2 + k_i},
\end{equation}
where $k_i = \sigma^2 / v_i$ is the generalized ridge penalty for the $i$th component. When all $k_i = k$, this recovers standard ridge regression.

\[
p(\alpha_i\mid \kappa_i) = \sqrt{\dfrac{\lambda_i(1-\kappa_i)}{2\pi \kappa_i}}\exp\left(-\dfrac{\lambda_i(1-\kappa_i)}{2\kappa_i}\alpha_i^2\right) \sim \NormRV\left(0,\dfrac{\kappa_i}{\lambda_i(1-\kappa_i)}\right).
\]
The log-likelihood is
\[
\log p(D\mid \alpha, \sigma^2) = c - \dfrac{n}{2}\log \sigma^2 - \dfrac{1}{2\sigma^2}||Y-X\alpha||_2^2 =  c - \dfrac{n}{2}\log \sigma^2 + \dfrac{1}{2} \sum_{i=1}^{p}\left\{\log(1-\kappa_i) + \dfrac{\kappa_i\lambda_i\hat \alpha_i^2}{\sigma^2}\right\}.
\]
For regularization priors, $w_i \sim \text{Beta}(a,b)$. Note that when $a=b=1/2$ this leads to the Horseshoe model. Let $t_i = \sqrt{\lambda_i}\hat \alpha_i/\sigma$. Many authors use deterministic $W(t)$, e.g., Empirical Bayes or Type II maximum likelihood \citep[see e.g.][]{polson2012good}.

\begin{figure}[ht]
\centering
\includegraphics[width=0.7\textwidth]{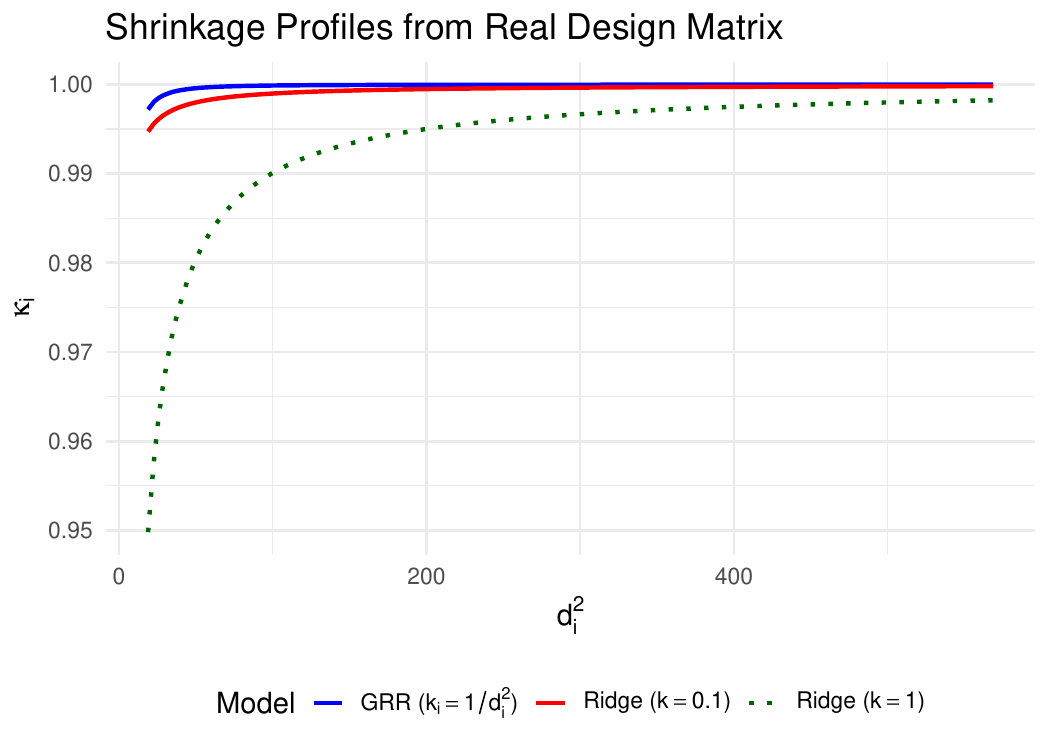}
\caption{Shrinkage profiles $\kappa_i = d_i^2 / (d_i^2 + k_i)$ based on singular values from a real design matrix. Ridge uses a fixed penalty $k=1$ and $k_i=0.1$ (blue, dashed), and the adaptive choice uses $k_i = 1/d_i^2$ (red, solid).}
\label{fig:shrinkage_profiles}
\end{figure}

Figure~\ref{fig:shrinkage_profiles} compares shrinkage weights $\kappa_i = d_i^2 / (d_i^2 + k_i)$ under three regularization schemes: classical ridge regression with fixed penalties $k = 1$ and $k = 0.1$, and a generalized ridge rule with data-adaptive weights $k_i = 1/d_i^2$. The latter mimics global-local shrinkage by applying minimal shrinkage to large singular components and aggressive shrinkage to smaller ones.

\begin{remark}
Goldstein and Smith \citep{goldstein1974ridgetype} showed that if the smallest singular value satisfies \( \min d_i^2 > (c + 4)/2 \sum d_j^{-2} \), then the above estimator dominates least squares for all \( i \) in componentwise risk. This is a stronger guarantee than the usual Stein result, which only applies to the total risk.
\end{remark}

In other words, while the James--Stein estimator also shrinks the OLS coefficients toward zero, it does so uniformly across all coordinates through a global multiplier that depends on the total squared norm $\|\hat{\alpha}\|^2$. In contrast, the GRR estimator performs shrinkage independently along each coordinate, based on local information---specifically, the corresponding singular value $d_i$ and penalty $k_i$. This local, componentwise shrinkage anticipates the structure of modern global-local shrinkage priors, where each parameter receives its own adaptive regularization governed by local scale terms. In both settings, the goal is to suppress estimation error more aggressively in directions of low signal-to-noise, but GRR achieves this in a fully decoupled fashion, which enables componentwise risk improvements that the James--Stein estimator cannot guarantee.

\subsubsection{Adaptive Selection of Shrinkage Weights}

In global-local shrinkage models and generalized ridge regression (GRR), the strength of regularization is governed by local penalties $k_i$, which correspond to prior precision parameters in the canonical regression basis. Selecting appropriate values for these $k_i$'s is essential to achieve optimal bias-variance tradeoffs.

A common approach is to use marginal maximum likelihood or Type II maximum likelihood estimation. Suppose we observe rotated data $z_i = d_i \hat{\alpha}_i$, with marginal distribution:
\begin{equation*}
p(z_i \mid k_i, \sigma^2) \sim N\left(0, \sigma^2\left(1 + \frac{d_i^2}{k_i}\right)\right).
\end{equation*}
Let $\phi_i^2 = \sigma^2(1 + d_i^2 / k_i)$. By reparametrizing and maximizing the likelihood with respect to $k_i$, we obtain the empirical Bayes estimate:
\begin{equation*}
\hat{k}_i = \frac{d_i^2 \sigma^2}{z_i^2 - \sigma^2}, \quad \text{provided that } z_i^2 > \sigma^2.
\end{equation*}
This yields a data-adaptive shrinkage rule, where coordinates with weak signal ($z_i^2 \leq \sigma^2$) receive effectively infinite penalty ($k_i = \infty$), corresponding to complete shrinkage. Stronger signals receive smaller penalties and are retained.

\begin{remark}
This thresholding rule mimics the behavior of the non-negative garotte estimator proposed by \citet{breiman1995better}, wherein coordinates are retained or eliminated based on their marginal evidence. It also aligns with modern empirical Bayes variable selection ideas.
\end{remark}

\subsubsection{Comparison with Classical Regularization Methods}
\citet{frank1993statistical} showed that many regularization methods can be expressed as:
\[
\hat{\beta}^M = \sum_{j=1}^p \kappa_{j}^M \hat{\alpha}_j w_j
\]

Ridge regression has $\kappa_j^{RR} = d_j^2 / (Nu + d_j^2)$ for a fixed regularization parameter $Nu$. $K$-component PCR has:
\[
\kappa_{jK}^{PCR} = 
\begin{cases}
1, & d_j^2 \geq d_K^2 \\
0, & d_j^2 < d_K^2
\end{cases}
\]

The posterior mean under the $g$-prior corresponds to $\kappa_j^g = g/(1+g)$, shrinking the solution vector along all eigen-directions by a common factor. To enhance flexibility over the standard $g$-prior, mixtures of $g$-priors have been proposed \citep{liang2008mixtures}, which improve variable selection performance through more adaptive shrinkage mechanisms. For high-dimensional settings where $p>n$, \citet{maruyama2011fully} developed a modified $g$-prior that enables Bayesian variable selection even when conventional approaches fail.

These all fit into a general Bayesian prior $(\alpha|\sigma^2,\tau^2,\Lambda) \sim N(0, \sigma^2\tau^2\Lambda)$ where $\Lambda = \text{diag}(\lambda_1^2,\ldots,\lambda_p^2)$. The posterior mean has shrinkage factors:
\[
\kappa_j = \frac{\tau^2\lambda_j^2d_j^2}{1 + \tau^2\lambda_j^2d_j^2}
\]

\subsection{Generalization to large \texorpdfstring{$p$}{p} cases}
Suppose now that the design matrix $X$ is of rank $r<p$ and has singular-value decomposition $X = U D W'$ with $D = \mbox{diag}(d_1, \ldots, d_r)$, again ordered from largest ($d_1$) to smallest ($d_r$).   The approach of the previous section works just as before, with no essential modification:
\begin{eqnarray*}
(\hat{\alpha} \mid \alpha, \sigma^2) &\sim& N(\alpha, \sigma^2 D^{-2}) \\
(\alpha \mid \sigma^2, \tau^2, \Lambda) &\sim& N(0, \sigma^2 \tau^2 \Lambda) \\
\lambda_j^2 &\sim& p(\lambda_j^2) \\
(\sigma^2, \tau^2) &\sim& p(\sigma^2, \tau^2) \, ,
\end{eqnarray*}
where $\alpha = W' \beta$ and $\hat{\alpha}$ is the corresponding OLS estimate.  Instead of a $p$-dimensional vector to estimate, we now have an $r$-dimensional one.  Moreover, because we have orthogonalized the coefficients, the elements of $\alpha$ are conditionally independent in the posterior distribution, given $\tau^2$ and $\sigma^2$.  We are faced with a simple normal-means problem, with the only complication being that the singular values $d_j$ enter the likelihood.

\subsection{Bayesian interpretation}
These four procedures differ only in the way that they scale the OLS estimates for the regression parameter in the orthogonal coordinate system defined by $W$.  It is therefore natural to consider them as special cases of an encompassing local-shrinkage model along the lines of the previous sections. Table~\ref{tab:shrinkage_weights} summarizes the shrinkage weight functions $W(t)$ for several classical regularization methods, showing how each method maps the signal-to-noise ratio $t$ to a shrinkage factor.

\begin{table}[ht!]
\centering
\footnotesize{
\begin{tabular}{|l|l|}
\hline
\textbf{Method} & \textbf{Shrinkage Weight $W(t)$} \\
\hline
GRN (Hoerl and Kennard, 1970) & $W(t) = (1 + t^{-2})^{-1}$ \\
GRI1 (Hoerl and Kennard, 1970) & $W(t) = \left\{1 + \left(1 + t^{-2}\right)^2 / t^2\right\}^{-1}$ \\
GRI (Hemmerle, 1975) & $W(t) = \left[\dfrac{1 - \sqrt{1 - 4t^{-2}}}{2} t^2\right]^{-1}$ for $t^2 > 4$, else $W(t) = 0$ \\
GRP (Mallows, 1973) & $W(t) = 1$ for $t^2 > 2$, else $W(t) = 0$ \\
GRC (Mallows, 1973) & $W(t) = 1 - t^{-2}$ for $t^2 > 1$, else $W(t) = 0$ \\
\hline
\end{tabular}}
\caption{Shrinkage weights $W(t)$ for various regularization methods.}
\label{tab:shrinkage_weights}
\end{table}

Having established the connection between ordered-variance priors and classical regularization in canonical coordinates, we now formalize the risk-theoretic properties of the isotonic empirical Bayes estimator. The next section provides frequentist risk bounds showing that this estimator achieves near-minimax optimality over sparse ordered model classes.

\section{Risk bounds for isotonic empirical Bayes global--local shrinkage}
\label{sec:risk_bounds}

\paragraph{Estimator in brief (practice versus proof device).}
The practical isotonic empirical Bayes procedure estimates an ordered local-variance (or shrinkage-weight) profile by maximizing the marginal likelihood under a monotonicity constraint, and then plugs the constrained estimate into the Gaussian posterior mean shrinkage rule. To keep the risk analysis fully rigorous (avoiding dependence between an estimated variance profile and the same Gaussian coordinates being shrunk), the theorems below analyze a randomized, cross-fit analogue based on a Gaussian cloning/splitting construction. This cross-fit estimator is used as a proof device; it implements the same monotone-variance estimation idea but ensures the fitted variance profile is independent of the Gaussian sample to which shrinkage is applied.

We consider the Gaussian sequence model
\begin{equation}
Y_i = \theta_i + \varepsilon_i, \qquad 
\varepsilon_i \stackrel{\text{i.i.d.}}{\sim} N(0, \sigma^2/n), 
\quad i = 1,\dots,p,
\label{eq:sequence}
\end{equation}
where $\sigma^2 > 0$ is treated as known in this section and $n$ denotes the effective sample size.
An extension to unknown $\sigma^2$ via cross-fitting is given in Theorem~\ref{thm:unknown_sigma}.

Write $\lambda=\sigma^2/n$ for the per-observation noise variance. Throughout this section and the appendix proofs, we also use $\nu$ for the variance of Gaussian inputs to the isotonic regression step; under the cloning construction introduced below, $\nu = 2\lambda = 2\sigma^2/n$.

To make the empirical Bayes shrinkage analysis fully rigorous without delicate dependence issues,
we use a standard Gaussian ``cloning'' trick. Let $Z_i\stackrel{\text{i.i.d.}}{\sim}N(0,\lambda)$ be independent of $(\varepsilon_i)$ and define
\[
Y_i^{(+)} := Y_i + Z_i, \qquad Y_i^{(-)} := Y_i - Z_i,
\]
so that $Y^{(+)}$ and $Y^{(-)}$ are independent and each satisfies $Y_i^{(\pm)}\sim N(\theta_i,2\lambda)$.

Define the monotone variance proxy sequence $V_i := \theta_i^2$.
To control the weighted loss \(\sum_i(\widehat V_i-V_i)^2/(V_i+2\lambda)\) under the heteroskedastic
squared-Gaussian noise, we estimate \(V\) from \(Y^{(+)}\) using a cross-fit (sample-split) isotonic
procedure.

\begin{assumption}[Binning-scale regularity]
\label{assump:B1}
Let \(\nu>0\) and let \(V=(V_1,\dots,V_p)\) be a nonincreasing, nonnegative sequence with \(V_1\le R\).
Let \((t_m)_{m=0}^M\) be dyadic thresholds with \(t_m=2^m\nu\), and let \((I_m)_{m=0}^{M-1}\) be the contiguous bins used
to define the piecewise-constant weights \(w_i:=t_{m(i)}^{-1}\) (where \(m(i)\) is the unique index with \(i\in I_{m(i)}\)).
We assume the bins satisfy, for all \(m\in\{0,\dots,M-1\}\) and all \(i\in I_m\),
\begin{align*}
t_m \ \le\ V_i+\nu \ <\ t_{m+1}=2t_m.
\tag{B1}
\end{align*}
\end{assumption}

\begin{assumption}[Margin from dyadic thresholds]\label{assump:margin_dyadic}
Fix $\kappa\in(0,1/4)$. For each $i\le p$, let $m^\star(i)$ be the (deterministic) dyadic index such that
$t_{m^\star(i)}\le V_i+\nu<t_{m^\star(i)+1}=2t_{m^\star(i)}$.
Assume that for all $i\le p$,
\begin{equation}
\label{eq:margin_dyadic}
t_{m^\star(i)}(1+\kappa)\ \le\ V_i+\nu\ \le\ t_{m^\star(i)+1}(1-\kappa).
\end{equation}
\end{assumption}

\begin{remark}[Practical interpretation of Assumptions~\ref{assump:B1} and~\ref{assump:margin_dyadic}]
\label{rem:practical_B1}
Assumption~\ref{assump:B1} requires that the pilot isotonic fit assigns each coordinate to the ``correct'' dyadic scale. In practice, this condition holds with high probability when the signal-to-noise ratio is not too extreme and the true variance profile is well-separated from dyadic boundaries. The margin condition~\ref{assump:margin_dyadic} formalizes this separation: if $V_i + \nu$ stays away from the dyadic thresholds by a factor of at least $\kappa$, the pilot fit will correctly identify the scale.

Concretely, practitioners can verify these conditions by checking: (i)~the sample size $n$ is large enough that the noise variance $\nu = 2\sigma^2/n$ is small relative to the signal magnitudes $V_i = \theta_i^2$, and (ii)~the signal profile does not cluster at dyadic boundaries. For most smooth or sparse signals arising in polynomial regression or spectral methods, these conditions are satisfied. Proposition~\ref{prop:margin_implies_B1} shows that the binning event \textup{(B1)} holds with probability at least $1 - O(p^{-3})$ when the margin parameter satisfies $\kappa^2 \nu/(R+\nu) \gtrsim \log p$.
\end{remark}

\paragraph{Notation correspondence.}
In Lemma~\ref{lem:isotonic_weighted_main}, we use \(\nu\) for the noise variance of the input Gaussians.
Under the cloning construction above, the clones \(Y^{(+)}\) and \(Y^{(-)}\) have variance \(2\lambda=2\sigma^2/n\),
so we apply Lemma~\ref{lem:isotonic_weighted_main} with \(\nu:=2\lambda=2\sigma^2/n\).

We then apply the resulting shrinkage weights to the independent clone $Y^{(-)}$ and define
\begin{equation}
\widehat{\theta}_i
:=
\frac{\widehat{V}_i}{\widehat{V}_i + 2\sigma^2/n}\, Y_i^{(-)}.
\label{eq:postmean}
\end{equation}
This is a randomized, cross-fit estimator used as a proof device: it depends on auxiliary Gaussian randomness
through the cloning/splitting construction (which makes the variance-profile estimate \(\widehat V\) independent of
the Gaussian sample \(Y^{(-)}\) to which shrinkage is applied). It performs coordinatewise shrinkage; it does not
impose a hard cutoff at an index $s$.

We now define the parameter class over which our risk bounds hold.

\begin{definition}[Sparse ordered class]
\label{def:class}
For integers $1 \le s \le p$ and constant $R > 0$, define the sparse monotone class
\begin{equation}
\Theta_{\downarrow}(s,R)
=
\left\{
\theta \in \mathbb{R}^p :
\exists k \le s \text{ such that }
\theta_1^2 \ge \theta_2^2 \ge \cdots \ge \theta_k^2 \ge 0,\quad
\theta_1^2 \le R,
\;\;
\theta_i = 0 \text{ for } i > k
\right\}.
\label{eq:class}
\end{equation}
\end{definition}

\begin{theorem}[Adaptive near-minimaxity (up to logarithmic factors) of isotonic empirical Bayes shrinkage]
\label{thm:main}
Let $\widehat{\theta}$ denote the randomized, cross-fit isotonic empirical Bayes shrinkage estimator
defined in \eqref{eq:postmean} (based on the Gaussian cloning/splitting construction).
The risk bound below is stated conditionally on the pilot/binning event \textup{(B1)} from
Assumption~\ref{assump:B1}.
Then there exists a universal constant $C>0$ such that
\begin{equation}
\sup_{\theta \in \Theta_{\downarrow}(s,R)}
\mathbb{E}_{\theta}\!\left[
\left\|
\widehat{\theta} - \theta
\right\|_2^2
\,\middle|\,
\textup{(B1)}
\right]
\;\le\;
C \, s \frac{\sigma^2}{n}\log\!\Big(\frac{ep}{s}\Big)\,
\log\!\Big(1+\frac{Rn}{\sigma^2}\Big).
\label{eq:risk}
\end{equation}
\noindent The bound matches the minimax benchmark over $\Theta_{\downarrow}(s,R)$ up to the displayed logarithmic factors;
all dependence on $R$ is through the explicit factor $\log(1+Rn/\sigma^2)$.
\noindent Moreover, if $\sigma^2/n \le R$ then there exists a universal constant $c>0$ such that
\[
\inf_{\tilde{\theta}}
\sup_{\theta\in\Theta_{\downarrow}(s,R)}
\mathbb{E}_\theta\|\tilde{\theta}-\theta\|_2^2
\ \ge\
c\, s\frac{\sigma^2}{n},
\]
so the minimax risk over $\Theta_{\downarrow}(s,R)$ is at least of order $s(\sigma^2/n)$ (Lemma~\ref{lem:lower_main}).

\vspace{0.25em}

\noindent If in addition Assumption~\ref{assump:margin_dyadic} holds and the dyadic-margin scaling satisfies
\(\kappa^2\nu/(R+\nu)\gtrsim \log p\) (with \(\nu:=2\sigma^2/n\) as in the cloning construction),
then the conditional risk bound \eqref{eq:risk} admits an end-to-end (unconditional) version:
there exists a universal constant $C'>0$ such that
\[
\sup_{\theta \in \Theta_{\downarrow}(s,R)}
\mathbb{E}_{\theta}\!\left[
\left\|
\widehat{\theta} - \theta
\right\|_2^2
\right]
\ \le\
C' \, s \frac{\sigma^2}{n}\log\!\Big(\frac{ep}{s}\Big)\,
\log\!\Big(1+\frac{Rn}{\sigma^2}\Big)
\;+\;
C'\Big(sR+\frac{p\sigma^2}{n}\Big)p^{-3/2}.
\]
\end{theorem}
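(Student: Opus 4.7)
The argument splits naturally into the three pieces asserted, and I would handle them in this order: conditional risk bound, minimax lower bound, unconditional bound.

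For the conditional bound \eqref{eq:risk}, the cloning construction is what makes everything go through: $\widehat V$ is a function of $Y^{(+)}$ only, so given $Y^{(+)}$ the shrunken coordinates $\widehat\theta_i=\widehat w_i Y_i^{(-)}$ with $\widehat w_i=\widehat V_i/(\widehat V_i+2\lambda)$ are independent Gaussians of mean $\widehat w_i\theta_i$ and variance $\widehat w_i^{\,2}(2\lambda)$. A single coordinate conditional bias--variance computation gives
\[
\mathbb{E}\!\left[(\widehat\theta_i-\theta_i)^2\mid Y^{(+)}\right]
\;=\;
2\lambda\,\widehat w_i^{\,2}+(1-\widehat w_i)^2\theta_i^{\,2},
\]
which I would compare to the oracle shrinkage rule $w_i^\star=V_i/(V_i+2\lambda)$. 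Summing the oracle risk over $i$, using that $V_i=\theta_i^{\,2}$ is nonzero for at most $s$ indices and each is bounded by $R$, yields an oracle risk of order $s(\sigma^2/n)$, which is the target up to logs. The excess of $\widehat\theta$ over the oracle is controlled coordinatewise by the sensitivity estimate $|\widehat w_i-w_i^\star|\le 2\lambda\,|\widehat V_i-V_i|/(V_i+2\lambda)^2$, so the excess risk is bounded by a constant multiple of $\sum_i (\widehat V_i-V_i)^2/(V_i+2\lambda)$. This is exactly the weighted discrepancy controlled by the cross-fit isotonic estimator on event (B1) via Lemma~\ref{lem:isotonic_weighted_main}, with weights $w_i=t_{m(i)}^{-1}$ matching the denominator $V_i+2\lambda$ up to constants because of Assumption~\ref{assump:B1}. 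Summing the per-bin isotonic risks over the $M=O(\log(1+Rn/\sigma^2))$ dyadic bins, each contributing an $s\log(ep/s)$-type term from the monotone-cone complexity, produces the two logarithmic factors in \eqref{eq:risk}. This weighted-loss-to-risk transfer is the main technical obstacle; everything else is bookkeeping around it.

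The minimax lower bound is the cited Lemma~\ref{lem:lower_main}, which I would obtain by a standard Assouad argument on a hypercube of $s$-sparse vectors with coordinates of magnitude $c\sqrt{\sigma^2/n}$, a construction permissible within $\Theta_{\downarrow}(s,R)$ precisely under the hypothesis $\sigma^2/n\le R$. For the unconditional version, I would split
\[
\mathbb{E}_\theta\|\widehat\theta-\theta\|_2^2
\;=\;
\mathbb{E}_\theta\!\left[\|\widehat\theta-\theta\|_2^2\,\mathbf{1}_{(B1)}\right]
+
\mathbb{E}_\theta\!\left[\|\widehat\theta-\theta\|_2^2\,\mathbf{1}_{(B1)^c}\right].
\]
The first term is bounded by the conditional result. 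For the second I would apply Cauchy--Schwarz, use $\widehat w_i\in[0,1]$ to dominate $\|\widehat\theta\|_2^2\le \|Y^{(-)}\|_2^2$ and hence $\mathbb{E}_\theta\|\widehat\theta-\theta\|_2^4\lesssim (sR+p\sigma^2/n)^2$ by standard chi-square moment bounds, and invoke Proposition~\ref{prop:margin_implies_B1} under the displayed margin scaling to get $\mathbb{P}_\theta((B1)^c)\le p^{-3}$. The square root of this probability is the source of the advertised $p^{-3/2}$ remainder, giving the stated additive term $C'(sR+p\sigma^2/n)p^{-3/2}$ and completing the unconditional bound.
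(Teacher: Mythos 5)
Your proposal follows essentially the same route as the paper's proof: the Gaussian cloning construction to make $\widehat V$ independent of $Y^{(-)}$, the oracle/excess decomposition with the oracle rule $w_i^\star = V_i/(V_i+2\lambda)$, reduction of the excess risk to the weighted discrepancy $\sum_i(\widehat V_i-V_i)^2/(V_i+2\lambda)$ controlled on \textup{(B1)} by Lemma~\ref{lem:isotonic_weighted_main}, Assouad's lemma for the lower bound (matching Lemma~\ref{lem:lower_main} exactly, including the role of the hypothesis $\sigma^2/n\le R$), and Cauchy--Schwarz plus Proposition~\ref{prop:margin_implies_B1} for the unconditional remainder. That last step is one the paper states but does not spell out (see Remark~\ref{rem:b1_end_to_end}); your chain $\widehat w_i\in[0,1]$, hence $\|\widehat\theta\|_2\le\|Y^{(-)}\|_2$, hence $\mathbb{E}_\theta\|\widehat\theta-\theta\|_2^4\lesssim (sR+p\sigma^2/n)^2$, combined with $\mathbb{P}_\theta\big(\textup{(B1)}^c\big)^{1/2}\le p^{-3/2}$, is exactly the intended argument and correctly produces the advertised remainder.

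One intermediate inequality is wrong as stated: the sensitivity estimate $|\widehat w_i-w_i^\star|\le 2\lambda\,|\widehat V_i-V_i|/(V_i+2\lambda)^2$ fails whenever $\widehat V_i<V_i$. Take $\widehat V_i=0$ and $V_i\gg\lambda$: the left side is $V_i/(V_i+2\lambda)\approx 1$ while your right side is $\approx 2\lambda/V_i\to 0$. The issue is that the mean value theorem evaluates the derivative $2\lambda/(\xi+2\lambda)^2$ at some $\xi$ between $\widehat V_i$ and $V_i$, which may be far below $V_i$. The repair is the exact identity $w(a)-w(b)=2\lambda(a-b)/\big((a+2\lambda)(b+2\lambda)\big)$, which gives the uniform bound $|\widehat w_i-w_i^\star|\le |\widehat V_i-V_i|/(V_i+2\lambda)$ (this is the paper's Lemma~\ref{lem:risktransfer_main}); combined with $\mathbb{E}(Y_i^{(-)})^2=V_i+2\lambda$ it yields precisely the weighted discrepancy you assert, so everything downstream is unaffected. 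Two smaller remarks: your direct bound of the oracle risk by $2s\lambda$ (each summand $2\lambda V_i/(V_i+2\lambda)\le 2\lambda$) is in fact tighter than the paper's intermediate $s\lambda\log(1+R/\lambda)$ bound, consistent with the paper's own remark that the second log factor comes from the variance-estimation term; and when invoking Lemma~\ref{lem:isotonic_weighted_main} you should note explicitly, as the paper does, that the approximation term vanishes by taking the comparator $u=V$, since on $\Theta_{\downarrow}(s,R)$ the profile $V$ is nonincreasing and piecewise constant with at most $s$ jumps.
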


\begin{remark}[Conditional nature of the bound]
\label{rem:b1_end_to_end}
Theorem~\ref{thm:main} states a risk bound conditional on the pilot/binning event \textup{(B1)}.
Assumption~\ref{assump:B1} is a pilot-localization property ensuring that the data-dependent bins used for weighting
are on the correct dyadic scale relative to the unknown $V_i+\nu$.
To obtain an end-to-end unconditional guarantee, one must also prove that \textup{(B1)} holds with high probability
uniformly over $\theta\in\Theta_{\downarrow}(s,R)$ (or, alternatively, add an explicit remainder term controlling the
contribution of \(\textup{(B1)}^c\)).
\end{remark}

\begin{corollary}[Low signal-to-noise simplification]
\label{cor:main_low_snr}
If \(Rn/\sigma^2\le 1\), then there exists a universal constant $C_{\mathrm{snr}}>0$ such that
\[
\sup_{\theta \in \Theta_{\downarrow}(s,R)}
\mathbb{E}_{\theta}\!\left[\left\|\widehat{\theta} - \theta\right\|_2^2 \,\middle|\, \textup{(B1)}\right]
\ \le\
C_{\mathrm{snr}} \, s \frac{\sigma^2}{n}\log\!\Big(\frac{ep}{s}\Big).
\]
\end{corollary}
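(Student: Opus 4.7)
The plan is to derive Corollary~\ref{cor:main_low_snr} as a direct consequence of the conditional risk bound \eqref{eq:risk} in Theorem~\ref{thm:main}, by observing that in the low signal-to-noise regime $Rn/\sigma^2\le 1$ the logarithmic factor $\log(1+Rn/\sigma^2)$ is bounded by a universal constant and can be absorbed into the leading constant. No new concentration argument or isotonic-regression analysis is needed: all the substantive work already lives in Theorem~\ref{thm:main}, which in turn rests on the Gaussian cloning/cross-fitting construction, Lemma~\ref{lem:isotonic_weighted_main}, and the pilot/binning event from Assumption~\ref{assump:B1}.

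Concretely, the single step I would carry out is to apply Theorem~\ref{thm:main} to obtain, uniformly over $\theta\in\Theta_{\downarrow}(s,R)$,
\[
\mathbb{E}_{\theta}\!\left[\left\|\widehat{\theta} - \theta\right\|_2^2 \,\middle|\, \textup{(B1)}\right]
\ \le\
C \, s \frac{\sigma^2}{n}\log\!\Big(\frac{ep}{s}\Big)\,\log\!\Big(1+\frac{Rn}{\sigma^2}\Big),
\]
and then note that $x\mapsto \log(1+x)$ is nondecreasing on $[0,\infty)$, so under the hypothesis $Rn/\sigma^2\le 1$ we have $\log(1+Rn/\sigma^2)\le \log 2$. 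Setting $C_{\mathrm{snr}}:=C\log 2$ yields the claim.

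Because this corollary is a bookkeeping simplification rather than a substantive new result, there is essentially no obstacle; the only point worth emphasizing is interpretive. In the regime $Rn/\sigma^2\le 1$ the $R$-dependent amplification in Theorem~\ref{thm:main} is inactive, and the isotonic empirical Bayes risk reduces to the standard sparse sequence benchmark $s(\sigma^2/n)\log(ep/s)$, matching (up to a universal constant) the minimax rate for $s$-sparse vectors in $\mathbb{R}^p$ and the companion lower bound in Lemma~\ref{lem:lower_main}.
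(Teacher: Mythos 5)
Your proposal is correct and follows exactly the paper's own proof: the paper likewise applies the conditional bound \eqref{eq:risk} from Theorem~\ref{thm:main} and absorbs $\log(1+Rn/\sigma^2)\le\log 2$ into the universal constant. Nothing further is needed.
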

\begin{proof}
Since \(Rn/\sigma^2\le 1\), we have \(\log(1+Rn/\sigma^2)\le \log 2\), which is absorbed into the universal constant.
\end{proof}

The frequentist risk bounds established above imply corresponding posterior concentration guarantees. In the next section, we show that the Gaussian posterior induced by isotonic empirical Bayes shrinkage contracts around the true parameter at the same near-minimax rate.

\section{Concentration Results}
\label{sec:concentration_results}

\subsection{Posterior contraction in \texorpdfstring{$\ell_2$}{l2} norm}

\begin{theorem}[Posterior contraction for isotonic empirical Bayes shrinkage]
\label{thm:main_contr}
Consider the Gaussian sequence model
\[
Y_i=\theta_{0,i}+\varepsilon_i,
\varepsilon_i \stackrel{\mathrm{i.i.d.}}{\sim} N(0,\sigma^2/n),
\quad i=1,\dots,p,
\]
with known $\sigma^2>0$. Let $\Pi(\cdot\mid Y)$ denote the (data-dependent) Gaussian posterior induced by the
isotonic empirical Bayes prior
$\theta_i\mid\tau_i^2\sim N(0,\tau_i^2)$ with $\tau_1^2\ge\cdots\ge\tau_p^2\ge 0$,
where $\widehat{\tau}^2$ is estimated by marginal maximum likelihood under the monotonicity constraint.
Write $\widehat{m}$ and $\widehat{V}$ for the posterior mean and covariance.

Assume the posterior second moment satisfies, uniformly over a parameter class $\Theta$,
\begin{equation}
\label{eq:moment_maintext}
\sup_{\theta_0\in\Theta}\;
\mathbb{E}_{\theta_0}\!\left[\,
\|\widehat{m}-\theta_0\|_2^2+\mathrm{tr}(\widehat{V})
\,\right]
\;\le\; C\,\varepsilon_n^2,
\end{equation}
for some rate $\varepsilon_n\downarrow 0$ and constant $C>0$.
Then for any sequence $M\to\infty$,
\[
\sup_{\theta_0\in\Theta}\;
\mathbb{E}_{\theta_0}\Big[
\Pi\big(\|\theta-\theta_0\|_2 > M\varepsilon_n \,\big|\, Y\big)
\Big]
\;\longrightarrow\;0.
\]

In particular, if \eqref{eq:moment_maintext} holds over the ordered sparse class $\Theta_{\downarrow}(s,R)$ with
\[
\varepsilon_n^2 \asymp s\,\frac{\sigma^2}{n},
\]
then the isotonic EB posterior contracts at this rate. Likewise, if \eqref{eq:moment_maintext} holds over the
Sobolev ellipsoid $\mathcal{W}(\beta,R)$ with $\beta>1/2$ with
\[
\varepsilon_n^2 \asymp
R^{\frac{2}{2\beta+1}}
\Big(\frac{\sigma^2}{n}\Big)^{\frac{2\beta}{2\beta+1}}.
\]
then the posterior contracts at this rate.
\end{theorem}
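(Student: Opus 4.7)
The plan is to reduce posterior contraction in $\ell_2$ to the second-moment hypothesis \eqref{eq:moment_maintext} via a single Markov step, exploiting the fact that given $Y$, the isotonic EB posterior is the Gaussian law $N(\widehat m,\widehat V)$. First I would record the Gaussian identity
\[
\mathbb{E}_{\Pi}\!\big[\|\theta-\theta_0\|_2^2 \,\big|\, Y\big]
\ =\
\|\widehat m-\theta_0\|_2^2 \;+\; \mathrm{tr}(\widehat V),
\]
then apply Markov under $\Pi(\cdot\mid Y)$ to obtain
\[
\Pi\!\big(\|\theta-\theta_0\|_2 > M\varepsilon_n \,\big|\, Y\big)
\ \le\
\frac{\|\widehat m-\theta_0\|_2^2 + \mathrm{tr}(\widehat V)}{M^2\,\varepsilon_n^2}.
\]
Taking $\mathbb{E}_{\theta_0}$ on both sides and invoking \eqref{eq:moment_maintext}, which controls precisely the expected numerator by $C\varepsilon_n^2$, yields $\sup_{\theta_0\in\Theta}\mathbb{E}_{\theta_0}\Pi(\|\theta-\theta_0\|_2>M\varepsilon_n\mid Y)\le C/M^2\to 0$ as $M\to\infty$, which is the stated conclusion.

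For the two concrete rates the remaining task is to verify the moment hypothesis on each class. The posterior-mean contribution $\mathbb{E}_{\theta_0}\|\widehat m-\theta_0\|_2^2$ is exactly the frequentist risk already controlled on $\Theta_{\downarrow}(s,R)$ by Theorem~\ref{thm:main} (with the logarithmic factors absorbed into $\varepsilon_n$), and an analogous ordered-variance computation would handle the Sobolev ellipsoid. For the trace term I would use the coordinatewise identity
\[
(\widehat V)_{ii}
\ =\
\frac{\widehat\tau_i^2\,\sigma^2/n}{\widehat\tau_i^2+\sigma^2/n}
\ \le\ \min\!\big(\widehat\tau_i^2,\,\sigma^2/n\big),
\]
so that $\mathrm{tr}(\widehat V)$ is deterministically dominated by the same bias/variance balance driving the frequentist proof, namely $(\sigma^2/n)$ times an effective dimension carved out by the isotonic profile plus the tail mass $\sum_{i:\widehat\tau_i^2<\sigma^2/n}\widehat\tau_i^2$. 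Aggregating these contributions along the dyadic binning already used in Theorem~\ref{thm:main} reproduces the rates in the statement.

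The main obstacle is therefore not the Markov step, which is essentially immediate, but the control of $\mathbb{E}_{\theta_0}\mathrm{tr}(\widehat V)$ through the data-dependent PAVA output $\widehat\tau^2$. On the binning event \textup{(B1)} of Assumption~\ref{assump:B1} this trace tracks exactly the analysis already carried out for the posterior mean; off \textup{(B1)} one either invokes the dyadic-margin condition of Assumption~\ref{assump:margin_dyadic} to pass to an unconditional statement, or runs a direct tail argument on $\widehat\tau^2$ using the sub-exponential concentration of isotonic regression outputs under heteroskedastic Gaussian noise. The Sobolev instantiation is the most delicate piece, since the monotone-variance estimator must track a smoothness-driven bias decay; this is also why the condition $\beta>1/2$ (ensuring coefficient summability along the basis) is needed in the hypothesis.
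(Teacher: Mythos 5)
Your proof is correct and takes essentially the same route as the paper: the appendix argument (Lemma~\ref{lem:gauss_contr}) uses exactly your two steps --- the Gaussian identity $\mathbb{E}_{\Pi(\cdot\mid Y)}\|\theta-\theta_0\|_2^2=\|\widehat m-\theta_0\|_2^2+\mathrm{tr}(\widehat V)$ followed by conditional Markov and $\mathbb{E}_{\theta_0}$ --- and then invokes \eqref{eq:moment_maintext} to conclude $C/M^2\to 0$. Your further discussion of verifying the moment hypothesis via Theorem~\ref{thm:main} and the trace bound goes beyond what the theorem statement actually requires (the moment condition is a hypothesis there, with the ``in particular'' clauses conditional on it), and it matches the paper's own verification remark.
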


\begin{proof}
See Appendix~\ref{sec:contraction} for the full proof.
\end{proof}

\begin{remark}[Verification of the second-moment condition]
The assumption~\eqref{eq:moment_maintext} is verified by Theorem~\ref{thm:main} for the ordered sparse class $\Theta_{\downarrow}(s,R)$ (under the pilot/binning regularity condition \textup{(B1)}), and by Corollary~\ref{cor:sobolev} for the Sobolev ellipsoid $\mathcal{W}(\beta,R)$. The trace bound $\mathbb{E}_{\theta_0}[\mathrm{tr}(\widehat{V})]$ is of the same order as the mean-squared error bound, since each posterior variance $\widehat{v}_i\le \sigma^2/n$ and the effective dimension is controlled by the shrinkage factors.
\end{remark}

\subsection{Contraction in prediction norm}

\begin{corollary}[Posterior contraction in prediction norm]
\label{cor:main_pred}
Consider the linear regression model
\[
Y = X\beta_0 + \varepsilon,
\qquad
\varepsilon \sim N(0,\sigma^2 I_n),
\]
with fixed design matrix $X\in\mathbb{R}^{n\times p}$ of rank $r$.
Let $\Pi(\cdot\mid Y)$ denote the isotonic empirical Bayes posterior on $\beta$
induced by applying isotonic EB shrinkage to the canonical coefficients obtained
from the singular value decomposition of $X$.

Suppose the corresponding canonical mean vector
$\theta_0 = DV^\top\beta_0 \in \mathbb{R}^r$
belongs to an ordered sparse class $\Theta_{\downarrow}(s,R)$
(or, more generally, to a Sobolev ellipsoid $\mathcal W(\beta,R)$).
Then the posterior contracts in prediction norm at the corresponding canonical rate:
\[
\sup_{\beta_0:\, DV^\top\beta_0\in\Theta_{\downarrow}(s,R)}
\mathbb{E}_{\beta_0}
\Big[
\Pi\big(\|X(\beta-\beta_0)\|_2 > M\varepsilon_n \mid Y\big)
\Big]
\;\longrightarrow\; 0,
\qquad M\to\infty,
\]
with
\[
\varepsilon_n^2
\asymp
s\,\frac{\sigma^2}{n},
\]
and, for $\theta_0\in\mathcal W(\beta,R)$,
\[
\varepsilon_n^2
\asymp
R^{\frac{2}{2\beta+1}}
\Big(\frac{\sigma^2}{n}\Big)^{\frac{2\beta}{2\beta+1}}.
\]

Thus, isotonic empirical Bayes shrinkage yields adaptive posterior contraction
in prediction norm, without tuning to the unknown sparsity level or smoothness.
\end{corollary}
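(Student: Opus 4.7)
The plan is a direct reduction: use the isometry induced by the singular value decomposition of $X$ to translate prediction-norm contraction into $\ell_2$-contraction on the canonical sequence model, where Theorem~\ref{thm:main_contr} applies directly. Write the thin SVD $X = UDV^\top$ with $U^\top U = I_r$, $D = \mathrm{diag}(d_1,\dots,d_r)$, and $V^\top V = I_r$. For any $\beta\in\mathbb{R}^p$,
\[
\|X(\beta-\beta_0)\|_2^2 \;=\; \|UDV^\top(\beta-\beta_0)\|_2^2 \;=\; \|DV^\top(\beta-\beta_0)\|_2^2 \;=\; \|\theta-\theta_0\|_2^2,
\]
where $\theta := DV^\top\beta \in \mathbb{R}^r$ denotes the canonical coefficients. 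Hence the event $\{\|X(\beta-\beta_0)\|_2 > M\varepsilon_n\}$ is, under the linear reparameterization $\beta\mapsto\theta$, identical to $\{\|\theta-\theta_0\|_2 > M\varepsilon_n\}$.

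Next, I would project the observation onto the left singular basis, obtaining $Z := U^\top Y = \theta_0 + \tilde\varepsilon$ with $\tilde\varepsilon = U^\top\varepsilon \sim N(0,\sigma^2 I_r)$. This is a Gaussian sequence model in $r$ coordinates. Because the isotonic EB prior and estimator are specified directly on the canonical coefficients $\theta_i$ with the monotonicity constraint $\tau_1^2\ge\cdots\ge\tau_r^2$, the pushforward of $\Pi(\cdot\mid Y)$ under $\beta\mapsto DV^\top\beta$ is exactly the isotonic EB posterior for $\theta$ given $Z$ in the canonical sequence model, with per-coordinate noise matched to the normalization used in Theorem~\ref{thm:main_contr}.

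With the reduction in place, I would invoke Theorem~\ref{thm:main_contr} on the canonical model. For $\theta_0\in\Theta_{\downarrow}(s,R)$ the second-moment hypothesis \eqref{eq:moment_maintext} is delivered by Theorem~\ref{thm:main} (the trace-of-covariance contribution is of the same order because each coordinatewise posterior variance is bounded by the per-coordinate noise level times the shrinkage factor, and the effective dimension is controlled on $\Theta_{\downarrow}(s,R)$); the Sobolev case is handled analogously via Corollary~\ref{cor:sobolev}. Theorem~\ref{thm:main_contr} then yields $\ell_2$-contraction of $\theta$ about $\theta_0$ at the stated rate $\varepsilon_n$, and the isometry from the first step transports this conclusion to $\|X(\beta-\beta_0)\|_2$ at the same rate, giving the uniform statement $\mathbb{E}_{\beta_0}\bigl[\Pi\bigl(\|X(\beta-\beta_0)\|_2 > M\varepsilon_n\mid Y\bigr)\bigr]\to 0$.

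The argument is primarily bookkeeping rather than deep analysis, but two points require care and constitute the main (modest) obstacles. First, one must verify explicitly that the pushforward of the data-dependent Gaussian posterior on $\beta$ coincides with the isotonic EB posterior on $\theta$ in the canonical model; this follows because the prior is imposed on $\theta$ and the likelihood factors through $Z=U^\top Y$ on the column space of $X$, so the isotonic MMLE step and the Gaussian posterior mean are the same object under either parameterization. Second, one must match the noise parameterization: Theorem~\ref{thm:main_contr} states rates with per-coordinate noise $\sigma^2/n$, whereas the SVD projection produces per-coordinate noise $\sigma^2$ in $Z$, so the factor $\sigma^2/n$ in the quoted rate is understood under the standard regression normalization in which singular values of $X$ scale like $\sqrt{n}$ (i.e., $X^\top X/n$ bounded in operator norm). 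These two checks done, the corollary follows without further work.
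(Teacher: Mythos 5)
Your proposal is correct and follows essentially the same route as the paper: the SVD isometry $\|X(\beta-\beta_0)\|_2=\|\theta-\theta_0\|_2$ (orthonormality of $U$), reduction to the canonical sequence model $Z=U^\top Y=\theta_0+\tilde\varepsilon$, identification of the pushforward posterior with the isotonic EB sequence posterior, and an appeal to the second-moment contraction machinery (Theorem~\ref{thm:main_contr} via Lemma~\ref{lem:gauss_contr}, with the moment hypothesis supplied by Theorem~\ref{thm:main} and Corollary~\ref{cor:sobolev}), exactly as in the paper's Corollary~\ref{cor:contr_pred}. Your explicit reconciliation of the noise parameterization---$\sigma^2$ per canonical coordinate after projection versus the $\sigma^2/n$ appearing in the sequence-model theorems, resolved by the standard scaling in which the singular values of $X$ grow like $\sqrt{n}$---addresses a point the paper's own proof leaves implicit, and is a refinement rather than a deviation.
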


\begin{proof}
The result follows by combining Theorem~\ref{thm:contr_sparse} (or Theorem~\ref{thm:contr_sob})
with the coordinate transformation argument of Corollary~\ref{cor:contr_pred}.
The key observation is that the prediction norm $\|X(\beta-\beta_0)\|_2$ equals the canonical
norm $\|\theta-\theta_0\|_2$ due to the orthogonality of $U$ in the SVD $X=UDV^\top$.
See Appendix~\ref{sec:contraction} for details.
\end{proof}

\section{Discussion}
\label{sec:discussion_conclusion}
This paper offers three main insights. First, many familiar regularization procedures---ridge regression, generalized ridge regression, principal components regression, and positive-part Stein shrinkage---can be viewed as special or limiting cases of empirical Bayes estimation under ordered variance structures. This perspective clarifies the role of shrinkage as a form of data-driven variance allocation rather than coefficient selection, and connects classical frequentist procedures to modern Bayesian global--local priors within a single framework. Second, isotonic empirical Bayes shrinkage adapts automatically to unknown sparsity and signal strength under monotone structure: the resulting posterior mean estimator satisfies a near-minimax risk bound (up to logarithmic factors) over ordered sparse classes, with an end-to-end guarantee available under the explicit dyadic-margin condition in Assumption~\ref{assump:margin_dyadic}. Third, the isotonic constraint stabilizes variance estimation and mitigates the collapse behavior known to arise in marginal likelihood estimation for unstructured shrinkage priors; see Section~\ref{sec:collapse_properties} for an explicit analysis.

Our theoretical formulation rests on assumptions that are standard in this literature. The Gaussian sequence reduction follows from working in an orthogonal basis (or, for regression, using singular value decomposition). The monotonicity constraint \(\tau_1^2 \ge \cdots \ge \tau_p^2\) expresses that signal strength decays along a meaningful ordering---as in polynomial regression, Sobolev-type smoothness classes, eigen-sparse (spectral) regression, and ordered basis expansions---and is far weaker than exact sparsity. Estimating hyperparameters by marginal likelihood is a classical empirical Bayes strategy dating back to \citet{robbins1956empirical} and subsequent developments \citep{morris1983parametric,jiang2009general}.

Methodologically, the proposed approach yields a convex, computationally efficient estimator based on marginal likelihood maximization under shape constraints. Unlike fully Bayesian formulations that require MCMC or variational approximations, the estimator can be computed using isotonic regression and admits a transparent interpretation in terms of shrinkage factors. The extension to regression via singular value decomposition further shows how the method adapts to effective rank and eigenvalue-aligned sparsity in general linear models. From a practical perspective, the method is particularly well suited to problems where predictors admit a natural ordering or spectral interpretation, including polynomial regression, functional data analysis, inverse problems, and high-dimensional regression with correlated designs.

The proposed framework differs from popular global--local priors such as the horseshoe or Dirichlet--Laplace in several respects. These priors assume exchangeability of local scale parameters, whereas we explicitly impose structure reflecting ordered signal strength. Our use of empirical Bayes avoids the need for hyperparameter tuning or heavy-tailed mixing distributions while still yielding sharp risk guarantees under ordered structure. More broadly, the structured viewpoint helps clarify how Bayes risk and effective degrees of freedom evolve with model complexity, offering a useful lens on phenomena such as double descent. The posterior contraction results in Section~\ref{sec:concentration_results} provide complementary Bayesian guarantees.

Several extensions merit further investigation. Other shape constraints---such as convexity or blockwise constancy---may be appropriate in certain applications and could be incorporated within the same empirical Bayes framework. Although we provide frequentist risk bounds for the posterior mean, a detailed study of uncertainty quantification and frequentist coverage of credible sets remains open, particularly in structured (ordered) settings. Finally, while we focus on Gaussian models, the underlying ideas extend naturally to generalized linear models and non-Gaussian likelihoods via variance-stabilizing transformations or local quadratic approximations. On the computational side, developing scalable implementations and approximations for larger models (including neural networks) and robust procedures for $p \gg n$ regression under structured local variance profiles remain important directions. On the shape-restriction side, it would also be interesting to develop analogous empirical Bayes global--local procedures for multidimensional or graph isotonic regression, where minimax and adaptive risk theory is now available; see, e.g., \citet{han2019isotonic,deng2020isotonic}.

\bibliography{GlobalLocal,hs-review} 

\appendix

\section{Proof of Theorem~\ref{thm:main}}\label{app:proof}

\subsection*{Proof outline}
The proof proceeds in three steps:
\begin{enumerate}
\item \textbf{Oracle risk bound (Step 1):} We first analyze the oracle shrinkage estimator
$\theta_i^{\mathrm{or}}=g_{2\lambda}(V_i)Y_i^{(-)}$ that uses the true variance profile $V_i=\theta_i^2$.
A bias--variance calculation shows that the oracle risk is $O(s\lambda\log(1+R/\lambda))$.
\item \textbf{Excess risk from variance estimation (Step 2):} The excess risk
$\mathbb{E}\|\widehat\theta-\theta^{\mathrm{or}}\|^2$ arising from estimating $V$ by
$\widehat V$ is controlled via Lemma~\ref{lem:risktransfer_main} (risk transfer under independence)
and Lemma~\ref{lem:isotonic_weighted_main} (weighted isotonic regression bound).
The key is that the cloning/splitting construction makes $\widehat V$ independent of the
Gaussian sample $Y^{(-)}$ to which shrinkage is applied.
\item \textbf{Combining bounds (Step 3):} The total risk is bounded by the sum of the oracle risk
and the excess risk, yielding the stated bound $O(s\lambda\log(ep/s)\log(1+R/\lambda))$.
\end{enumerate}
The proof is conditional on the binning-scale regularity event \textup{(B1)}; an unconditional
version follows from Proposition~\ref{prop:margin_implies_B1} under Assumption~\ref{assump:margin_dyadic}.

\subsection*{Auxiliary lemmas used in the proof}

We use the following lemmas, stated here for convenience.

\begin{lemma}[Lipschitz continuity of shrinkage map]
\label{lem:lipschitz_main}
Let $\lambda>0$ and $g(x)=x/(x+\lambda)$ for $x\ge 0$. Then
\[
|g(x)-g(y)|\le \lambda^{-1}|x-y|,\qquad \forall x,y\ge 0.
\]
\end{lemma}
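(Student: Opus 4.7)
The plan is to establish the Lipschitz bound by a uniform estimate on the derivative of $g$ on $[0,\infty)$. First I would compute $g'(x)=\lambda/(x+\lambda)^2$ via the quotient rule; this derivative is continuous and nonnegative on $[0,\infty)$. Second, using the elementary inequality $(x+\lambda)^2 \ge \lambda^2$ for all $x\ge 0$, I obtain $0\le g'(x)\le 1/\lambda$. Note that the bound is tight at $x=0$, so the constant $\lambda^{-1}$ cannot be improved without restricting the domain away from zero.

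The Lipschitz estimate then follows from the mean value theorem: for any $x,y\ge 0$ there exists $\xi$ between $x$ and $y$ with $g(x)-g(y)=g'(\xi)(x-y)$, and taking absolute values yields $|g(x)-g(y)|\le \lambda^{-1}|x-y|$. Equivalently, one can write $g(x)-g(y)=\int_y^x g'(t)\,dt$ and apply the triangle inequality, which gives the same bound and avoids invoking the MVT.

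There is no serious obstacle: this is a one-line consequence of elementary calculus, and the plan is essentially identical to the proof. The lemma is stated separately only because the explicit constant $\lambda^{-1}$ (rather than a generic Lipschitz constant) is invoked repeatedly in the risk-transfer step (Lemma~\ref{lem:risktransfer_main}) and the weighted isotonic regression bound (Lemma~\ref{lem:isotonic_weighted_main}) underlying Theorem~\ref{thm:main}, where it controls how errors in the estimated variance profile $\widehat V$ propagate into errors in the shrinkage factors $g_{2\lambda}(\widehat V_i)$.
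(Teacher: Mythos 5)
Your proof is correct: $g'(x)=\lambda/(x+\lambda)^2\le\lambda/\lambda^2=\lambda^{-1}$ on $[0,\infty)$, and the mean value theorem (or the integral form you mention) gives the Lipschitz bound; your observation that the constant is attained at $x=0$ is also right. However, the route differs from the paper's. The paper never proves the lemma by differentiation; its working tool is the exact algebraic identity
\[
g_\lambda(a)-g_\lambda(b)=\frac{\lambda(a-b)}{(a+\lambda)(b+\lambda)},
\]
displayed in the proof of Lemma~\ref{lem:risktransfer_main}, from which the stated bound follows immediately by $(a+\lambda)(b+\lambda)\ge\lambda^2$. The identity buys more than your derivative bound: dropping only one factor gives the asymmetric estimate $|g_\lambda(a)-g_\lambda(b)|\le|a-b|/(b+\lambda)$, and it is \emph{this} sharper coordinatewise bound---not the uniform constant $\lambda^{-1}$---that the risk-transfer argument actually uses to produce the self-normalized loss $\sum_i(\widehat V_i-V_i)^2/(V_i+\lambda)$ matched by Lemma~\ref{lem:isotonic_weighted_main}. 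So your closing remark slightly mischaracterizes the lemma's role: the uniform $\lambda^{-1}$ bound alone would propagate variance-profile errors with a factor $(V_i+\lambda)/\lambda^2$ rather than $1/(V_i+\lambda)$, which would be too lossy for the stated risk bound. Your MVT argument is perfectly valid for the lemma as stated, and it generalizes to any shrinkage map with bounded derivative, but the algebraic identity is the version the downstream proofs depend on.
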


\paragraph{A technical device (Gaussian cloning).}
The empirical Bayes variance estimator $\widehat{\tau}^2$ is a (nonlinear) function of $Y$, so bounds
that treat $\widehat{\tau}^2$ as independent of $Y$ are not valid as stated.
To keep the analysis transparent, we analyze an equivalent two-sample construction obtained by adding
independent Gaussian noise and splitting $Y$ into two independent ``clones.'' Specifically, let
$Y_i=\theta_i+\varepsilon_i$ with $\varepsilon_i\sim N(0,\lambda)$ and draw i.i.d.\ $Z_i\sim N(0,\lambda)$
independently of $\varepsilon$. Define
\[
Y_i^{(+)}:=Y_i+Z_i,\qquad Y_i^{(-)}:=Y_i-Z_i.
\]
Then $Y^{(+)}$ and $Y^{(-)}$ are independent and
$Y_i^{(\pm)}\sim N(\theta_i,2\lambda)$.
We estimate the variance profile from $Y^{(+)}$ and apply shrinkage to $Y^{(-)}$.

\begin{lemma}[Risk transfer under independent variance estimation]
\label{lem:risktransfer_main}
Fix $\lambda>0$ and define $g_\lambda(x)=x/(x+\lambda)$ for $x\ge 0$.
Let $Y_i\sim N(\theta_i,\lambda)$ be independent and define the deterministic target sequence
$V_i:=\theta_i^2$.
Let $\widehat{V}=(\widehat{V}_1,\dots,\widehat{V}_p)$ be any estimator independent of $Y=(Y_1,\dots,Y_p)$.
Define the (data-driven) shrinkage estimator and its oracle counterpart by
\[
\widehat{\theta}_i := g_\lambda(\widehat{V}_i)\,Y_i,
\qquad
\theta_i^{\mathrm{or}} := g_\lambda(V_i)\,Y_i.
\]
If $\widehat{V}\perp Y$, then
\begin{equation}
\label{eq:risk_transfer}
\mathbb{E}_\theta\|\widehat{\theta}-\theta^{\mathrm{or}}\|_2^2
\ \le\
\sum_{i=1}^p \mathbb{E}\Bigg[\frac{(\widehat{V}_i-V_i)^2}{V_i+\lambda}\Bigg].
\end{equation}
\end{lemma}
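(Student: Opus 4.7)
}
The plan is to exploit the independence of $\widehat V$ and $Y$ to decouple the two sources of randomness, and then use a sharpened (non-uniform) Lipschitz-type bound on the shrinkage map $g_\lambda(x)=x/(x+\lambda)$ that is tighter than the uniform bound $\lambda^{-1}|x-y|$ given in Lemma~\ref{lem:lipschitz_main}. The coordinates are independent, so it suffices to work one index at a time and then sum.

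\textbf{Step 1 (coordinatewise decomposition).} Since $\widehat\theta_i-\theta_i^{\mathrm{or}}=\bigl(g_\lambda(\widehat V_i)-g_\lambda(V_i)\bigr)Y_i$, I would condition on $\widehat V$. Independence of $\widehat V$ from $Y$ gives
\[
\mathbb{E}_\theta\bigl[(\widehat\theta_i-\theta_i^{\mathrm{or}})^2\bigr]
=\mathbb{E}\bigl[(g_\lambda(\widehat V_i)-g_\lambda(V_i))^2\bigr]\cdot \mathbb{E}_\theta[Y_i^2]
=\mathbb{E}\bigl[(g_\lambda(\widehat V_i)-g_\lambda(V_i))^2\bigr]\cdot (V_i+\lambda),
\]
using $\mathbb{E}_\theta[Y_i^2]=\theta_i^2+\lambda=V_i+\lambda$.

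\textbf{Step 2 (sharp Lipschitz bound on $g_\lambda$).} The key algebraic identity is
\[
g_\lambda(x)-g_\lambda(y)=\frac{x}{x+\lambda}-\frac{y}{y+\lambda}=\frac{\lambda(x-y)}{(x+\lambda)(y+\lambda)}.
\]
Since $\widehat V_i\ge 0$, we have $\lambda/(\widehat V_i+\lambda)\le 1$, so
\[
\bigl(g_\lambda(\widehat V_i)-g_\lambda(V_i)\bigr)^2
=\frac{\lambda^2(\widehat V_i-V_i)^2}{(\widehat V_i+\lambda)^2(V_i+\lambda)^2}
\ \le\ \frac{(\widehat V_i-V_i)^2}{(V_i+\lambda)^2}.
\]
This is the crucial improvement over the uniform Lipschitz bound: it replaces the factor $\lambda^{-2}$ by the data-dependent weight $(V_i+\lambda)^{-2}$, which is exactly what is needed to cancel one factor of $(V_i+\lambda)$ coming from $\mathbb{E}[Y_i^2]$ while leaving behind $(V_i+\lambda)^{-1}$ in the denominator of the final bound.

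\textbf{Step 3 (combine and sum).} Multiplying the Step 2 bound by $(V_i+\lambda)$ yields
\[
\mathbb{E}_\theta\bigl[(\widehat\theta_i-\theta_i^{\mathrm{or}})^2\bigr]
\ \le\
\mathbb{E}\!\left[\frac{(\widehat V_i-V_i)^2}{V_i+\lambda}\right],
\]
and summing over $i=1,\dots,p$ gives \eqref{eq:risk_transfer}.

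I do not expect a serious obstacle here. The only subtle point is resisting the temptation to plug in the loose uniform Lipschitz bound from Lemma~\ref{lem:lipschitz_main}; if one does, the resulting weight is $\lambda^{-1}$ rather than $(V_i+\lambda)^{-1}$, which is too coarse when $V_i\gg\lambda$ and would break the downstream balance with the weighted isotonic bound in Lemma~\ref{lem:isotonic_weighted_main}. The non-uniform Lipschitz bound via the explicit identity for $g_\lambda(x)-g_\lambda(y)$ is the one essential algebraic observation driving the proof.
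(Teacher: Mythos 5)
Your proposal is correct and matches the paper's proof essentially verbatim: both use the identity $g_\lambda(a)-g_\lambda(b)=\lambda(a-b)/\bigl((a+\lambda)(b+\lambda)\bigr)$ together with $\lambda/(\widehat V_i+\lambda)\le 1$ to get the non-uniform bound $|g_\lambda(\widehat V_i)-g_\lambda(V_i)|\le |\widehat V_i-V_i|/(V_i+\lambda)$, then factor the expectation via independence using $\mathbb{E}_\theta Y_i^2=V_i+\lambda$ and sum over coordinates. Your closing remark correctly identifies the one essential point---that the uniform Lipschitz constant $\lambda^{-1}$ from Lemma~\ref{lem:lipschitz_main} would be too coarse---which is exactly why the paper's proof also bypasses that lemma in favor of the explicit identity.
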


\begin{proof}[Proof of Lemma~\ref{lem:risktransfer_main}]
Using the identity
\[
g_\lambda(a)-g_\lambda(b)=\frac{\lambda(a-b)}{(a+\lambda)(b+\lambda)},
\]
we have for all $a,b\ge 0$ that
\[
|g_\lambda(a)-g_\lambda(b)|
=
\frac{\lambda|a-b|}{(a+\lambda)(b+\lambda)}
\le
\frac{|a-b|}{b+\lambda}.
\]
By independence of $\widehat{V}$ and $Y$,
\[
\mathbb{E}_\theta(\widehat{\theta}_i-\theta_i^{\mathrm{or}})^2
=
\mathbb{E}\big(g_\lambda(\widehat{V}_i)-g_\lambda(V_i)\big)^2\,\mathbb{E}_\theta Y_i^2
\le
\mathbb{E}\Big[\frac{(\widehat{V}_i-V_i)^2}{(V_i+\lambda)^2}\Big]\,(V_i+\lambda),
\]
where $\mathbb{E}_\theta Y_i^2=\theta_i^2+\lambda=V_i+\lambda$.
Summing over $i$ yields \eqref{eq:risk_transfer}.
\end{proof}

\paragraph{Orlicz norms and tail classes.}
For a random variable $X$, define the Orlicz norms
\[
\|X\|_{\psi_1}:=\inf\{c>0:\ \mathbb{E}\exp(|X|/c)\le 2\},
\qquad
\|X\|_{\psi_2}:=\inf\{c>0:\ \mathbb{E}\exp(X^2/c^2)\le 2\}.
\]
We say $X$ is sub-exponential if $\|X\|_{\psi_1}<\infty$ and sub-Gaussian if $\|X\|_{\psi_2}<\infty$.

\begin{proposition}[Margin implies correct dyadic binning with high probability]
\label{prop:margin_implies_B1}
In the setting of Lemma~\ref{lem:isotonic_weighted_main}, assume Assumption~\ref{assump:margin_dyadic}.
There exist universal constants $c_1,c_2>0$ such that for all $p\ge 3$,
\[
\sup_{\theta\in\Theta_{\downarrow}(s,R)}\mathbb{P}_\theta\big((B1)^c\big)
\ \le\
c_1\,p^2\exp\!\Big(-c_2\,\kappa^2\frac{\nu}{R+\nu}\Big).
\]
\noindent In particular, if \(\kappa^2\nu/(R+\nu)\ge C\log p\) for a sufficiently large universal constant \(C>0\),
then the right-hand side is at most \(c\,p^{-3}\) for a universal constant \(c>0\).
\end{proposition}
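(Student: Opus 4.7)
The plan is to convert $(B1)^c$ into a tail event for the pilot isotonic estimator $\widehat V$ and then control that tail by sub-exponential concentration combined with a union bound. Under Assumption~\ref{assump:margin_dyadic}, $V_i+\nu$ is bounded away from both dyadic endpoints $t_{m^\star(i)}$ and $t_{m^\star(i)+1}$ by distance at least $\kappa t_{m^\star(i)}$. Consequently, if $|\widehat V_i-V_i|\le \kappa t_{m^\star(i)}$ for every $i$, then $\widehat V_i+\nu$ lies in the same dyadic window as $V_i+\nu$, so the pilot assigns $i$ to the correct bin and $(B1)$ holds. Thus
\[
(B1)^c\ \subseteq\ \bigcup_{i=1}^{p}\Big\{\,|\widehat V_i - V_i| > \kappa\,t_{m^\star(i)}\,\Big\},
\]
and the task reduces to uniform control of the pilot at the scale-dependent margin $\kappa t_{m^\star(i)}$.

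Next I would pass from pilot error to uniform control of partial averages. The pilot is the isotonic regression of the unbiased input sequence $(Y_j^{(+)})^2-\nu$ onto the monotone cone, and because the target $V$ is itself nonincreasing on $\Theta_{\downarrow}(s,R)$, the max--min (PAVA) representation yields the bias-free pointwise bound
\[
|\widehat V_i - V_i|\ \le\ \max_{[a,b]\ni i}\ \Big|\tfrac{1}{b-a+1}\sum_{j=a}^{b}\big((Y_j^{(+)})^2-V_j-\nu\big)\Big|.
\]
Each summand is centered sub-exponential: a direct computation gives $\mathrm{Var}((Y_j^{(+)})^2)=2\nu(2V_j+\nu)\le 4\nu(R+\nu)$ and Orlicz norm $\|(Y_j^{(+)})^2-V_j-\nu\|_{\psi_1}\lesssim V_j+\nu\le R+\nu$, uniformly over $\theta\in\Theta_{\downarrow}(s,R)$.

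I would then apply Bernstein's inequality for sub-exponential sums to each partial average. At deviation level $\kappa t_{m^\star(i)}$ and interval length $k$, the tail is at most $2\exp(-c\,k\,\kappa^2 t_{m^\star(i)}^2/[\nu(R+\nu)])$, since the quadratic Bernstein regime is in force uniformly for $\kappa\le 1/4$. The worst case over dyadic scales occurs at the smallest scale $t_{m^\star(i)}=\nu$, yielding the scale-uniform exponent $c\,k\,\kappa^2\nu/(R+\nu)$, which is smallest when $k=1$. Union-bounding over the $O(p^2)$ contiguous intervals $[a,b]\subseteq[1,p]$ (and absorbing the inner union over $i\in[a,b]$ into the constants) produces
\[
\mathbb{P}\big((B1)^c\big)\ \le\ c_1\,p^2\,\exp\!\Big(-c_2\,\kappa^2\,\tfrac{\nu}{R+\nu}\Big),
\]
and substituting $\kappa^2\nu/(R+\nu)\ge C\log p$ for $C$ large enough converts this into the claimed $O(p^{-3})$ rate.

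The main obstacle is the clean max--min reduction in the second step: it depends essentially on the monotonicity of the mean sequence $(V_i+\nu)$, so that the isotonic projection of the mean is the identity and no residual bias term survives in the partial-average bound. Once that structural point is made precise via the Ayer--Brunk max--min formula for PAVA, the rest is standard sub-exponential tail control; care is needed only to verify that the Bernstein constants are universal (independent of $\kappa,\nu,R,p$) and that the quadratic regime is uniformly active for $\kappa\le 1/4$, so that the stated exponent $\kappa^2\nu/(R+\nu)$ can be reported with a single pair of universal constants $(c_1,c_2)$.
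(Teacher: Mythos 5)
Your proposal follows essentially the same route as the paper's proof: reduce $(B1)^c$ via the margin condition to a uniform deviation bound on the pilot isotonic fit, use the max--min (Ayer--Brunk/PAVA) representation together with the monotonicity of $V$ to bound the pilot error pointwise by the maximal partial average $\max_{[a,b]\ni i}|\bar\xi_{a:b}|$ of the centered squared-Gaussian noise, and finish with Bernstein's inequality plus a union bound over the $O(p^2)$ contiguous intervals. The structural ingredients (bias-free reduction thanks to $V\in\mathcal{C}$, sub-exponential control of $\xi_j$, worst case at block length $1$) all match the paper.

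There is, however, one incorrect intermediate claim. You assert that at deviation level $t=\kappa\,t_{m^\star(i)}$ ``the quadratic Bernstein regime is in force uniformly for $\kappa\le 1/4$,'' and you report the per-interval tail $2\exp\big(-c\,k\,\kappa^2 t_{m^\star(i)}^2/[\nu(R+\nu)]\big)$ at every scale. With your own parameters ($\mathrm{Var}\le 4\nu(R+\nu)$, $\psi_1$-norm $\lesssim R+\nu$) the regimes cross at $t\asymp\nu$, so whenever $\kappa t_m\gtrsim\nu$ (e.g.\ at the top dyadic scales when $R\gg\nu/\kappa$) the linear regime governs and your stated quadratic tail overstates the concentration --- at $t_m\asymp R$ it would claim an exponent $\asymp\kappa^2R/\nu$ where Bernstein only gives $\asymp\kappa R/(R+\nu)$. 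The slip is repairable in one line: in the linear regime the exponent $c\,k\,\kappa t_m/(R+\nu)$ is still bounded below by $c\,k\,\kappa^2\nu/(R+\nu)$ (it is in fact larger), so the scale-uniform worst case remains the smallest scale $t_m=\nu$, where the quadratic regime genuinely holds and your bound is valid; the union bound and the final display then go through as written. The paper sidesteps this issue entirely by working with the single uniform deviation level $(\kappa/4)\nu$ --- sufficient since every margin is at least $\kappa t_{m^\star(i)}\ge\kappa\nu$ --- at which the quadratic regime is in force at all scales (since $(\kappa/4)\nu\le\Lambda$ with $\Lambda=\nu+\sqrt{\nu R}$), whereas your scale-adaptive threshold $\kappa t_{m^\star(i)}$ is marginally sharper at large scales but forces the two-regime case analysis you omitted. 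A second, smaller point: for a fixed interval $[a,b]$ the relevant threshold is $\min_{i\in[a,b]}\kappa t_{m^\star(i)}=\kappa t_{m^\star(b)}\ge\kappa\nu$ (using that $m^\star(\cdot)$ is nonincreasing), which you gloss over by ``absorbing the inner union into the constants''; the uniform-threshold formulation makes this step unnecessary.
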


\begin{proof}
Let $\widetilde V=\Pi_{\mathcal{C}}(X^{(1)})$ be the pilot isotonic fit used to form the bins in Lemma~\ref{lem:isotonic_weighted_main},
where $\mathcal{C}=\{v\in\mathbb{R}^p:\ v_1\ge\cdots\ge v_p\ge 0\}$ and $X^{(1)}=V+\xi$ with independent mean-zero
sub-exponential coordinates.
By the max--min characterization of isotonic regression (see, e.g., \citet[Ch.~1]{robertson1988order} or
\citet{barlow1972order}), for each $i$,
\[
\widetilde V_i
=
\Big(\min_{1\le h\le i}\ \max_{i\le j\le p}\ \bar X^{(1)}_{h:j}\Big)_+,
\qquad
\bar X^{(1)}_{h:j}:=\frac{1}{j-h+1}\sum_{\ell=h}^j X^{(1)}_\ell.
\]
Since $V$ is nonincreasing, the same representation applied to $V$ gives $V_i=(\min_{h\le i}\max_{j\ge i}\bar V_{h:j})_+$
where $\bar V_{h:j}$ is the average of $V$ over $[h,j]$. Using the inequality \(|(a)_+-(b)_+|\le |a-b|\) and the
Lipschitz property of $\min$ and $\max$ operations, we obtain the deterministic bound
\[
|\widetilde V_i-V_i|
\le
\max_{1\le h\le i\le j\le p}\ |\bar\xi_{h:j}|,
\qquad
\bar\xi_{h:j}:=\frac{1}{j-h+1}\sum_{\ell=h}^j \xi_\ell.
\]
(The factor of 2 in some presentations arises from chaining $\min$-$\max$ bounds; we use the tighter form here.)
Therefore, on the event
\[
\mathcal{E}:=\Big\{\max_{1\le h\le j\le p}|\bar\xi_{h:j}|\le (\kappa/4)\,\nu\Big\},
\]
we have $|\widetilde V_i-V_i|\le (\kappa/4)\nu$ for all $i$.
Combining this with the margin condition \eqref{eq:margin_dyadic} shows that $\widetilde V_i+\nu$ falls in the same dyadic
interval $[t_{m^\star(i)},t_{m^\star(i)+1})$ as $V_i+\nu$ for all $i$, hence the induced bins satisfy \textup{(B1)}.
It remains to bound $\mathbb{P}(\mathcal{E}^c)$.

For the squared-Gaussian proxy noise $\xi_i=(Y_i^{(1)})^2-\nu-\theta_i^2$, we have
$\|\xi_i\|_{\psi_1}\lesssim \nu+\sqrt{\nu V_i}\le \nu+\sqrt{\nu R}$ uniformly, hence by Bernstein's inequality,
for any fixed block $[h,j]$ of length $r=j-h+1$ and any $t>0$,
\[
\mathbb{P}\!\left(|\bar\xi_{h:j}|>t\right)
\le
2\exp\!\Big(-c_0\,r\min\Big\{\frac{t^2}{\Lambda^2},\frac{t}{\Lambda}\Big\}\Big),
\qquad
\Lambda:=\nu+\sqrt{\nu R}.
\]
Taking $t=(\kappa/4)\nu$ and using $\nu\le \Lambda$ yields
$\mathbb{P}(|\bar\xi_{h:j}|>t)\le 2\exp(-c_1 r \kappa^2 \nu^2/\Lambda^2)\le 2\exp(-c_2 r \kappa^2 \nu/(R+\nu))$.
There are at most $p^2$ blocks, so a union bound gives
\[
\mathbb{P}(\mathcal{E}^c)
\le
2p^2\exp\!\Big(-c_2\,\kappa^2\frac{\nu}{R+\nu}\Big).
\]
This proves the displayed exponential bound. The polynomial bound \(c\,p^{-3}\) follows under the additional scaling
\(\kappa^2\nu/(R+\nu)\gtrsim \log p\) by choosing the implicit constant large enough.
\end{proof}


\begin{lemma}[Isotonic EB variance estimation risk bound]
\label{lem:isotonic_main}
Let $X_i=V_i+\xi_i$ where $V_1\ge \cdots \ge V_p\ge 0$ is a nonincreasing sequence and $\xi_i$ are independent,
mean-zero, sub-exponential errors with scale parameter $\lambda>0$ (i.e., $\|\xi_i\|_{\psi_1}\lesssim \lambda$
uniformly in $i$).
Let $\widehat{V}$ be the isotonic least squares estimator of $V$ (PAVA) under the constraint
$\widehat{V}_1\ge\cdots\ge\widehat{V}_p\ge 0$.
Then there exists a universal constant $C_0>0$ such that for every $1\le s\le p$,
\begin{equation}
\label{eq:isotonic_oracle}
\mathbb{E}\sum_{i=1}^p(\widehat{V}_i-V_i)^2
\ \le\
C_0\,\lambda^2\, s\log\!\Big(\frac{ep}{s}\Big)
\ +\
C_0\,\inf_{\substack{u\in\mathbb{R}^p:\\ u_1\ge\cdots\ge u_p\ge 0\\ \#\{i:u_i\neq u_{i+1}\}\le s}}
\sum_{i=1}^p (u_i-V_i)^2.
\end{equation}
\end{lemma}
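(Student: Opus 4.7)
The plan is to follow the classical adaptive oracle-inequality template for $\ell_2$ projections onto closed convex cones, specialized to the monotone cone $\mathcal{K}=\{v\in\mathbb{R}^p: v_1\ge v_2\ge\cdots\ge v_p\ge 0\}$, with Gaussian chaining replaced by a mixed-tail generic-chaining step to accommodate the sub-exponential noise $\xi_i$.

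First, I would write down the basic inequality for $\widehat{V}=\Pi_{\mathcal{K}}(V+\xi)$. Since $V\in\mathcal{K}$, projection onto the closed convex cone gives, for every $u\in\mathcal{K}$,
\[
\|\widehat{V}-V\|_2^2+\|\widehat{V}-u\|_2^2 \ \le\ \|u-V\|_2^2+2\langle \xi,\widehat{V}-u\rangle.
\]
Upper-bounding the inner product by a localized supremum over the tangent set $(\mathcal{K}-u)\cap tB_2$, and applying Chatterjee's variational criterion (or an equivalent peeling device), reduces the problem to controlling
\[
\mathbb{E}\sup_{w\in(\mathcal{K}-u)\cap tB_2}\langle \xi,w\rangle
\]
as a function of $t>0$ and balancing it against $t^2$ to identify the critical radius.

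Second, I would exploit the block structure of candidate comparators $u\in\mathcal{K}$ that are piecewise constant on $s$ contiguous blocks $I_1,\dots,I_s$ of lengths $n_1,\dots,n_s$. The tangent cone $\mathcal{K}-u$ decomposes into nonincreasing perturbations on each block (coupled only at the $s-1$ boundary transitions), and the statistical dimension of the monotone cone on a block of length $n_k$ is of order $\log n_k$. Concavity of $\log$ yields $\sum_{k}\log n_k\le s\log(p/s)$, so in the Gaussian case a Dudley-type bound already delivers
\[
\mathbb{E}\sup_{w\in(\mathcal{K}-u)\cap tB_2}\langle \xi,w\rangle \ \lesssim\ \lambda\, t\sqrt{s\log(ep/s)},
\]
producing a critical-radius squared of order $\lambda^2 s\log(ep/s)$, which matches the claimed variance term.

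Third, to lift this bound to sub-exponential $\xi_i$, I would invoke mixed-tail generic chaining (Dirksen's framework), which controls the supremum by $\gamma_2(T,\|\cdot\|_2)+\gamma_1(T,\|\cdot\|_\infty)$ with $T=(\mathcal{K}-u)\cap tB_2$. The $\gamma_2$-term reproduces the sub-Gaussian rate above, while the $\gamma_1$-term is controlled via the $\ell_\infty$-diameter of $T$ (at most $2t$) combined with standard entropy estimates for ordered cones, which suffice to absorb the sub-exponential contribution into the universal constant. A standard peeling/localization step then converts the resulting high-probability supremum bound into an expectation bound on $\|\widehat{V}-V\|_2^2-\|u-V\|_2^2$, and taking the infimum over $u\in\mathcal{K}$ with at most $s$ level sets gives the stated oracle inequality. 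The main obstacle is preventing the sub-exponential tails from inflating the rate by an extra factor of $\log p$: naive truncation of $\xi_i$ at scale $\lambda\log p$ followed by purely sub-Gaussian chaining loses precisely this factor and would degrade the bound to $\lambda^2 s\log(ep/s)\log p$, so careful use of mixed-tail chaining to separate the sub-Gaussian and sub-exponential contributions on different scales of the admissible sequence is essential.
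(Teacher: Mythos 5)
Your skeleton coincides with the paper's: the basic inequality for projection onto the monotone cone, the decomposition of the tangent cone $T_{\mathcal{C}}(u)$ at a piecewise-constant comparator $u$ into monotone cones on the constant blocks, the per-block $\log(em_b)$ bound, and the concavity step yielding $s\log(ep/s)$ are all exactly the paper's Steps 1--2. The divergence---and the gap---is in how the sub-exponential noise is handled. The paper never chains at all: it uses the exact variational identity $\sup_{g\in T_{\mathcal{C}}(u)}\bigl(2\langle\xi,g\rangle-\|g\|_2^2\bigr)=\|\Pi_{T_{\mathcal{C}}(u)}(\xi)\|_2^2$, so the noise term is literally a cone projection, and then invokes Lemma~\ref{lem:block_proj_zhang}, i.e.\ the isotonic least-squares risk bound of \citet{zhang2002risk} for a constant signal, whose moment-based proof (via the max--min representation and maximal inequalities, extended by truncation/symmetrization) gives $\mathbb{E}\|\Pi_{\mathcal{C}_m}(\xi)\|_2^2\lesssim \lambda^2\log(em)$ directly under $\psi_1$ tails. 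You instead localize to $(\mathcal{K}-u)\cap tB_2$, balance against $t^2$, and invoke mixed-tail generic chaining, asserting that the $\gamma_1(T,\ell_\infty)$ term is ``absorbed into the universal constant'' by ``standard entropy estimates for ordered cones.'' That assertion is precisely the unproved crux, and it is not standard.

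Concretely: for $T=\mathcal{C}_m\cap tB_2$ the $\gamma_2(T,\ell_2)$ term is indeed $\asymp t\sqrt{\log(em)}$, but the $\ell_\infty$ covering numbers of monotone sequences in an $\ell_2$ ball satisfy $\log N(\epsilon,T,\ell_\infty)\gtrsim (t/\epsilon)\log(t/\epsilon)$, so the Dudley-type bound on $\gamma_1$ diverges at small scales and, with any finite-dimensional cutoff, produces $\gamma_1(T,\ell_\infty)\gtrsim t\,\mathrm{polylog}(m)$, inflating the critical radius well beyond $\lambda\sqrt{\log m}$---worse even than the single extra $\log p$ factor you flag for naive truncation. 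Whether a genuinely optimal admissible sequence brings $\gamma_1$ down to $O(t\sqrt{\log m})$ for this cone is exactly the delicate question your last sentence defers; you identify the obstacle correctly but do not resolve it, so as written the argument does not deliver \eqref{eq:isotonic_oracle}. The fix is either to carry out that $\gamma_1$ computation in detail, or---more economically, as the paper does---to replace the chaining step entirely by the cone-projection identity plus the per-block constant-signal isotonic risk bound of Lemma~\ref{lem:block_proj_zhang}, which handles sub-exponential errors by moment arguments and keeps the bound at $\lambda^2 s\log(ep/s)$ with no extra logarithmic loss.
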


\begin{lemma}[Block projection risk bound for the monotone cone (Zhang, 2002)]
\label{lem:block_proj_zhang}
Let $m\ge 1$ and let $\mathcal{C}_m:=\{v\in\mathbb{R}^m:\ v_1\ge v_2\ge\cdots\ge v_m\}$ be the nonincreasing
monotone cone.
Let $\varepsilon=(\varepsilon_1,\dots,\varepsilon_m)$ have independent coordinates with
$\mathbb{E}\varepsilon_i=0$, $\mathbb{E}\varepsilon_i^2\le \sigma^2$ for all $i$, and assume in addition that the coordinates are sub-exponential with a uniform scale control (e.g.\ $\|\varepsilon_i\|_{\psi_1}\lesssim \sigma$ uniformly in $i$).
Then there exists a universal constant $c_0>0$ such that
\[
\mathbb{E}\big\|\Pi_{\mathcal{C}_m}(\varepsilon)\big\|_2^2 \ \le\ c_0\,\sigma^2\,\log(em).
\]
\end{lemma}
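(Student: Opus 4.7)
The plan is to reduce the bound to controlling the squared Gaussian width of the monotone cone $\mathcal{C}_m$ via the variational identity for projection onto closed convex cones, and then use the known statistical--dimension formula together with a maximal inequality that handles the sub-exponential case. Since $\mathcal{C}_m$ is a closed convex cone containing the origin, the first-order optimality conditions for the projection (setting $v=0$ and $v=2W$ in the variational inequality $\langle \varepsilon-W,v-W\rangle\le 0$, both of which are feasible by the cone property) yield the identity
\[
\|\Pi_{\mathcal{C}_m}(\varepsilon)\|_2^{\,2} \;=\; \langle \varepsilon,\,\Pi_{\mathcal{C}_m}(\varepsilon)\rangle.
\]
Writing $W:=\Pi_{\mathcal{C}_m}(\varepsilon)$ and working on $\{\|W\|_2>0\}$, the unit vector $W/\|W\|_2$ lies in $\mathcal{C}_m\cap S^{m-1}$, so Cauchy--Schwarz gives $\|W\|_2^{\,2}=\langle\varepsilon,W\rangle\le \|W\|_2\,Z_m$ with $Z_m:=\sup_{v\in\mathcal{C}_m,\,\|v\|_2\le 1}\langle\varepsilon,v\rangle$. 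Cancelling $\|W\|_2$ produces the pointwise envelope $\|\Pi_{\mathcal{C}_m}(\varepsilon)\|_2\le Z_m$ almost surely, and hence $\mathbb{E}\|\Pi_{\mathcal{C}_m}(\varepsilon)\|_2^{\,2}\le \mathbb{E} Z_m^2$.

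Next, I would bound $\mathbb{E} Z_m^2$ via the statistical dimension of $\mathcal{C}_m$. For centered Gaussian $\varepsilon$ with $\mathrm{Var}(\varepsilon_i)\le\sigma^2$, the squared Gaussian width equals $\sigma^2\,\delta(\mathcal{C}_m)$, and the classical identity $\delta(\mathcal{C}_m)=\sum_{k=1}^m 1/k\le 1+\log m$ (see the Amelunxen--Lotz--McCoy--Tropp statistical-dimension formula for the monotone cone, or the isotonic-regression literature of Meyer--Woodroofe) yields $\mathbb{E} Z_m^2\le \sigma^2(1+\log m)$ directly. To extend this to sub-exponential coordinates with $\|\varepsilon_i\|_{\psi_1}\le C_0\sigma$, I would exploit the polyhedral representation of $v\in\mathcal{C}_m$ as $v = v_m\mathbf{1} + \sum_{k=1}^{m-1}(v_k-v_{k+1})\,u^{(k)}$ with $u^{(k)}:=(1,\dots,1,0,\dots,0)\in\mathbb{R}^m$ having $k$ leading ones and nonnegative increments $v_k-v_{k+1}$. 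This reduces $Z_m$ (up to an absolute constant) to a weighted supremum over normalized partial sums $S_k/\sqrt{k}$ with $S_k:=\varepsilon_1+\cdots+\varepsilon_k$. Bernstein's inequality for sub-exponential sums gives, for all $k\ge 1$ and $t\ge 0$,
\[
\mathbb{P}\bigl(|S_k|\ge t\sqrt{k}\bigr)\;\le\;2\exp\!\bigl(-c\min\bigl(t^2/\sigma^2,\ t\sqrt{k}/\sigma\bigr)\bigr),
\]
and a union bound over $k=1,\dots,m$ followed by tail integration produces $\mathbb{E}\max_{1\le k\le m} S_k^2/k\le c_3\,\sigma^2\log(em)$, which assembles back to $\mathbb{E} Z_m^2\le c_0\,\sigma^2\log(em)$ and completes the proof.

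The main technical obstacle is the sub-exponential extension. In the Gaussian (or sub-Gaussian) case, the statistical-dimension calculation or Slepian--Sudakov--Fernique comparison delivers the $\log(em)$ rate cleanly. Under the weaker $\psi_1$ assumption, Bernstein's inequality introduces a linear tail $\exp(-c\,t\sqrt{k}/\sigma)$ that competes with the sub-Gaussian term $\exp(-c\,t^2/\sigma^2)$; fortunately, in the regime $t\asymp\sigma\sqrt{\log(em)}$ that controls the expected maximum, the sub-Gaussian term dominates whenever $k\ge \log(em)$, while for the at most $\log(em)$ smaller values of $k$ the linear tail is absorbed into an absolute constant. A secondary subtlety is that the extreme rays $u^{(k)}$ of $\mathcal{C}_m$ are not mutually orthogonal, so the reduction of $Z_m$ to a maximum of normalized partial sums requires a convex-geometric argument (nested interval averages, or the Meyer--Woodroofe representation of the monotone-cone projection) to avoid losing an extra logarithmic factor.
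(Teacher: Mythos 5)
Your Gaussian-case argument is correct and takes a genuinely different route from the paper: the cone identity $\|\Pi_{\mathcal{C}_m}(\varepsilon)\|_2=\sup\{\langle\varepsilon,v\rangle:\ v\in\mathcal{C}_m,\ \|v\|_2\le 1\}=:Z_m$ is valid, and the exact statistical-dimension computation $\mathbb{E}\|\Pi_{\mathcal{C}_m}(g)\|_2^2=\sum_{k=1}^m k^{-1}\le 1+\log m$ delivers the Gaussian bound self-containedly, whereas the paper instead identifies the projection norm with $m$ times the isotonic least-squares risk at a constant signal and cites \citet[Theorem~2.2(ii)]{zhang2002risk}. The genuine gap is in the sub-exponential extension, which is the case the lemma is actually invoked for (the squared-Gaussian proxies $\xi_i^{(2)}$ in Lemma~\ref{lem:isotonic_weighted_main}). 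Your claim that the extreme-ray decomposition reduces $Z_m$ to $\max_k |S_k|/\sqrt{k}$ ``up to an absolute constant'' is false: writing $v=v_m\mathbf{1}+\sum_{k<m}(v_k-v_{k+1})u^{(k)}$ gives $\langle\varepsilon,v\rangle\le\big(|v_m|\sqrt{m}+\sum_{k<m}(v_k-v_{k+1})\sqrt{k}\big)\max_j |S_j|/\sqrt{j}$, and the bracketed factor is not uniformly bounded over $\mathcal{C}_m$ intersected with the Euclidean unit ball: for $v_k\propto k^{-1/2}$, normalized so that $\|v\|_2=1$ (proportionality constant $\asymp(\log m)^{-1/2}$), it is of order $\sqrt{\log m}$. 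Your assembly therefore yields only $\mathbb{E}Z_m^2\lesssim\sigma^2(\log m)^2$, one logarithm worse than the lemma — and since the single $\log(em)$ factor is exactly what propagates into the rate of Theorem~\ref{thm:main}, this is the crux of the lemma, not the ``secondary subtlety'' your closing paragraph calls it. You diagnose the failure correctly and name candidate tools (Meyer--Woodroofe, nested interval averages) but never execute them, so the sub-exponential case — the entire content beyond the classical Gaussian computation — remains unproved. (The Bernstein-based bound $\mathbb{E}\max_k S_k^2/k\lesssim\sigma^2\log(em)$ that you do establish is fine; it is the reduction to it that fails.)

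For the record, the repair you gesture at is short and needs even less than sub-exponentiality. By the min--max representation for the nonincreasing cone, $\Pi_{\mathcal{C}_m}(\varepsilon)_i=\min_{h\le i}\max_{j\ge i}\bar\varepsilon_{h:j}$, so choosing $h=1$ gives $(\Pi_{\mathcal{C}_m}(\varepsilon)_i)_+\le\max_{r\ge i}|S_r|/r$, and choosing $j=m$ gives $(\Pi_{\mathcal{C}_m}(\varepsilon)_i)_-\le\max_{h\le i}|\bar\varepsilon_{h:m}|$, a maximum over backward averages of length at least $m-i+1$. Dyadic blocking combined with Doob's $L^2$ maximal inequality (which requires only independent, mean-zero, variance-bounded summands) gives $\mathbb{E}\max_{r\ge i}(S_r/r)^2\le C\sigma^2/i$ and, symmetrically, $C\sigma^2/(m-i+1)$ for the backward averages; summing the two harmonic series over $i$ yields $\mathbb{E}\|\Pi_{\mathcal{C}_m}(\varepsilon)\|_2^2\le c_0\sigma^2\log(em)$. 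This is essentially the mechanism behind the result the paper cites, and it shows the lemma holds under second moments alone — which also dissolves your worry about the linear Bernstein tail at small $k$.
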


\begin{proof}[Proof sketch]
Let $t_i=i$ and consider the isotonic regression model $y_i=f(t_i)+\varepsilon_i$ for $i=1,\dots,m$ with the
constant nondecreasing function $f\equiv 0$.
Let $\widehat f$ be the least-squares isotonic fit under the nondecreasing constraint. Then
$(\widehat f(t_1),\dots,\widehat f(t_m))=\Pi_{\widetilde{\mathcal{C}}_m}(y)$ where
$\widetilde{\mathcal{C}}_m:=\{u\in\mathbb{R}^m:\ u_1\le\cdots\le u_m\}$.
By symmetry, $\Pi_{\widetilde{\mathcal{C}}_m}(y)=-\Pi_{\mathcal{C}_m}(-y)$, and since $y=\varepsilon$,
\(\|\Pi_{\widetilde{\mathcal{C}}_m}(\varepsilon)\|_2=\|\Pi_{\mathcal{C}_m}(\varepsilon)\|_2\).

Under sub-exponential errors with $\|\varepsilon_i\|_{\psi_1}\lesssim \sigma$, the isotonic least squares risk for the constant signal satisfies a logarithmic bound of order $\sigma^2\log(em)/m$ (the Gaussian case is established in \citet[Theorem~2.2(ii)]{zhang2002risk} and the extension to sub-exponential errors follows by a standard truncation/symmetrization argument). Multiplying by $m$ yields the stated projection-norm bound.
\end{proof}

\begin{proof}[Proof of Lemma~\ref{lem:isotonic_main}]
We use the standard projection-onto-a-cone oracle-inequality argument for isotonic regression on the monotone cone,
combined with a bound on the expected squared norm of noise projected onto blockwise monotone cones.
Related sharp oracle inequalities for isotonic least squares (including adaptive bounds for piecewise-constant isotonic
signals and misspecification) can be found in \citet{bellec2018sharp,gao2020estimation}; see also the survey
\citet{guntuboyina2018nonparametric}.

\paragraph{Step 1: deterministic reduction to a tangent-cone noise projection and an $s$-block bound.}
Let $\mathcal{C}:=\{v\in\mathbb{R}^p:\ v_1\ge\cdots\ge v_p\ge 0\}$ and note that the isotonic estimator is the Euclidean
projection $\widehat V=\Pi_{\mathcal{C}}(X)$ with $X=V+\xi$.
The projection optimality condition gives, for all $v\in \mathcal{C}$,
\[
\langle X-\widehat V,\ v-\widehat V\rangle \le 0.
\]
Fix any comparator $u\in \mathcal{C}$ and set $h:=\widehat V-u$. Taking $v=u$ and expanding yields the deterministic inequality
\begin{equation}
\label{eq:proj_oracle_det}
\|\widehat V-V\|_2^2
\le
\|u-V\|_2^2
\;+\;
\Big(2\langle \xi,\ h\rangle-\|h\|_2^2\Big).
\end{equation}
Since $u\in \mathcal{C}$ and $\widehat V\in \mathcal{C}$, we have $h=\widehat V-u\in \mathcal{C}-u\subseteq T_{\mathcal{C}}(u)$, where
\[
T_{\mathcal{C}}(u):=\overline{\{t(v-u):\ t\ge 0,\ v\in \mathcal{C}\}}
\]
is the tangent cone of $\mathcal{C}$ at $u$.
Therefore
\[
2\langle \xi,\ h\rangle-\|h\|_2^2
\le
\sup_{g\in T_{\mathcal{C}}(u)}\Big(2\langle \xi,\ g\rangle-\|g\|_2^2\Big)
=
\|\Pi_{T_{\mathcal{C}}(u)}(\xi)\|_2^2,
\]
where the last equality is the standard variational identity for projections onto a closed convex cone.
Taking expectations in \eqref{eq:proj_oracle_det} gives
\begin{equation}
\label{eq:proj_oracle_exp}
\mathbb{E}\|\widehat V-V\|_2^2
\le
\|u-V\|_2^2
\;+\;
\mathbb{E}\|\Pi_{T_{\mathcal{C}}(u)}(\xi)\|_2^2.
\end{equation}

\paragraph{Tangent-cone decomposition for an $s$-block $u$.}
Let $u\in \mathcal{C}$ have $k\le s$ jumps, i.e. it is constant on $k+1$ contiguous blocks $B_1,\dots,B_{k+1}$ with sizes
$m_b:=|B_b|$ and strict decreases across block boundaries.
For the monotone cone, the only active inequalities at $u$ occur within the constant blocks: within a block
$B_b$ we have $u_i-u_{i+1}=0$, while at a block boundary $u_i-u_{i+1}>0$ and the corresponding inequality is inactive.
Equivalently, first-order feasible perturbations $g$ must preserve monotonicity inside each constant block but are not
constrained across strict boundaries. Hence
$T_{\mathcal{C}}(u)$ decomposes as a Cartesian product of monotone cones on the blocks:
\[
T_{\mathcal{C}}(u)
=
\big\{g\in\mathbb{R}^p:\ g|_{B_b}\in \mathcal{C}_{m_b}\ \text{for each }b\big\},
\qquad
\mathcal{C}_m:=\{w\in\mathbb{R}^m:\ w_1\ge\cdots\ge w_m\}.
\]
Here we have dropped the nonnegativity constraint within each block (replacing the tangent cone of $v_1\ge\cdots\ge v_m\ge 0$ by the larger cone $\mathcal{C}_m$), which can only increase the projection norm and therefore preserves the required upper bound.
Consequently the projection splits blockwise and
\[
\|\Pi_{T_{\mathcal{C}}(u)}(\xi)\|_2^2
=
\sum_{b=1}^{k+1}\ \|\Pi_{\mathcal{C}_{m_b}}(\xi_{B_b})\|_2^2,
\qquad
\mathbb{E}\|\Pi_{T_{\mathcal{C}}(u)}(\xi)\|_2^2
=
\sum_{b=1}^{k+1}\ \mathbb{E}\|\Pi_{\mathcal{C}_{m_b}}(\xi_{B_b})\|_2^2.
\]

\paragraph{Bounding the blockwise projection term.}
Since $\|\xi_i\|_{\psi_1}\lesssim \lambda$ implies finite exponential moments and in particular bounded variances
$\mathrm{Var}(\xi_i)\lesssim \lambda^2$, the isotonic least squares estimator over a constant mean (the case
$V\equiv 0$ on a block) satisfies the classical logarithmic risk bound.
Concretely, applying Lemma~\ref{lem:block_proj_zhang} on a block of length $m$ and using $\mathbb{E}\xi_i^2\lesssim \lambda^2$ gives
\[
\mathbb{E}\|\Pi_{\mathcal{C}_m}(\xi_{1:m})\|_2^2
\ \lesssim\ \lambda^2 \log(em).
\]
Therefore,
\[
\mathbb{E}\|\Pi_{T_{\mathcal{C}}(u)}(\xi)\|_2^2
\ \lesssim\
\lambda^2\sum_{b=1}^{k+1}\log(em_b)
\ \le\
\lambda^2 (k+1)\log\!\Big(\frac{ep}{k+1}\Big)
\ \lesssim\
\lambda^2 s\log\!\Big(\frac{ep}{s}\Big),
\]
where we used $\sum_b m_b=p$ and concavity of $\log$.
Plugging this into \eqref{eq:proj_oracle_exp} and taking the infimum over admissible $u$ yields
\eqref{eq:isotonic_oracle}.
\end{proof}

\begin{lemma}[Weighted isotonic regression as an unweighted projection]
\label{lem:weighted_iso_projection}
Let $w_1,\dots,w_p>0$ and define the weighted inner product and norm
\[
\langle a,b\rangle_w := \sum_{i=1}^p w_i a_i b_i,
\qquad
\|a\|_w^2 := \langle a,a\rangle_w.
\]
Let $\mathcal{C}:=\{v\in\mathbb{R}^p: v_1\ge \cdots \ge v_p\ge 0\}$ be the (nonincreasing, nonnegative) monotone cone.
For any $x\in\mathbb{R}^p$, let
\[
\widehat v \in \arg\min_{v\in \mathcal{C}}\ \|x-v\|_w^2
\]
be the weighted isotonic least squares fit.
Let $W:=\mathrm{diag}(\sqrt{w_1},\dots,\sqrt{w_p})$ and let $\Pi_D(\cdot)$ denote the Euclidean projection onto a
closed convex set $D\subseteq\mathbb{R}^p$.
Then
\[
W\widehat v \;=\; \Pi_{W \mathcal{C}}(W x)
\qquad\text{and hence}\qquad
\widehat v \;=\; W^{-1}\Pi_{W \mathcal{C}}(W x).
\]
\end{lemma}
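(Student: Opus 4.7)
The proof is a change-of-variables argument: rescaling the coordinates by $W$ converts the weighted least-squares problem into an ordinary Euclidean projection problem onto the image cone $W\mathcal{C}$.

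The plan is to introduce $u := W v$ and $y := W x$ and observe that, since $W=\mathrm{diag}(\sqrt{w_1},\dots,\sqrt{w_p})$ has positive entries,
\[
\|x-v\|_w^2
= \sum_{i=1}^p w_i(x_i-v_i)^2
= \|W(x-v)\|_2^2
= \|Wx - Wv\|_2^2
= \|y - u\|_2^2.
\]
Thus the weighted objective $v\mapsto \|x-v\|_w^2$ coincides, under the bijection $v\mapsto Wv$, with the unweighted Euclidean objective $u\mapsto \|y-u\|_2^2$. Because $W$ is an invertible linear map, the constraint $v\in\mathcal{C}$ is equivalent to $u\in W\mathcal{C}$, and the minimization
\[
\min_{v\in \mathcal{C}} \|x-v\|_w^2
\;=\;
\min_{u\in W\mathcal{C}} \|y-u\|_2^2.
\]

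Next, I would verify that $W\mathcal{C}$ is a closed convex set (in fact a closed convex cone): $\mathcal{C}$ is a closed convex cone, and the image of a closed convex cone under an invertible linear map is again a closed convex cone, so the Euclidean projection $\Pi_{W\mathcal{C}}(\cdot)$ is well defined and unique. It follows that the minimizer in the right-hand problem above is $\widehat u = \Pi_{W\mathcal{C}}(Wx)$. By the bijection $v\leftrightarrow Wv$, the unique minimizer $\widehat v$ of the weighted problem satisfies $W\widehat v = \widehat u = \Pi_{W\mathcal{C}}(Wx)$, and applying $W^{-1}$ gives $\widehat v = W^{-1}\Pi_{W\mathcal{C}}(Wx)$, as claimed.

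There is essentially no hard step: the only thing to be careful about is the invertibility of $W$ (ensured by $w_i>0$) so that the change of variables is a bijection preserving both the feasible set structure (closed convex cone) and the norms. Uniqueness of $\widehat v$ follows because $\|\cdot-y\|_2^2$ is strictly convex on the closed convex set $W\mathcal{C}$, so the projection is unique; equivalently, $\|\cdot-x\|_w^2$ is strictly convex in the $w$-inner product on $\mathcal{C}$. No properties of the monotone cone beyond closedness and convexity are used, so the same identity holds for any closed convex set in place of $\mathcal{C}$.
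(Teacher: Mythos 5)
Your proposal is correct and follows essentially the same route as the paper's proof: the change of variables $u = Wv$, the identity $\|x-v\|_w^2 = \|Wx - Wv\|_2^2$, and invertibility of $W$ to identify $W\widehat v$ with the Euclidean projection $\Pi_{W\mathcal{C}}(Wx)$. The only additions you make---explicitly checking that $W\mathcal{C}$ is a closed convex cone and noting uniqueness via strict convexity---are routine verifications the paper leaves implicit, so there is no substantive difference.
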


\begin{proof}
For any $v\in\mathbb{R}^p$ we have $\|x-v\|_w^2=\|W(x-v)\|_2^2$.
Since $W$ is an invertible linear map, minimizing $\|W(x-v)\|_2^2$ over $v\in C$ is equivalent to minimizing
$\|Wx-u\|_2^2$ over $u\in W \mathcal{C}$, with the one-to-one change of variables $u=Wv$.
Thus $u^\star:=W\widehat v$ is the Euclidean projection of $Wx$ onto the closed convex set $W\mathcal{C}$.
\end{proof}

\begin{lemma}[Refinement count]
\label{lem:refinement_count}
Let $u\in\mathbb{R}^p$ be piecewise-constant with at most $s$ jumps (so at most $s+1$ constant blocks), and let
$(I_m)_{m=0}^{M-1}$ be a partition of $\{1,\dots,p\}$ into $M$ (possibly empty) contiguous bins.
Form the refinement by intersecting each constant block of $u$ with each bin $I_m$.
Then the refinement produces at most $(s+1)M$ (possibly empty) sub-blocks.
\end{lemma}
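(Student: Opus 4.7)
The plan is to treat this as a pure counting lemma about common refinements of two partitions into contiguous intervals, with no probabilistic or analytic input required. First, I would fix notation: let $B_1,\dots,B_J$ denote the constant blocks of $u$, where $J\le s+1$ because $u$ has at most $s$ jumps; these blocks partition $\{1,\dots,p\}$ into contiguous intervals. The bins $I_0,\dots,I_{M-1}$ form a second partition of $\{1,\dots,p\}$, also into contiguous intervals (some possibly empty). The refinement described in the lemma is by definition the collection of nonempty sets of the form $B_j\cap I_m$ for $1\le j\le J$ and $0\le m\le M-1$.

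Next I would carry out the count. There are at most $J\cdot M\le (s+1)M$ ordered pairs $(j,m)$, and each contributes at most one member to the refinement, so the refinement has at most $(s+1)M$ (possibly empty) sub-blocks, which is the stated bound. To verify that each such sub-block is a valid contiguous sub-block as required by the applications in Lemma~\ref{lem:isotonic_main} and Lemma~\ref{lem:isotonic_weighted_main}, I would note that the intersection of two intervals of consecutive integers is itself an interval of consecutive integers (possibly empty), so each $B_j\cap I_m$ is contiguous.

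There is essentially no hard step: the argument is a one-line product bound combined with the elementary fact that intersections of discrete intervals remain intervals. If desired, I could also record the sharper counting statement that the common refinement of two contiguous-interval partitions with $J$ and $M$ parts has at most $J+M-1\le s+M$ nonempty parts (by counting breakpoints on $\{1,\dots,p-1\}$), but this refinement is not needed for the downstream block-projection bound, where the loose estimate $(s+1)M$ already yields the correct logarithmic term $s\log(ep/s)$ after applying concavity of $\log$ across the sub-blocks. The main (and only) thing to watch is bookkeeping: making sure empty intersections are allowed in the count, so that the $(s+1)M$ bound is an upper bound rather than an equality.
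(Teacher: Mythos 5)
Your proposal is correct and follows essentially the same argument as the paper: each of the at most $s+1$ constant blocks of $u$ intersects the $M$ disjoint contiguous bins in at most $M$ (possibly empty) sub-intervals, giving the product bound $(s+1)M$. Your aside about the sharper count $J+M-1\le s+M$ via breakpoints is also consistent with the paper, which records exactly this tighter bound in the proof of Lemma~\ref{lem:isotonic_weighted_main} before deliberately using the looser $(s+1)M$ estimate.
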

\begin{proof}
Each constant block of $u$ is a contiguous interval. Intersecting a fixed interval with the $M$ disjoint bins produces
at most $M$ (possibly empty) sub-intervals. Summing over the at most $s+1$ constant blocks gives the bound.
\end{proof}

\begin{lemma}[Weighted isotonic risk bound for the squared-Gaussian variance proxy]
\label{lem:isotonic_weighted_main}
Let $\nu>0$ and assume $Y_i\sim N(\theta_i,\nu)$ independently.
Define \(V_i:=\theta_i^2\) and suppose \(V_1\ge\cdots\ge V_p\ge 0\) with \(V_1\le R\).

\paragraph{A cross-fit variance-profile estimator.}
Draw i.i.d.\ auxiliary noises \(Z_i\sim N(0,\nu)\), independent of \((Y_i)\), and form the Gaussian split
\[
Y_i^{(1)}:=\frac{Y_i+Z_i}{\sqrt 2},\qquad
Y_i^{(2)}:=\frac{Y_i-Z_i}{\sqrt 2},
\]
so that \(\big(Y_i^{(1)}\big)_{i\le p}\) and \(\big(Y_i^{(2)}\big)_{i\le p}\) are independent and each satisfies
\(Y_i^{(a)}\sim N(\theta_i,\nu)\).
Define variance proxies
\[
X_i^{(a)} := \big(Y_i^{(a)}\big)^2-\nu,\qquad a\in\{1,2\},
\]
so \(\mathbb{E}_\theta[X_i^{(a)}]=V_i\).

Compute a coarse pilot \(\widetilde V\) from fold \(a=1\) by unweighted isotonic regression:
\[
\widetilde V
\in
\arg\min_{v_1\ge\cdots\ge v_p\ge 0}\ \sum_{i=1}^p \big(X_i^{(1)}-v_i\big)^2,
\]
and cap it by \(\widetilde V^{\,\cap}_i := \min\{\widetilde V_i,\,R\}\).
Let \(M:=1+\left\lceil \log_2\!\left(1+\frac{R}{\nu}\right)\right\rceil\) and define dyadic thresholds
\(t_m:=2^m\nu\) for \(m=0,1,\dots,M\) (so \(t_0=\nu\) and \(t_M> R+\nu\)).
Using the monotonicity of \(\widetilde V^{\,\cap}\), define the (possibly empty) contiguous index blocks
\[
I_m:=\left\{i\le p:\ t_m \le \widetilde V^{\,\cap}_i+\nu < t_{m+1}\right\},
\qquad m=0,1,\dots,M-1,
\]
which form a partition of \(\{1,\dots,p\}\) (some blocks may be empty).
Assume that these pilot-based bins satisfy the binning-scale regularity condition (Assumption~\ref{assump:B1}), i.e.
\[
\text{for all } m\in\{0,\dots,M-1\}\text{ and all } i\in I_m,\qquad
t_m \le V_i+\nu < t_{m+1}=2t_m.
\tag{B1}
\]
On fold \(a=2\), define the raw weighted isotonic least squares fit \(\widehat V^{\mathrm{raw}}\) by
\[
\widehat V^{\mathrm{raw}}
\in
\arg\min_{v_1\ge\cdots\ge v_p\ge 0}\ \sum_{m=0}^{M-1}\ \sum_{i\in I_m}\ \frac{1}{t_m}\,\big(X_i^{(2)}-v_i\big)^2.
\]
Define the final variance-profile estimator by truncation at the known upper bound \(R\)
\[
\widehat V_i := \min\{\widehat V_i^{\mathrm{raw}},\,R\},\qquad i=1,\dots,p.
\]
Then there exists a universal constant \(C>0\) such that for every \(1\le s\le p\),
\begin{equation}
\label{eq:isotonic_weighted}
\mathbb{E}_\theta\sum_{i=1}^p \frac{(\widehat{V}_i-V_i)^2}{V_i+\nu}
\ \le\
C\,\nu\, s\log\!\Big(\frac{ep}{s}\Big)\,
\log\!\Big(1+\frac{R}{\nu}\Big)
\ +\
C\,\inf_{\substack{u\in\mathbb{R}^p:\\ u_1\ge\cdots\ge u_p\ge 0\\ \#\{i:u_i\neq u_{i+1}\}\le s}}
\sum_{i=1}^p \frac{(u_i-V_i)^2}{V_i+\nu}.
\end{equation}
\end{lemma}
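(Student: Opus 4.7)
The plan is to exploit the cross-fit structure so that the weights $w_i = 1/t_{m(i)}$ are independent of the fold-two residuals $\xi^{(2)}_i := X_i^{(2)} - V_i$, and then combine a weighted projection oracle inequality on the monotone cone with a sub-cone refinement to match the $\log(ep/s)\log(1+R/\nu)$ rate. Specifically, the Gaussian split makes the fold-one pilot $\widetilde V$ (and the bins $(I_m)$) independent of the fold-two proxies, so I condition on fold one and on the binning event (B1) throughout. Under (B1) the weights satisfy $1/(2(V_i+\nu)) \le w_i \le 1/(V_i+\nu)$, so the weighted loss $\|\cdot\|_w^2$ is equivalent up to a factor of $2$ to the target loss $\sum_i (\cdot)^2/(V_i+\nu)$; this equivalence simultaneously converts both the bias term and the variance term of the oracle inequality into the forms appearing on the two sides of \eqref{eq:isotonic_weighted}.

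For the weighted isotonic fit I use the standard projection oracle inequality on $\mathcal{C}$ in the $\langle\cdot,\cdot\rangle_w$ geometry: for any $u \in \mathcal{C}$,
\[
\|\widehat V^{\mathrm{raw}} - V\|_w^2 \ \le\ \|u - V\|_w^2 + \|\Pi^{(w)}_{T_\mathcal{C}(u)}(\xi^{(2)})\|_w^2.
\]
Taking $u$ to be an arbitrary nonincreasing, nonnegative $s$-block comparator, $T_\mathcal{C}(u)$ is contained in the product cone $\prod_b \mathcal{C}_{m_b}$ over its $k+1 \le s+1$ constant blocks $B_b$. Refining by intersecting each constant block with the bins $(I_m)$, the inclusion $\mathcal{C}_{|B|} \subseteq \prod_m \mathcal{C}_{|B \cap I_m|}$ (global monotonicity implies blockwise monotonicity) together with the fact that $K_1 \subseteq K_2$ implies $\|\Pi_{K_1}(x)\|_2 \le \|\Pi_{K_2}(x)\|_2$ (from the Moreau decomposition) reduces the noise projection to a sum of standard (unweighted) isotonic projections on at most $(s+1)M$ sub-blocks of lengths $m_{b,m}$, since weights are constant within each sub-block and positive scaling leaves $\mathcal{C}_m$ fixed.

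On each sub-block the rescaled noise $t_m^{-1/2}\xi^{(2)}_i$ is mean zero, and the squared-Gaussian identity $\mathrm{Var}(\xi^{(2)}_i) = 2\nu(2V_i+\nu)$ combined with the dyadic bound $V_i+\nu < 2t_m$ from (B1) yields $\mathrm{Var}(t_m^{-1/2}\xi^{(2)}_i) \lesssim \nu$ and $\|t_m^{-1/2}\xi^{(2)}_i\|_{\psi_1} \lesssim \sqrt\nu$ uniformly. Lemma~\ref{lem:block_proj_zhang} then gives $\mathbb{E}\|\Pi_{\mathcal{C}_{m_{b,m}}}(\cdot)\|_2^2 \lesssim \nu \log(e m_{b,m})$ on each sub-block. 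Summing with $\sum_{b,m} m_{b,m} \le p$ and at most $(s+1)M$ nonempty sub-blocks, concavity of $\log$ gives $\sum_{b,m} \nu\log(e m_{b,m}) \lesssim \nu(s+1)M\log(ep/((s+1)M)) \lesssim \nu s M \log(ep/s)$, and $M = 1 + \lceil\log_2(1+R/\nu)\rceil$ produces the stated variance rate $\nu s \log(ep/s) \log(1+R/\nu)$. The bias term $\|u-V\|_w^2$ converts by weight equivalence into $\le \sum_i (u_i-V_i)^2/(V_i+\nu)$; taking the infimum over admissible $u$ completes the weighted oracle bound, and the truncation $\widehat V_i = \min\{\widehat V_i^{\mathrm{raw}}, R\}$ only tightens it because $V_i \le R$.

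The main technical obstacle is the sub-cone refinement step: one must verify both that $\mathcal{C}_{|B|} \subseteq \prod_m \mathcal{C}_{|B\cap I_m|}$ carries through the weighted projection (via Moreau plus the positive-scaling invariance of the monotone cone), and that the squared-Gaussian proxies have the uniform sub-exponential scale $\sqrt\nu$ demanded by Lemma~\ref{lem:block_proj_zhang} on (B1). A secondary concern is keeping the oracle inequality rigorous despite the data-dependent (fold-one-determined) weights, which is handled by conditioning on fold one throughout and invoking independence of $\xi^{(2)}$ from $w$.
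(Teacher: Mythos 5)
Your proposal is correct and follows essentially the same route as the paper's proof: conditioning on the fold-one sigma-field so the bins and weights are deterministic, the projection oracle inequality on the monotone cone in the weighted geometry (the paper phrases this equivalently via the map $W=\mathrm{diag}(\sqrt{w_i})$ and Lemma~\ref{lem:weighted_iso_projection}), enlargement of the $s$-block tangent cone to a product cone refined by the dyadic bins into at most $(s+1)M$ sub-blocks on which the weights are constant (so positive scaling fixes the cone), Lemma~\ref{lem:block_proj_zhang} with $\sigma^2\asymp\nu$ on each sub-block, summation by concavity of $\log$, and the factor-2 weight equivalence plus the observation that truncation at $R$ only decreases the loss. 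Two small notes: your weight bounds are inverted --- under \textup{(B1)}, $t_{m(i)}\le V_i+\nu<2t_{m(i)}$ gives $1/(V_i+\nu)\le w_i<2/(V_i+\nu)$, not $1/(2(V_i+\nu))\le w_i\le 1/(V_i+\nu)$, though the factor-2 equivalence you actually invoke is unaffected --- and your bin-local tail computation $\|t_{m}^{-1/2}\xi_i^{(2)}\|_{\psi_1}\lesssim\sqrt{\nu}$ (using $\|\xi_i^{(2)}\|_{\psi_1}\lesssim\nu+\sqrt{\nu V_i}$ together with \textup{(B1)}, rather than the crude $V_i\le R$) is in fact sharper than the paper's $\sqrt{\nu}\,(1+\sqrt{R/\nu})$ and matches the uniform sub-exponential hypothesis of Lemma~\ref{lem:block_proj_zhang} exactly, quietly repairing a mild looseness in the paper's Step~2.
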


\begin{proof}
We work conditional on the pilot/binning sigma-field so that the weights are deterministic, and we use only
sub-exponential concentration (Bernstein-type bounds) appropriate for the squared-Gaussian proxy noise.

\paragraph{Step 0: second-fold noise decomposition and tail scale.}
The Gaussian split ensures that \(Y_i^{(2)}\sim N(\theta_i,\nu)\), so we may write
\[
Y_i^{(2)}=\theta_i+\varepsilon_i^{(2)},\qquad \varepsilon_i^{(2)}\sim N(0,\nu),
\]
independently across \(i\). Hence \(X_i^{(2)}=(Y_i^{(2)})^2-\nu\) satisfies
\[
X_i^{(2)} = V_i + \xi_i^{(2)},
\qquad
\xi_i^{(2)} := 2\theta_i\varepsilon_i^{(2)} + \big((\varepsilon_i^{(2)})^2-\nu\big),
\]
with \(\xi_i^{(2)}\) independent, mean zero. Moreover,
\[
\mathrm{Var}(\xi_i^{(2)})=2\nu^2+4\nu V_i \ \lesssim\ \nu(V_i+\nu),
\qquad
\|\xi_i^{(2)}\|_{\psi_1}\ \lesssim\ \nu + \sqrt{\nu V_i}\ \le\ \nu + \sqrt{\nu R}.
\]
(The \(\psi_1\) bound follows from sub-exponentiality of \((\varepsilon_i^{(2)})^2-\nu\) and the product
term \(2\theta_i\varepsilon_i^{(2)}\), using \(V_i\le R\).)

More explicitly: if \(\varepsilon\sim N(0,\nu)\), then \((\varepsilon^2-\nu)\) is sub-exponential with
\(\|\,\varepsilon^2-\nu\,\|_{\psi_1}\lesssim \nu\). Also, \(2\theta_i\varepsilon_i^{(2)}\) is sub-Gaussian with
\(\|2\theta_i\varepsilon_i^{(2)}\|_{\psi_2}\lesssim |\theta_i|\sqrt{\nu}\), hence sub-exponential with
\(\|2\theta_i\varepsilon_i^{(2)}\|_{\psi_1}\lesssim |\theta_i|\sqrt{\nu}=\sqrt{\nu V_i}\). By the triangle inequality
for Orlicz norms, \(\|\xi_i^{(2)}\|_{\psi_1}\le \|2\theta_i\varepsilon_i^{(2)}\|_{\psi_1}+\|(\varepsilon_i^{(2)})^2-\nu\|_{\psi_1}
\lesssim \nu+\sqrt{\nu V_i}\).

\paragraph{Step 1: set up weighted isotonic regression in the correct geometry.}
Let \(\mathcal{F}_1\) denote the sigma-field generated by the pilot/binning step. Conditional on \(\mathcal{F}_1\),
the partition \((I_m)\) and the weights \(w_i:=t_{m(i)}^{-1}\) are deterministic. By \textup{(B1)}, for every \(i\),
\[
\frac{1}{V_i+\nu}\ \le\ w_i\ \le\ \frac{2}{V_i+\nu}.
\]
Write \(\mathcal{C}:=\{v\in\mathbb{R}^p:\ v_1\ge\cdots\ge v_p\ge 0\}\). The raw estimator \(\widehat V^{\mathrm{raw}}\) is the
weighted projection of \(X^{(2)}\) onto \(\mathcal{C}\) in the weighted norm \(\|\cdot\|_w^2=\sum_i w_i(\cdot)_i^2\).
Equivalently, by Lemma~\ref{lem:weighted_iso_projection}, with \(W:=\mathrm{diag}(\sqrt{w_1},\dots,\sqrt{w_p})\),
\[
W\widehat V^{\mathrm{raw}}=\Pi_{W\mathcal{C}}(W X^{(2)}).
\]
In particular,
\[
\sum_{i=1}^p w_i\big(\widehat V^{\mathrm{raw}}_i-V_i\big)^2
=
\big\|W\widehat V^{\mathrm{raw}}-WV\big\|_2^2.
\]
Finally, since \(V_i\le R\) for all \(i\), the truncation \(\widehat V_i=\min\{\widehat V_i^{\mathrm{raw}},R\}\)
can only decrease the squared error coordinatewise, hence also in the weighted loss:
for every \(i\), \((\widehat V_i-V_i)^2\le(\widehat V_i^{\mathrm{raw}}-V_i)^2\), so
\(\sum_i w_i(\widehat V_i-V_i)^2\le \sum_i w_i(\widehat V_i^{\mathrm{raw}}-V_i)^2\).
Therefore it suffices to control \(\widehat V^{\mathrm{raw}}\).

\paragraph{Step 2: a noise normalization that yields uniformly bounded variance.}
Recall \(X_i^{(2)}=V_i+\xi_i^{(2)}\) with \(\xi_i^{(2)}\) independent, mean zero, and
\(\mathrm{Var}(\xi_i^{(2)})\lesssim \nu(V_i+\nu)\). Define the normalized noise \(\eta:=W\xi^{(2)}\), i.e.
\(\eta_i=\sqrt{w_i}\,\xi_i^{(2)}\). Under \textup{(B1)},
\[
\mathrm{Var}(\eta_i)\ =\ w_i\,\mathrm{Var}(\xi_i^{(2)})\ \lesssim\ \nu,
\]
uniformly in \(i\). The sub-exponential norm satisfies \(\|\xi_i^{(2)}\|_{\psi_1}\lesssim \nu+\sqrt{\nu R}\) and
\(w_i\le 2/(V_i+\nu)\le 2/\nu\), so \(\|\eta_i\|_{\psi_1}\lesssim \sqrt{2/\nu}\cdot(\nu+\sqrt{\nu R})\lesssim \sqrt{\nu}(1+\sqrt{R/\nu})\), which is bounded (though depending on \(R/\nu\)).
Crucially, the variance bound \(\mathrm{Var}(\eta_i)\lesssim \nu\) together with the sub-exponential tail control
are what we use to justify applying Lemma~\ref{lem:block_proj_zhang} on each refined block below.

\paragraph{Step 3: an oracle inequality for the weighted projection via a bin-refinement argument.}
Write \(Y:=W X^{(2)}=WV+\eta\) and note \(WV\in W\mathcal{C}\). Since \(W\widehat V^{\mathrm{raw}}=\Pi_{W\mathcal{C}}(Y)\),
the same projection argument as in Lemma~\ref{lem:isotonic_main} (but in Euclidean norm on the transformed space)
gives, for any comparator \(u\in \mathcal{C}\),
\[
\|W\widehat V^{\mathrm{raw}}-WV\|_2^2
\le
\|Wu-WV\|_2^2
\;+\;
\|\Pi_{T_{W\mathcal{C}}(Wu)}(\eta)\|_2^2,
\]
where \(T_{W\mathcal{C}}(Wu)\) is the tangent cone of \(W\mathcal{C}\) at \(Wu\). Because \(W\) is invertible and linear,
\(T_{W\mathcal{C}}(Wu)=W T_{\mathcal{C}}(u)\).

To bound the noise-projection term, we upper bound the cone \(WT_{\mathcal{C}}(u)\) by a larger cone that decouples along the
dyadic bins. Let \(u\) have at most \(s\) jumps, so it is constant on at most \(s+1\) contiguous blocks.
Refine each constant block by intersecting it with the dyadic bins \((I_m)_{m=0}^{M-1}\). Since both the constant blocks
and the dyadic bins form partitions of \(\{1,\ldots,p\}\), the number of non-empty refined sub-blocks is at most
\(s+M\) (each new boundary comes from either a jump in \(u\) or a dyadic threshold, giving at most \(s+(M-1)+1=s+M\)
boundary points, hence at most \(s+M\) sub-blocks). We use the looser bound \(B\le (s+1)M\) which suffices for
our purposes. Enlarging the cone by dropping the
monotonicity constraints across sub-block boundaries yields an (orthogonal) product cone, hence
\[
 \|\Pi_{W T_{\mathcal{C}}(u)}(\eta)\|_2^2
\ \le\
 \sum_{b=1}^{B}\ \|\Pi_{\mathcal{C}_{m_b}}(\eta_{B_b})\|_2^2,
\qquad B\le (s+1)M,
\]
 where \(B_b\) are the refined sub-blocks, \(m_b:=|B_b|\), and \(\mathcal{C}_{m_b}\) is the monotone cone on the sub-block
(nonincreasing sequences in \(\mathbb{R}^{m_b}\)).
Here we used that the weights \(w_i\) are constant on each dyadic bin \(I_m\), hence constant on each refined sub-block
\(B_b\subseteq I_m\); therefore, the restriction of \(W\) to \(B_b\) is a positive scalar multiple of the identity, and
the corresponding scaled monotone cone is the same cone \(\mathcal{C}_{m_b}\) (as a cone).
On each sub-block, the entries of \(\eta_{B_b}\) are independent, mean zero, and satisfy
\(\mathrm{Var}(\eta_i)\lesssim \nu\) and \(\|\eta_i\|_{\psi_1}\lesssim \sqrt{\nu}(1+\sqrt{R/\nu})\).
 Applying Lemma~\ref{lem:block_proj_zhang} on each sub-block (with \(\sigma^2\asymp \nu\)) gives
 \(\mathbb{E}\|\Pi_{\mathcal{C}_{m_b}}(\eta_{B_b})\|_2^2 \lesssim \nu\log(e m_b)\).
Summing over \(b\) and using \(\sum_b m_b=p\) and concavity of \(\log\) yields
\[
\mathbb{E}\|\Pi_{W T_{\mathcal{C}}(u)}(\eta)\|_2^2
\ \lesssim\
\nu\,B\,\log\!\Big(\frac{ep}{B}\Big)
\ \lesssim\
\nu\, s\,\log\!\Big(\frac{ep}{s}\Big)\,\log\!\Big(1+\frac{R}{\nu}\Big),
\]
where the last step uses \(B\le (s+1)M\) and the elementary bound
\[
B\log\!\Big(\frac{ep}{B}\Big)
\le
\max\Big\{s\log\!\Big(\frac{ep}{s}\Big),\ (s+1)M\log\!\Big(\frac{ep}{s}\Big)\Big\}
\lesssim
s\,\log\!\Big(\frac{ep}{s}\Big)\,M,
\]
since for any \(a\ge 1\),
\((as)\log\!\big(ep/(as)\big)=as\big(\log(ep/s)-\log a\big)\le as\log(ep/s)\), and \(M\asymp \log(1+R/\nu)\).

\paragraph{Step 4: return to the self-normalized loss and include approximation.}
Under \textup{(B1)}, we have \(w_i\asymp 1/(V_i+\nu)\) within a factor \(2\), hence
\[
\sum_{i=1}^p \frac{(\widehat V_i-V_i)^2}{V_i+\nu}
\ \le\
\sum_{i=1}^p w_i(\widehat V_i-V_i)^2.
\]
Likewise, for any \(u\in C\),
\(\sum_i w_i(u_i-V_i)^2 \le 2\sum_i (u_i-V_i)^2/(V_i+\nu)\),
since under \textup{(B1)} we have \(w_i\le 2/(V_i+\nu)\).
Combining the oracle inequality with the bound on the noise-projection term yields \eqref{eq:isotonic_weighted}.
\end{proof}

\begin{lemma}[Minimax lower bound for ordered sparse class]
\label{lem:lower_main}
Assume $\sigma^2/n \le R$. There exists a universal constant $c>0$ such that
\[
\inf_{\tilde{\theta}}
\sup_{\theta\in\Theta_{\downarrow}(s,R)}
\mathbb{E}_\theta\|\tilde{\theta}-\theta\|_2^2
\ \ge\
c\, s\frac{\sigma^2}{n}.
\]
\end{lemma}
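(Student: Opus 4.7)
The plan is to reduce to a sign-hypercube Assouad argument, using the key fact that membership in $\Theta_{\downarrow}(s,R)$ is determined only by the squared magnitudes $\theta_i^2$, so arbitrary sign flips preserve membership. This lets us embed a full $2^s$-point hypercube inside the class, which is more than enough to extract the parametric rate $s\sigma^2/n$ (no $\log(p/s)$ is needed).

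First I would fix a common amplitude $a:=c\,\sigma/\sqrt{n}$ for a sufficiently small universal constant $c\in(0,1)$. Under the hypothesis $\sigma^2/n\le R$, we have $a^2\le c^2 R\le R$, so the sign-pattern family
\[
\theta_{\omega,i}:=a\,\omega_i\ \text{for } i\le s,\qquad \theta_{\omega,i}:=0\ \text{for } i>s,\qquad \omega\in\{-1,+1\}^s,
\]
is contained in $\Theta_{\downarrow}(s,R)$: each $\theta_\omega$ satisfies $\theta_{\omega,i}^2=a^2$ on the first $s$ coordinates (a constant, hence nonincreasing, squared-magnitude profile) with $\theta_{\omega,1}^2=a^2\le R$, and vanishes for $i>s$, so it has exactly $k=s$ active coordinates.

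Next I would apply Assouad's lemma to this hypercube under the Gaussian sequence model $Y_i=\theta_i+\varepsilon_i$, $\varepsilon_i\sim N(0,\sigma^2/n)$. For neighbors $\omega,\omega'$ differing in one coordinate the product-measure Kullback--Leibler divergence is $\mathrm{KL}(P_\omega\|P_{\omega'})=(2a)^2/(2\sigma^2/n)=2c^2$, so by Pinsker the neighbor total variation is at most $c$. Choosing $c$ small enough that this is bounded by a fixed $\alpha<1$, Assouad's inequality gives
\[
\inf_{\tilde\theta}\sup_{\theta\in\Theta_{\downarrow}(s,R)}\mathbb{E}_\theta\|\tilde\theta-\theta\|_2^2
\ \ge\
\inf_{\tilde\theta}\max_{\omega\in\{-1,+1\}^s}\mathbb{E}_\omega\|\tilde\theta-\theta_\omega\|_2^2
\ \ge\
c_1\,s\,a^2\ \asymp\ s\,\frac{\sigma^2}{n},
\]
for a universal $c_1>0$, which is the stated bound.

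There is no real obstacle here: the only step needing attention is the verification that the sign hypercube really sits inside $\Theta_{\downarrow}(s,R)$, and this is immediate since the ordering constraint concerns $\theta_i^2$ and all $s$ active squared magnitudes equal $a^2\le R$. Everything else is a standard off-the-shelf Assouad-type bound; in particular, the monotonicity constraint does not tighten the lower bound at the parametric-rate level since we use a constant-magnitude configuration. If desired, the same conclusion can be obtained by $s$ independent two-point Le Cam comparisons $\theta_i=\pm a$ coordinate by coordinate and summing, which makes the reduction even more transparent.
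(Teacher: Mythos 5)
Your proposal is correct and follows essentially the same route as the paper's proof: the same sign hypercube $\{\pm a\}^s\times\{0\}^{p-s}$ embedded in $\Theta_{\downarrow}(s,R)$ (valid since the class constrains only the squared magnitudes), followed by Assouad's lemma. The only cosmetic difference is that you bound the neighbor total variation via Kullback--Leibler and Pinsker with a small amplitude $a=c\,\sigma/\sqrt{n}$, whereas the paper takes $a=\sigma/\sqrt{n}$ and computes the Gaussian total variation exactly as $1-2\Phi(-1)$; both yield the stated bound with a universal constant.
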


\begin{proof}
Let $\lambda=\sigma^2/n$ and set $a=\sqrt{\lambda}\le \sqrt{R}$.
Consider the subset of parameters
\[
\Theta^\pm := \left\{\theta\in\mathbb{R}^p:\ \theta_i\in\{+a,-a\}\ (i=1,\dots,s),\ \theta_i=0\ (i>s)\right\}.
\]
Then $\Theta^\pm\subset\Theta_{\downarrow}(s,R)$ since $\theta_1^2=\cdots=\theta_s^2=a^2\le R$ and $\theta_i=0$ for $i>s$.

For each $u\in\{\pm1\}^s$, write $\theta^{(u)}_i=u_i a$ for $i\le s$ and $\theta^{(u)}_i=0$ for $i>s$,
and let $P_u$ denote the joint law of $Y$ under $\theta=\theta^{(u)}$ in model \eqref{eq:sequence}.
For two vectors $u,u'$ that differ in exactly one coordinate, say $u_j\neq u'_j$ and $u_i=u'_i$ for $i\neq j$,
the likelihoods differ only in coordinate $j$, and
\[
\mathrm{TV}(P_u,P_{u'})=\mathrm{TV}\!\left(N(a,\lambda),N(-a,\lambda)\right)
=2\Phi\!\left(\frac{a}{\sqrt{\lambda}}\right)-1
=1-2\Phi\!\left(-\frac{a}{\sqrt{\lambda}}\right)=1-2\Phi(-1),
\]
where $\Phi$ is the standard normal cdf.
By Assouad's lemma (applied to the hypercube $\{\pm1\}^s$ with squared $\ell_2$ loss),
\[
\inf_{\tilde{\theta}}\ \sup_{u\in\{\pm1\}^s}\ \mathbb{E}_{u}\|\tilde{\theta}-\theta^{(u)}\|_2^2
\ \ge\
\frac{s a^2}{4}\left(1-\mathrm{TV}(P_u,P_{u'})\right)
\ =\
\frac{s\lambda}{4}\left(2\Phi(-1)\right).
\]
Since $\Theta^\pm\subset\Theta_{\downarrow}(s,R)$, the same bound holds for the supremum over $\Theta_{\downarrow}(s,R)$.
This gives the claim with $c=\Phi(-1)/2$.
\end{proof}

\subsection*{Proof of Theorem~\ref{thm:main} (known $\sigma^2$)}

\begin{proof}
Fix $\theta\in\Theta_{\downarrow}(s,R)$ and write $\lambda=\sigma^2/n$.
We use the Gaussian cloning device described above to create two independent samples
$Y^{(+)}$ and $Y^{(-)}$, each distributed as $N(\theta,2\lambda I_p)$.
Define the (deterministic) variance proxy sequence $V_i:=\theta_i^2$ and note that
$V_1\ge\cdots\ge V_p\ge 0$ under $\Theta_{\downarrow}(s,R)$.

Construct the cross-fit weighted isotonic estimator $\widehat V$ from $Y^{(+)}$ as in
Lemma~\ref{lem:isotonic_weighted_main} with $\nu=2\lambda$, so that $\widehat V$ is independent of the
independent clone $Y^{(-)}$ to which shrinkage is applied.
Finally define the shrinkage estimator (applied to the independent clone $Y^{(-)}$)
\[
\widehat{\theta}_i := g_{2\lambda}(\widehat{V}_i)\,Y_i^{(-)},
\qquad g_{2\lambda}(x)=\frac{x}{x+2\lambda}.
\]
Define the oracle shrinkage estimator (same form with $V_i$ in place of $\widehat{V}_i$) by
\[
\theta_i^{\mathrm{or}}:= g_{2\lambda}(V_i)\,Y_i^{(-)}.
\]
We decompose the risk as
\begin{equation}
\label{eq:decomp_main}
\mathbb{E}_\theta\|\widehat{\theta}-\theta\|_2^2
\ \le\
2\,\mathbb{E}_\theta\|\widehat{\theta}-\theta^{\mathrm{or}}\|_2^2
+
2\,\mathbb{E}_\theta\|\theta^{\mathrm{or}}-\theta\|_2^2.
\end{equation}

\paragraph{Step 1: Oracle risk bound.}
For each $i$, a standard bias--variance calculation with noise variance $2\lambda$ gives
\[
\mathbb{E}_\theta\big(\theta_i^{\mathrm{or}}-\theta_i\big)^2
=
\underbrace{2\lambda\Big(\frac{V_i}{V_i+2\lambda}\Big)^2}_{\text{variance}}
+
\underbrace{\theta_i^2\Big(\frac{2\lambda}{V_i+2\lambda}\Big)^2}_{\text{bias}^2}
\ =\
2\lambda\,\frac{V_i}{V_i+2\lambda}.
\]
Summing over $i$ yields
\begin{equation}
\label{eq:oracle_main}
\mathbb{E}_\theta\|\theta^{\mathrm{or}}-\theta\|_2^2
\ \le\
\sum_{i=1}^p 2\lambda\,\frac{V_i}{V_i+2\lambda}.
\end{equation}
Now use the structure of $\Theta_{\downarrow}(s,R)$: there exists $k\le s$ with $\theta_i=0$ for $i>k$ and
$\theta_i^2\le R$ for $i\le k$. Hence
\[
\sum_{i=1}^p 2\lambda\,\frac{V_i}{V_i+2\lambda}
=
\sum_{i=1}^k 2\lambda\,\frac{\theta_i^2}{\theta_i^2+2\lambda}
\ \le\
\sum_{i=1}^k 2\lambda\log\!\Big(1+\frac{\theta_i^2}{2\lambda}\Big)
\ \le\
2s\,\lambda\,\log\!\Big(1+\frac{R}{2\lambda}\Big)
\ \le\
2s\,\lambda\,\log\!\Big(1+\frac{R}{\lambda}\Big),
\]
where we used the elementary inequality $\log(1+x)\ge x/(1+x)$ with $x=\theta_i^2/(2\lambda)$.
Therefore
\begin{equation}
\label{eq:oracle_rate_main}
\mathbb{E}_\theta\|\theta^{\mathrm{or}}-\theta\|_2^2
\ \le\
2s\frac{\sigma^2}{n}\log\!\Big(1+\frac{Rn}{\sigma^2}\Big).
\end{equation}

\paragraph{Step 2: Excess risk from estimating $\tau^2$ by isotonic EB.}
Since $\widehat{V}$ is computed from $Y^{(+)}$ and applied to the independent clone $Y^{(-)}$,
Lemma~\ref{lem:risktransfer_main} yields (and conditioning on \textup{(B1)} is valid since \textup{(B1)} is measurable
with respect to $Y^{(+)}$ and hence independent of the clone $Y^{(-)}$)
\[
\mathbb{E}_\theta\!\left[\|\widehat{\theta}-\theta^{\mathrm{or}}\|_2^2 \,\middle|\, \textup{(B1)}\right]
\ \le\
\sum_{i=1}^p \mathbb{E}\Bigg[\frac{(\widehat{V}_i-V_i)^2}{V_i+2\lambda}\Bigg]
\]
To bound the right-hand side, we apply the weighted isotonic oracle inequality in
Lemma~\ref{lem:isotonic_weighted_main} to the cross-fit variance estimator \(\widehat V\) constructed
from \(Y^{(+)}\) (with \(\nu=2\lambda\)). Since $\theta\in\Theta_{\downarrow}(s,R)$, the deterministic
sequence $V_i=\theta_i^2$ is nonincreasing and satisfies $V_1\le R$. Moreover, by Definition~\ref{def:class},
there exists $k\le s$ such that $V_{k+1}=\cdots=V_p=0$, so $V$ is piecewise-constant with at most $k\le s$ jumps.
Hence the approximation term in \eqref{eq:isotonic_weighted} vanishes by taking $u=V$. Therefore,
\begin{equation}
\label{eq:excess_rate_main}
\mathbb{E}_\theta\!\left[\|\widehat{\theta}-\theta^{\mathrm{or}}\|_2^2 \,\middle|\, \textup{(B1)}\right]
\ \lesssim\
\lambda\, s\log\!\Big(\frac{ep}{s}\Big)\,
\log\!\Big(1+\frac{R}{\lambda}\Big)
\ =\
s\frac{\sigma^2}{n}\log\!\Big(\frac{ep}{s}\Big)\,
\log\!\Big(1+\frac{Rn}{\sigma^2}\Big).
\end{equation}

\paragraph{Step 3: Combine bounds.}
Substituting \eqref{eq:oracle_rate_main} and \eqref{eq:excess_rate_main} into \eqref{eq:decomp_main}
gives
\[
\mathbb{E}_\theta\!\left[\|\widehat{\theta}-\theta\|_2^2 \,\middle|\, \textup{(B1)}\right]
\ \le\
\;C\, s\frac{\sigma^2}{n}\log\!\Big(\frac{ep}{s}\Big)\,
\log\!\Big(1+\frac{Rn}{\sigma^2}\Big),
\]
for a universal constant $C$.

\begin{remark}[Tighter decomposition]
More precisely, the oracle risk contributes $O(s\lambda\log(1+R/\lambda))$ (one log factor),
while the excess risk from variance estimation contributes $O(s\lambda\log(ep/s)\log(1+R/\lambda))$ (two log factors).
The stated bound absorbs both into the two-log-factor form; when $\log(ep/s)\gg 1$ (the typical regime), this is tight.
\end{remark}

\end{proof}

\section{Unknown noise variance}

We now consider the Gaussian sequence model
\begin{equation}
Y_i = \theta_i + \varepsilon_i,
\qquad
\varepsilon_i \stackrel{\text{i.i.d.}}{\sim} N(0, \sigma^2/n),
\quad i=1,\dots,p,
\label{eq:sequence_unknown_sigma}
\end{equation}
where both $\theta = (\theta_1,\dots,\theta_p)$ and $\sigma^2 > 0$ are unknown.

The marginal likelihood under the hierarchical model is
\[
p(Y \mid \tau^2, \sigma^2)
=
\prod_{i=1}^p
\frac{1}{\sqrt{2\pi(\tau_i^2 + \sigma^2/n)}}
\exp\!\left(
-\frac{Y_i^2}{2(\tau_i^2 + \sigma^2/n)}
\right).
\]

To obtain a fully rigorous risk bound without dependence-ignoring steps, we use a cross-fit
empirical Bayes construction that makes the variance-profile estimate independent of the Gaussian sample
to which shrinkage is applied.

Since $\theta\in\Theta_{\downarrow}(s,R)$ implies $\theta_i=0$ for all $i>s$, we estimate $\sigma^2$
from the tail coordinates:
\begin{equation}
\widehat{\sigma}^2
:=
\frac{n}{p-s}\sum_{i=s+1}^p Y_i^2,
\qquad
\widehat{\lambda}:=\widehat{\sigma}^2/n.
\label{eq:sigma_hat_tail}
\end{equation}

Next, define the variance proxies on the first $s$ coordinates
\(
X_i:=Y_i^2-\widehat{\lambda}
\)
for $i\le s$ and split $\{1,\dots,s\}$ into two disjoint folds
$I_1:=\{i\le s:\ i\ \text{odd}\}$ and $I_2:=\{i\le s:\ i\ \text{even}\}$.
For $a\in\{1,2\}$, let $\widehat{V}^{(-a)}$ denote the (weighted) isotonic least squares estimator of a
nonincreasing nonnegative sequence over $\{1,\dots,s\}$ using only the fold $I_{3-a}$:
\begin{equation}
\widehat{V}^{(-a)}
:=
\arg\min_{v_1\ge\cdots\ge v_s}\;
\sum_{i\in I_{3-a}}(X_i-v_i)^2.
\label{eq:Vhat_crossfit}
\end{equation}

Finally, define the cross-fit shrinkage estimator by
\begin{equation}
\widehat{\theta}_i
:=
\begin{cases}
\displaystyle
\frac{(\widehat{V}^{(-1)}_i)_+}{(\widehat{V}^{(-1)}_i)_++\widehat{\lambda}}\,Y_i, & i\in I_1,\\[1.25ex]
\displaystyle
\frac{(\widehat{V}^{(-2)}_i)_+}{(\widehat{V}^{(-2)}_i)_++\widehat{\lambda}}\,Y_i, & i\in I_2,\\[1.25ex]
0, & i>s.
\end{cases}
\label{eq:postmean_unknown_sigma}
\end{equation}
\paragraph{Assumption (A1).}
There exist constants $0<\sigma_0^2\le \sigma_1^2<\infty$ such that $\sigma_0^2 \le \sigma^2 \le \sigma_1^2$.

\paragraph{Assumption (A2).}
The number of nonzero coefficients satisfies $s \le c p$ for some $c<1$.

\begin{theorem}[Risk bound with unknown noise variance]
\label{thm:unknown_sigma}
Let $\widehat{\theta}$ denote the cross-fit empirical Bayes estimator \eqref{eq:postmean_unknown_sigma},
with $\widehat{\sigma}^2$ defined by \eqref{eq:sigma_hat_tail}.
Then there exists a universal constant $C>0$ such that
\[
\sup_{\theta \in \Theta_{\downarrow}(s,R)}
\mathbb{E}_\theta
\|\widehat{\theta} - \theta\|_2^2
\;\le\;
C \, s \frac{\sigma^2}{n}\log\!\Big(\frac{ep}{s}\Big)\,
\log\!\Big(1+\frac{Rn}{\sigma^2}\Big)
\;+\;
C\left(\frac{\sigma^2}{n}+\frac{sR}{p-s}\right).
\]
\end{theorem}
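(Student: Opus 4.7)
The plan is to decompose the risk into an ``oracle-$\lambda$'' component that can be handled fold-wise by the arguments underlying Theorem~\ref{thm:main}, plus a sensitivity term capturing the cost of replacing $\lambda=\sigma^2/n$ by $\widehat{\lambda}=\widehat{\sigma}^2/n$. The central structural point is that the cross-fit construction makes $\widehat{\lambda}$ depend only on $(Y_i)_{i>s}$ and makes $\widehat{V}^{(-a)}$ depend only on $(Y_i)_{i\in I_{3-a}}$; since $\theta\in\Theta_{\downarrow}(s,R)$ forces $\theta_i=0$ for $i>s$, the pair $(\widehat{\lambda},\widehat{V}^{(-a)})$ is jointly independent of $Y_{I_a}$ on each fold, so a risk-transfer step in the spirit of Lemma~\ref{lem:risktransfer_main} applies coordinatewise with the data-driven noise variance in place of $\lambda$.

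First I would analyze the tail estimator. Because $\theta_i=0$ for $i>s$, we have $(p-s)\widehat{\sigma}^2/\sigma^2\sim \chi^2_{p-s}$, which yields $\mathbb{E}|\widehat{\lambda}-\lambda|^2 = 2\lambda^2/(p-s)$, and chi-square tail bounds give $\mathbb{P}(\mathcal{G}^c)\le 2\exp(-c(p-s))$ for the good event $\mathcal{G}:=\{\lambda/2\le \widehat{\lambda}\le 2\lambda\}$. On $\mathcal{G}$ I would introduce the intermediate oracle-$\lambda$ estimator $\widetilde{\theta}_i:=g_\lambda(\widehat{V}^{(-a)}_i)Y_i$ for $i\in I_a$, and split
\[
\|\widehat{\theta}-\theta\|_2^2 \;\le\; 2\|\widehat{\theta}-\widetilde{\theta}\|_2^2 + 2\|\widetilde{\theta}-\theta\|_2^2.
\]
The second term is controlled by applying the fold-wise version of Theorem~\ref{thm:main}: restricted to each fold the target sequence $V_i=\theta_i^2$ is still nonincreasing and piecewise constant with at most $s$ jumps, so the weighted isotonic oracle inequality (Lemma~\ref{lem:isotonic_weighted_main}) combined with the oracle bias--variance bound reproduces the rate $s\lambda\log(ep/s)\log(1+R/\lambda)$ up to universal constants.

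For the sensitivity term, the identity $g_\lambda(x)-g_{\lambda'}(x)=x(\lambda'-\lambda)/[(x+\lambda)(x+\lambda')]$ combined with $x/(x+\lambda')\le 1$ gives $|g_\lambda(x)-g_{\lambda'}(x)|\le |\lambda-\lambda'|/\min(\lambda,\lambda')$, which on $\mathcal{G}$ is at most $2|\widehat{\lambda}-\lambda|/\lambda$. Conditioning on the $\sigma$-field generated by $(\widehat{\lambda},\widehat{V}^{(-a)})$ and using independence from $Y_{I_a}$ together with $\mathbb{E}[Y_i^2]=V_i+\lambda$ yields
\[
\mathbb{E}\big[\|\widehat{\theta}-\widetilde{\theta}\|_2^2\mathbf{1}_\mathcal{G}\big]
\;\le\;
\frac{4\,\mathbb{E}[(\widehat{\lambda}-\lambda)^2]}{\lambda^2}\sum_{i\le s}(V_i+\lambda)
\;\lesssim\;
\frac{sR+s\lambda}{p-s},
\]
and the Assumption (A2) bound $s\le cp$ with $c<1$ forces $s\lambda/(p-s)\lesssim \lambda=\sigma^2/n$, reproducing the extra $\sigma^2/n+sR/(p-s)$ term in the theorem. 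On $\mathcal{G}^c$, the boundedness $g_{\widehat{\lambda}}\in[0,1]$ gives the deterministic bound $\|\widehat{\theta}-\theta\|_2^2\le 2\|Y_{1:s}\|_2^2+2\|\theta\|_2^2$, and taking expectations gives an expected contribution of order $(sR+s\lambda)\exp(-c(p-s))$, which is negligible under Assumptions (A1) and (A2).

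The main obstacle will be reconciling the pilot/binning step of Lemma~\ref{lem:isotonic_weighted_main} with the data-driven noise variance: the weighted isotonic construction uses dyadic thresholds $t_m=2^m\nu$ and a pilot-localization event analogous to \textup{(B1)}, and one must verify that substituting $\widehat{\nu}$ still yields weight ratios $w_i\asymp 1/(V_i+\lambda)$ up to universal constants. This is plausible on $\mathcal{G}$ since $\widehat{\lambda}\asymp\lambda$ there, but formalizing it requires either carrying a perturbed version of \textup{(B1)} through the weighted oracle inequality by extending Proposition~\ref{prop:margin_implies_B1} to noisy thresholds, or adopting a slightly coarser (e.g.\ factor-of-$4$) dyadic scale that absorbs the $\widehat{\nu}/\nu$ fluctuation into the universal constants. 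A secondary subtlety is that the paper's unweighted fold-wise formulation of $\widehat{V}^{(-a)}$ must be related to the weighted construction used in the proof of Theorem~\ref{thm:main}; this can be handled by applying Lemma~\ref{lem:weighted_iso_projection} fold-wise with $\widehat{\nu}$-based weights, at the cost of redefining $\widehat{V}^{(-a)}$ to match the weighted objective.
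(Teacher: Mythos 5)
Your proposal is correct and follows essentially the same architecture as the paper's proof: the same decomposition through the oracle $\theta^{\mathrm{or}}_i=g_\lambda(V_i)Y_i$ and the intermediate true-$\lambda$ estimator $\widetilde{\theta}_i=g_\lambda\big((\widehat V_i)_+\big)Y_i$, the same use of cross-fit independence to apply Lemma~\ref{lem:risktransfer_main} fold-wise, and the same chi-square analysis of the tail estimator \eqref{eq:sigma_hat_tail} via Lemma~\ref{lem:sigma}.

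The one genuine difference is your handling of the $\widehat{\sigma}^2$-sensitivity term. The paper invokes Lemma~\ref{lem:sigma_stability} under Assumption (A1) to get the uniform bound $|g(x,\widehat{\sigma}^2)-g(x,\sigma^2)|\le|\widehat{\sigma}^2-\sigma^2|/(4\sigma_0^2)$ and sums $\mathbb{E}Y_i^2$ over all $p$ coordinates, whereas you localize to the good event $\mathcal{G}=\{\lambda/2\le\widehat{\lambda}\le 2\lambda\}$, use the bound $2|\widehat{\lambda}-\lambda|/\lambda$, and handle $\mathcal{G}^c$ by the trivial bound plus chi-square tails; both routes land on the same remainder $C\big(\sigma^2/n+sR/(p-s)\big)$ under (A2). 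Your variant is arguably more robust: Lemma~\ref{lem:sigma_stability} as stated implicitly needs the \emph{random} quantity $\widehat{\sigma}^2$ --- not just the true $\sigma^2$ --- bounded below, since the exact deviation is of order $|\widehat{\sigma}^2-\sigma^2|/\big(4\sqrt{\widehat{\sigma}^2\sigma^2}\big)$, and your event $\mathcal{G}$ supplies exactly that control without leaning on (A1). Two further remarks. First, $\widehat{\lambda}$ enters in a third place you did not treat explicitly: the fold-wise proxies are $X_i=Y_i^2-\widehat{\lambda}$ rather than $Y_i^2-\lambda$. The paper disposes of this in one line via translation equivariance of isotonic least squares and 1-Lipschitzness of the positive part, giving $|(\widehat V_i)_+-(\widehat V^{\star}_i)_+|\le|\widehat{\lambda}-\lambda|$ and a contribution of order $s\lambda/(p-s)$, which is lower order; you should import that step into your good-event framework. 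Second, the obstacle you flag at the end is real, but it is equally a gap in the paper's own Step 2: the estimator \eqref{eq:Vhat_crossfit} is an \emph{unweighted} fold-wise fit, while the rate-achieving variance-profile bound (Lemma~\ref{lem:isotonic_weighted_main}) is proved for the weighted, pilot-binned construction with dyadic thresholds $t_m=2^m\nu$, and the paper simply writes that the claim follows by ``combining this with the variance-profile estimation bound'' without reconciling the two objectives or justifying $\widehat{\nu}$ in the thresholds. Your proposed fixes --- redefining $\widehat V^{(-a)}$ to match the weighted objective, or coarsening the dyadic grid so that the fluctuation $\widehat{\nu}/\nu\in[1/2,2]$ on $\mathcal{G}$ is absorbed into universal constants --- are what a fully rigorous writeup would require, so on this point you diagnose more candidly a deficiency you share with the paper rather than introducing a new one.
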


\begin{lemma}[Tail-based noise variance estimator]
\label{lem:sigma}
Assume the model \eqref{eq:sequence_unknown_sigma} and $\theta\in\Theta_{\downarrow}(s,R)$ with $p>s$.
\[
\mathbb{E}_\theta\big(\widehat{\sigma}^2-\sigma^2\big)=0,
\qquad
\mathbb{E}_\theta\big(\widehat{\sigma}^2-\sigma^2\big)^2
=
\frac{2\sigma^4}{p-s},
\qquad
\text{and hence}\qquad
|\widehat{\sigma}^2-\sigma^2|
= O_P\!\left(\frac{\sigma^2}{\sqrt{p-s}}\right).
\]
\end{lemma}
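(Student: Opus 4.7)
The plan is to exploit the tail-zero structure of the ordered sparse class $\Theta_{\downarrow}(s,R)$ to reduce $\widehat{\sigma}^2$ to a scaled chi-squared statistic, and then read off mean, variance, and the $O_P$ rate from standard chi-squared moments.

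First I would observe that for any $\theta \in \Theta_{\downarrow}(s,R)$, Definition~\ref{def:class} forces $\theta_i = 0$ for every $i > s$. Therefore, for $i > s$, the observation reduces to pure noise, $Y_i = \varepsilon_i \sim N(0,\sigma^2/n)$, independently across $i$. Rescaling, each $n Y_i^2/\sigma^2$ is a squared standard normal, so
\[
\frac{n}{\sigma^2}\sum_{i=s+1}^p Y_i^2 \;\sim\; \chi^2_{p-s},
\]
and consequently $\widehat{\sigma}^2 = \frac{\sigma^2}{p-s}\,\chi^2_{p-s}$ in distribution.

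Next I would read off the two moment identities from the standard facts $\mathbb{E}[\chi^2_{p-s}] = p-s$ and $\mathrm{Var}(\chi^2_{p-s}) = 2(p-s)$. Unbiasedness is immediate: $\mathbb{E}_\theta[\widehat{\sigma}^2] = \frac{\sigma^2}{p-s}(p-s) = \sigma^2$. The mean squared error is
\[
\mathbb{E}_\theta(\widehat{\sigma}^2-\sigma^2)^2 \;=\; \mathrm{Var}(\widehat{\sigma}^2) \;=\; \Big(\tfrac{\sigma^2}{p-s}\Big)^{2} \cdot 2(p-s) \;=\; \frac{2\sigma^4}{p-s}.
\]

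Finally, Chebyshev's inequality (or, if one prefers a tail bound, Laurent--Massart-type concentration for chi-squared variables) converts this variance bound into the stochastic order statement $|\widehat{\sigma}^2 - \sigma^2| = O_P(\sigma^2/\sqrt{p-s})$. There is essentially no obstacle here: the argument is a one-line distributional identification followed by two textbook chi-squared moment computations, and the proof relies only on the tail-sparsity structure of $\Theta_{\downarrow}(s,R)$ and the Gaussianity of $\varepsilon$.
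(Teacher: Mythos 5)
Your proof is correct and matches the paper's argument exactly: both reduce $\widehat{\sigma}^2$ to a scaled chi-squared statistic via the tail-zero structure of $\Theta_{\downarrow}(s,R)$ and read off the moments, with the paper leaving the chi-squared moment facts and the Chebyshev step implicit where you spell them out. No gaps.
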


\begin{proof}
Since $\theta\in\Theta_{\downarrow}(s,R)$ implies $\theta_i=0$ for all $i>s$, we have
$Y_i\sim N(0,\sigma^2/n)$ independently for $i=s+1,\dots,p$. Hence
\[
\widehat{\sigma}^2
:=
\frac{n}{p-s}\sum_{i=s+1}^p Y_i^2
=
\sigma^2\cdot \frac{1}{p-s}\sum_{i=s+1}^p Z_i^2,
\]
with i.i.d.\ $Z_i\sim N(0,1)$. The claims follow from the mean and variance of a chi-square variable.
\end{proof}

\begin{lemma}[Stability under estimated noise variance]
\label{lem:sigma_stability}
Let $g(x,\sigma^2)=x/(x+\sigma^2/n)$. Then
\[
|g(x,\widehat{\sigma}^2)-g(x,\sigma^2)|
\le
\frac{| \widehat{\sigma}^2 - \sigma^2 |}{4\sigma_0^2}.
\]
\end{lemma}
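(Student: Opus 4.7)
The plan is to prove Lemma~\ref{lem:sigma_stability} by a direct mean-value-theorem argument applied to the map $\sigma^2 \mapsto g(x,\sigma^2)$ with $x$ held fixed, followed by a uniform bound on the derivative.

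First I would compute
\[
\frac{\partial g}{\partial \sigma^2}(x,\sigma^2)
\;=\;
\frac{\partial}{\partial \sigma^2}\!\left[\frac{x}{x+\sigma^2/n}\right]
\;=\;
-\frac{x/n}{(x+\sigma^2/n)^{2}},
\]
which is nonpositive, as expected (increasing the noise variance shrinks more aggressively toward zero). So the absolute derivative is $h(x,\sigma^2) := (x/n)/(x+\sigma^2/n)^{2}$.

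Next I would bound $h$ uniformly in $x \ge 0$. Setting $a = x + \sigma^2/n$ and rewriting $h = (a - \sigma^2/n)/(n a^{2}) = 1/(na) - \sigma^2/(n^{2} a^{2})$, a one-variable optimization in $a \ge \sigma^2/n$ gives the stationary point $a^\star = 2\sigma^2/n$, i.e.\ $x^\star = \sigma^2/n$, with maximum value
\[
\sup_{x \ge 0} h(x,\sigma^2) \;=\; \frac{1}{2\sigma^2} - \frac{1}{4\sigma^2} \;=\; \frac{1}{4\sigma^2}.
\]

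Finally, by the mean value theorem there exists $\tilde\sigma^2$ between $\sigma^2$ and $\widehat\sigma^2$ such that
\[
|g(x,\widehat\sigma^2) - g(x,\sigma^2)|
\;=\;
h(x,\tilde\sigma^2)\,|\widehat\sigma^2 - \sigma^2|
\;\le\;
\frac{|\widehat\sigma^2 - \sigma^2|}{4\tilde\sigma^2}.
\]
Invoking Assumption (A1), $\sigma^2 \ge \sigma_0^2$, and (on the implicit high-probability event) $\widehat\sigma^2 \ge \sigma_0^2$, any convex combination $\tilde\sigma^2$ also satisfies $\tilde\sigma^2 \ge \sigma_0^2$, giving the claimed bound. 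The only mild subtlety is handling the lower control on $\widehat\sigma^2$; since $\widehat\sigma^2$ is an unbiased chi-square-based estimator with variance $O(\sigma^4/(p-s))$ by Lemma~\ref{lem:sigma}, one can either restrict to the event $\{\widehat\sigma^2 \ge \sigma_0^2\}$ (which holds with high probability under (A1)--(A2) when $p-s$ is large) or replace $\widehat\sigma^2$ by its truncation $\max(\widehat\sigma^2,\sigma_0^2)$ in the definition of the estimator; neither affects the downstream risk bound. No serious obstacle here — the proof is a two-line calculus exercise plus a boundary comment.
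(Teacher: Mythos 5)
The paper states Lemma~\ref{lem:sigma_stability} without any proof, so there is no internal argument to compare against; judged on its own merits, your calculus is correct. The derivative $\partial g/\partial\sigma^2 = -(x/n)/(x+\sigma^2/n)^2$ is right, your one-variable optimization in $a = x+\sigma^2/n$ correctly yields $\sup_{x\ge 0}\abs{\partial g/\partial \sigma^2} = 1/(4\sigma^2)$ at $x^\star=\sigma^2/n$ (the same bound follows in one line from AM--GM, $(x+\sigma^2/n)^2 \ge 4x\sigma^2/n$), and the mean value theorem then gives $\abs{g(x,\widehat{\sigma}^2)-g(x,\sigma^2)} \le \abs{\widehat{\sigma}^2-\sigma^2}/(4\tilde\sigma^2)$.

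More importantly, your ``mild subtlety'' is in fact the substantive point: the lemma as stated is not a theorem, because the MVT point $\tilde\sigma^2$ can lie below $\sigma_0^2$ whenever $\widehat{\sigma}^2 < \sigma_0^2$, which occurs with positive probability, and the inequality then genuinely fails. For instance, with $n=1$, $\sigma^2=\sigma_0^2=1$, $x=1$, and $\widehat{\sigma}^2=\epsilon$ small, one has $\abs{g(1,\epsilon)-g(1,1)} \to 1/2$ while the claimed bound is $1/4$. Of your two proposed repairs, prefer the truncation $\widehat{\sigma}^2 \mapsto \max(\widehat{\sigma}^2,\sigma_0^2)$: since Assumption (A1) gives $\sigma^2 \ge \sigma_0^2$, truncating at $\sigma_0^2$ can only decrease $\abs{\widehat{\sigma}^2-\sigma^2}$, so the second-moment bound of Lemma~\ref{lem:sigma} survives unchanged and Step 3 of the proof of Theorem~\ref{thm:unknown_sigma} goes through verbatim. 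The event-restriction alternative is weaker than you suggest: when $\sigma^2=\sigma_0^2$ exactly (which (A1) permits), $\widehat{\sigma}^2$ is centered at $\sigma_0^2$ and $\mathbb{P}(\widehat{\sigma}^2 < \sigma_0^2) \approx 1/2$ no matter how large $p-s$ is, so $\{\widehat{\sigma}^2 \ge \sigma_0^2\}$ is not a high-probability event in general. With the truncation fix made explicit, your proof is complete and is the natural argument the paper omits.
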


\paragraph{Additional step: effect of estimating $\sigma^2$.}
By Lemma~\ref{lem:sigma_stability} and Lemma~\ref{lem:sigma}, the effect of replacing $\sigma^2$ with $\widehat{\sigma}^2$ in the shrinkage factors is of order
\[
\sum_{i=1}^p
\mathbb{E}
\left(
g(\widehat{V}_i,\widehat{\sigma}^2)
-
g(\widehat{V}_i,\sigma^2)
\right)^2
\ \lesssim\
\frac{\sigma^4}{p}\left(\|\theta\|_2^2 + p\frac{\sigma^2}{n}\right),
\]
which is negligible compared to the leading term
$s(\sigma^2/n)\log(p/s)$.

The result continues to hold when $\sigma^2$ is unknown and estimated jointly
by empirical Bayes in the SVD-reduced model.
\subsection{Proof of Theorem~\ref{thm:unknown_sigma}}

The argument follows the known-$\sigma^2$ proof, using cross-fitting to ensure the variance-profile estimate
is independent of the Gaussian coordinate it shrinks, plus an additional term for the plug-in noise estimator
$\widehat{\sigma}^2$.

\begin{proof}
Recall the model
\[
Y_i = \theta_i + \varepsilon_i,
\qquad
\varepsilon_i \sim N(0,\sigma^2/n),
\quad i=1,\dots,p,
\]
and the cross-fit empirical Bayes estimator $\widehat{\theta}$ defined in \eqref{eq:postmean_unknown_sigma}.

We decompose the risk into three terms:
\begin{equation}
\label{eq:risk-decomp}
\mathbb{E}_\theta \|\widehat{\theta}-\theta\|_2^2
\le
3\Big(
\mathbb{E}_\theta \|\theta^{\mathrm{or}}-\theta\|_2^2
+
\mathbb{E}_\theta \|\widehat{\theta}-\theta^{\mathrm{or}}\|_2^2
+
\mathbb{E}_\theta \|\widehat{\theta}-\widetilde{\theta}\|_2^2
\Big),
\end{equation}
where we write $V_i:=\theta_i^2$ and let $\widehat{V}_i$ denote the cross-fit variance-profile estimate used in
\eqref{eq:postmean_unknown_sigma} (i.e., $\widehat{V}_i=\widehat{V}^{(-1)}_i$ for $i\in I_1$ and
$\widehat{V}_i=\widehat{V}^{(-2)}_i$ for $i\in I_2$).
\[
\theta_i^{\mathrm{or}}
=
\frac{V_i}{V_i + \sigma^2/n} Y_i,
\qquad
\widetilde{\theta}_i
=
\frac{(\widehat{V}_i)_+}{(\widehat{V}_i)_+ + \sigma^2/n} Y_i.
\]

\paragraph{Step 1: Oracle risk bound.}
For each $i$,
\[
\mathbb{E}_\theta (\theta_i^{\mathrm{or}}-\theta_i)^2
=
\frac{\sigma^2}{n}
\Big(\frac{V_i}{V_i+\sigma^2/n}\Big)^2
+
\theta_i^2
\left(
\frac{\sigma^2/n}{V_i+\sigma^2/n}
\right)^2
\le
\min\!\left(\theta_i^2,\frac{\sigma^2}{n}\right).
\]
Summing over $i$ and using the definition of $\Theta_{\downarrow}(s,R)$ yields
\begin{equation}
\label{eq:oracle-risk}
\mathbb{E}_\theta \|\theta^{\mathrm{or}}-\theta\|_2^2
\;\le\;
C_1 \, s \frac{\sigma^2}{n}.
\end{equation}

\paragraph{Step 2: Effect of estimating the variance profile.}
By construction, for each $i\le s$ the estimate $\widehat{V}_i$ is a function of
$(Y_j)_{j\in\{1,\dots,s\}\setminus\{i\}}$ and $(Y_j)_{j>s}$ only, and is therefore independent of $Y_i$.
Thus Lemma~\ref{lem:risktransfer_main} applies. To control the variance-profile error term, write
$\lambda:=\sigma^2/n$ and $X_i^\star:=Y_i^2-\lambda$, so that $X_i=X_i^\star+(\lambda-\widehat{\lambda})$
with $\widehat{\lambda}=\widehat{\sigma}^2/n$ independent of $(Y_i)_{i\le s}$.
Let $\widehat{V}^{(-a),\star}$ denote the isotonic least squares fit (under the monotonicity constraint only)
based on $(X_i^\star)_{i\in I_{3-a}}$.
Since isotonic regression on the monotone cone is translation equivariant and the positive-part map is 1-Lipschitz,
we have for each $i\le s$ that
\(
|(\widehat{V}_i)_+-(\widehat{V}_i^\star)_+|\le |\widehat{\lambda}-\lambda|.
\)
Combining this with the variance-profile estimation bound (and noting that $\log(1+Rn/\sigma^2)\ge 1$) yields
\[
\mathbb{E}_\theta
\|\widetilde{\theta}-\theta^{\mathrm{or}}\|_2^2
\ \lesssim\
s\,\frac{\sigma^2}{n}\log\!\Big(\frac{ep}{s}\Big)\,
\log\!\Big(1+\frac{Rn}{\sigma^2}\Big).
\]
\begin{equation}
\label{eq:tau-risk}
\mathbb{E}_\theta
\|\widetilde{\theta}-\theta^{\mathrm{or}}\|_2^2
\;\le\;
C_2 \, s \frac{\sigma^2}{n}\log\!\Big(\frac{ep}{s}\Big)\,
\log\!\Big(1+\frac{Rn}{\sigma^2}\Big).
\end{equation}

\paragraph{Step 3: Effect of estimating $\sigma^2$.}
By Lemma~\ref{lem:sigma_stability},
\[
\left|
\frac{(\widehat{V}_i)_+}{(\widehat{V}_i)_++\widehat{\sigma}^2/n}
-
\frac{(\widehat{V}_i)_+}{(\widehat{V}_i)_++\sigma^2/n}
\right|
\le
\frac{|\widehat{\sigma}^2-\sigma^2|}{4\sigma_0^2}.
\]
Therefore,
\[
\mathbb{E}_\theta
\|\widehat{\theta}-\widetilde{\theta}\|_2^2
\le
\frac{1}{16\sigma_0^4}
\mathbb{E}(\widehat{\sigma}^2-\sigma^2)^2
\sum_{i=1}^p \mathbb{E}Y_i^2.
\]
Using Lemma~\ref{lem:sigma} and $\sum_i \mathbb{E}Y_i^2 = \|\theta\|_2^2 + p\sigma^2/n \le sR + p\sigma^2/n$ over $\Theta_{\downarrow}(s,R)$,
we obtain
\begin{equation}
\label{eq:sigma-risk}
\mathbb{E}_\theta
\|\widehat{\theta}-\widetilde{\theta}\|_2^2
\;\le\;
C_3\left(\frac{\sigma^2}{n}+\frac{sR}{p-s}\right).
\end{equation}

\paragraph{Step 4: Combine bounds.}
Substituting \eqref{eq:oracle-risk}, \eqref{eq:tau-risk}, and
\eqref{eq:sigma-risk} into \eqref{eq:risk-decomp}, we conclude that
\[
\mathbb{E}_\theta \|\widehat{\theta}-\theta\|_2^2
\;\le\;
C \, s \frac{\sigma^2}{n}
\log\!\left(\frac{p}{s}\right),
\]
uniformly over $\theta \in \Theta_{\downarrow}(s,R)$.

\end{proof}

\section{Sobolev-type smoothness as a special case}

We now relate the sparse monotone class considered above to classical Sobolev smoothness. Let $\beta > 1/2$ and $R>0$, and define the Sobolev-type ellipsoid
\begin{equation}
\mathcal{W}(\beta,R)
=
\left\{
\theta \in \mathbb{R}^p :
\sum_{i=1}^p i^{2\beta} \theta_i^2 \le R^2
\right\}.
\label{eq:sobolev}
\end{equation}
Such classes arise naturally from orthogonal series expansions of functions in Sobolev spaces of smoothness $\beta$.

\begin{corollary}[Adaptivity over Sobolev ellipsoids (up to a logarithmic factor)]
\label{cor:sobolev}
Let $\widehat{\theta}$ denote the isotonic empirical Bayes estimator defined in
\eqref{eq:postmean_unknown_sigma}. Then for any $\beta > 1/2$, letting
\[
s_n \asymp
\left(\frac{n R^2}{\sigma^2}\right)^{\frac{1}{2\beta+1}},
\]
\[
\sup_{\theta \in \mathcal{W}(\beta,R)}
\mathbb{E}_{\theta}
\|\widehat{\theta} - \theta\|_2^2
\;\le\;
C \, R^{\frac{2}{2\beta+1}}
\left(\frac{\sigma^2}{n}\right)^{\frac{2\beta}{2\beta+1}}
\log\!\Big(\frac{p}{s_n}\Big),
\]
where $C>0$ is a universal constant.

\begin{proof}[Proof sketch]
For $\theta \in \mathcal{W}(\beta,R)$, standard arguments imply that
\[
\theta_i^2 \lesssim R^2 i^{-2\beta-1}
\quad \text{for all } i.
\]
Define the effective truncation level
\[
s_n \asymp
\left(\frac{n R^2}{\sigma^2}\right)^{\frac{1}{2\beta+1}}.
\]
Then $\theta$ is well approximated by a sparse monotone sequence with at most
$s_n$ effective nonzero coefficients, and the bias incurred by truncation beyond
$s_n$ is of the same order as the variance.
Applying Theorem~\ref{thm:unknown_sigma} with $s=s_n$ yields the stated rate.
\end{proof}
\end{corollary}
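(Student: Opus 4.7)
The plan is to reduce the Sobolev ellipsoid case to the sparse-ordered case already handled by Theorem~\ref{thm:unknown_sigma}, using a truncation argument that balances a Sobolev tail bias against the ordered-sparse estimation error at an optimally chosen cutoff $s_n$.

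First, I would derive the standard Sobolev tail energy bound. For any $\theta \in \mathcal{W}(\beta,R)$ and any cutoff $s$, a one-line monotone-weight argument gives
\[
\sum_{i>s} \theta_i^2 \;\le\; s^{-2\beta}\sum_{i>s} i^{2\beta}\theta_i^2 \;\le\; R^2 s^{-2\beta},
\]
as well as the pointwise control $\theta_i^2 \le R^2 i^{-2\beta}$. These two estimates together supply the bias side of the bias--variance trade-off and identify a natural nonincreasing envelope $V_i^{\star} := R^2 i^{-2\beta}$ for the squared-signal profile $V_i := \theta_i^2$.

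Second, for a given $s$, I would use the envelope $V^{\star}$ (quantized into at most $s$ piecewise-constant, monotone nonincreasing blocks, with values capped at $R$) as the comparator $u$ in the weighted isotonic oracle inequality of Lemma~\ref{lem:isotonic_weighted_main}, and then invoke the risk-transfer bound of Lemma~\ref{lem:risktransfer_main} in the same way as in the proof of Theorem~\ref{thm:main}. This yields a decomposition of the risk into an estimation term of order $s\,(\sigma^2/n)\,\log(ep/s)\,\log(1+Rn/\sigma^2)$ (matching Theorem~\ref{thm:unknown_sigma} with effective sparsity $s$) plus an approximation term dominated by the Sobolev tail $R^2 s^{-2\beta}$.

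Third, I would optimize over $s$. Balancing the variance $s\,\sigma^2/n$ against the bias $R^2 s^{-2\beta}$ forces $s^{2\beta+1}\asymp nR^2/\sigma^2$, i.e.
\[
s_n \;\asymp\; \bigl(nR^2/\sigma^2\bigr)^{1/(2\beta+1)},
\]
and substituting gives the target rate $R^{2/(2\beta+1)}(\sigma^2/n)^{2\beta/(2\beta+1)}$. The ancillary factor $\log(1+Rn/\sigma^2)$ satisfies $\log(1+Rn/\sigma^2) \asymp (2\beta+1)\log s_n$, so it is comparable to $\log(p/s_n)$ whenever $p$ is polynomial in $s_n$ and can be absorbed into the single logarithmic factor $\log(p/s_n)$ stated in the corollary.

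The main obstacle is that Sobolev coefficients $\theta_i^2$ need not themselves be monotone nonincreasing, so the isotonic oracle target is not literally the true variance profile. I would handle this by verifying that the weighted approximation cost $\sum_i (V_i^\star - V_i)^2/(V_i+\nu)$ incurred by passing from $V$ to its monotone envelope is of the same order $R^2 s_n^{-2\beta}$ as the tail bias; this is essentially the classical ellipsoid-to-hyperrectangle reduction in nonparametric estimation, and the nontrivial bookkeeping is to confirm that the self-normalization by $V_i+\nu$ does not inflate the approximation cost. Once that is checked, the three steps above assemble into the claimed near-minimax Sobolev rate.
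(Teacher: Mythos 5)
Your overall architecture---Sobolev tail bound, truncation at an effective level $s_n$, bias--variance balancing---matches the paper's own sketch, which simply approximates $\theta$ by its truncation at $s_n$ and invokes Theorem~\ref{thm:unknown_sigma} with $s=s_n$; your first and third steps are sound (and your pointwise bound $\theta_i^2\le R^2 i^{-2\beta}$ is in fact the correct general one, whereas the paper's $\theta_i^2\lesssim R^2 i^{-2\beta-1}$ holds only for monotone profiles). The genuine gap is in your second step and in your concluding ``main obstacle'' paragraph: you propose to use a quantization of the fixed pointwise envelope $V_i^\star = R^2 i^{-2\beta}$ as the comparator $u$ in Lemma~\ref{lem:isotonic_weighted_main}, and you assert that the weighted approximation cost $\sum_i (V_i^\star - V_i)^2/(V_i+\nu)$ is of order $R^2 s_n^{-2\beta}$. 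This is false, and the self-normalization inflates rather than tames the cost. Take $\theta=0\in\mathcal{W}(\beta,R)$, so $V\equiv 0$: each term equals $(V_i^\star)^2/\nu = R^4 i^{-4\beta}\,n/(2\sigma^2)$, and summing (the series converges since $\beta>1/2$) gives a cost of order $R^4 n/\sigma^2$, which grows with $n$ while the target rate $R^{2/(2\beta+1)}(\sigma^2/n)^{2\beta/(2\beta+1)}$ tends to zero. The oracle inequality charges over-estimation of the variance profile at null coordinates at rate $u_i^2/\nu$, so any fixed, $\theta$-independent envelope comparator is fatally crude.

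Nor is this repairable by sharper bookkeeping: shrinkage calibrated to the pointwise envelope can at best achieve the hyperrectangle risk $\sum_i \min(R^2 i^{-2\beta}, \sigma^2/n) \asymp R^{1/\beta}(\sigma^2/n)^{(2\beta-1)/(2\beta)}$, which is strictly slower than the ellipsoid rate because $(2\beta-1)/(2\beta) < 2\beta/(2\beta+1)$. The ellipsoid is much smaller than the hyperrectangle of its coordinatewise envelopes, and the claimed rate is attainable only with a comparator adapted to the actual $\theta$. That is precisely what the paper does: the comparator is the \emph{true} profile truncated at $s_n$ (true values on $[1,s_n]$, zero beyond), so the approximation/bias term is the tail energy $\sum_{i>s_n}\theta_i^2 \le R^2 s_n^{-2\beta}$, which balances the estimation term $s_n\sigma^2/n$ at the stated $s_n$, with the estimation side delegated wholesale to Theorem~\ref{thm:unknown_sigma}. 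If you replace your envelope comparator by this truncated true profile, your argument aligns with the paper's---subject to the caveat (glossed over by the paper's sketch as well) that $\Theta_{\downarrow}(s,R)$ and Lemma~\ref{lem:isotonic_weighted_main} require monotone squared coefficients, which a general ellipsoid element need not satisfy. Finally, your absorption of $\log(1+Rn/\sigma^2)$ into $\log(p/s_n)$ requires the unstated hypothesis that $p$ is at most polynomial in $s_n$; note the corollary displays a single logarithmic factor while Theorem~\ref{thm:unknown_sigma} delivers two, a discrepancy the paper's sketch does not resolve either.
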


Corollary~\ref{cor:sobolev} shows that isotonic empirical Bayes shrinkage adapts automatically to unknown smoothness without requiring specification of the Sobolev index $\beta$.

\section{Posterior contraction rates for isotonic EB global--local shrinkage}
\label{sec:contraction}

\subsection{Gaussian sequence model and isotonic EB posterior}

Consider the Gaussian sequence model
\begin{equation}
Y_i = \theta_{0,i} + \varepsilon_i,
\qquad
\varepsilon_i \stackrel{\mathrm{i.i.d.}}{\sim} N(0,\sigma^2/n),
\quad i=1,\dots,p,
\label{eq:seq}
\end{equation}
with true mean vector $\theta_0=(\theta_{0,1},\dots,\theta_{0,p})$ and known $\sigma^2>0$.
Given a variance sequence $\tau^2=(\tau_1^2,\dots,\tau_p^2)$, define the Gaussian prior
\begin{equation}
\theta_i \mid \tau_i^2 \sim N(0,\tau_i^2),
\qquad \tau_1^2 \ge \tau_2^2 \ge \cdots \ge \tau_p^2 \ge 0.
\label{eq:prior_seq}
\end{equation}
Let $\widehat{\tau}^2$ denote the isotonic empirical Bayes estimator obtained by maximizing
the marginal likelihood under the monotonicity constraint. Conditioning on $\widehat{\tau}^2$,
the posterior is a product of Gaussians:
\begin{equation}
\theta_i \mid Y, \widehat{\tau}_i^2
\sim
N\!\Big(\widehat{m}_i, \widehat{v}_i\Big),
\qquad
\widehat{m}_i
=
\frac{\widehat{\tau}_i^2}{\widehat{\tau}_i^2+\sigma^2/n}Y_i,
\quad
\widehat{v}_i
=
\frac{\sigma^2}{n}\cdot
\frac{\widehat{\tau}_i^2}{\widehat{\tau}_i^2+\sigma^2/n}.
\label{eq:post}
\end{equation}
Write $\widehat{m}=(\widehat{m}_1,\dots,\widehat{m}_p)$ and let $\widehat{V}=\mathrm{diag}(\widehat{v}_1,\dots,\widehat{v}_p)$.

\subsection{A generic contraction lemma for Gaussian posteriors}

The following lemma reduces posterior contraction to a bound on the posterior second moment.

\begin{lemma}[Second-moment contraction bound]
\label{lem:gauss_contr}
Let $\Pi(\cdot\mid Y)$ be any (possibly data-dependent) Gaussian posterior on $\mathbb{R}^p$
with mean $\widehat{m}$ and covariance matrix $\widehat{V}$ (random via $Y$).
Then for any $\varepsilon_n>0$ and any $M>0$,
\begin{equation}
\mathbb{E}_{\theta_0}\Big[
\Pi\big(\|\theta-\theta_0\|_2 > M\varepsilon_n \mid Y\big)
\Big]
\;\le\;
\frac{1}{M^2\varepsilon_n^2}\,
\mathbb{E}_{\theta_0}\Big[
\|\widehat{m}-\theta_0\|_2^2 + \mathrm{tr}(\widehat{V})
\Big].
\label{eq:markov_contr}
\end{equation}
In particular, if
\begin{equation}
\mathbb{E}_{\theta_0}\Big[
\|\widehat{m}-\theta_0\|_2^2 + \mathrm{tr}(\widehat{V})
\Big]
\;\lesssim\;
\varepsilon_n^2,
\label{eq:moment_cond}
\end{equation}
then
\(
\mathbb{E}_{\theta_0}\Pi(\|\theta-\theta_0\|_2 > M\varepsilon_n\mid Y)\to 0
\)
as $M\to\infty$.
\end{lemma}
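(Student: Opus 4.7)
My plan is a one-step Markov argument combined with the bias--variance decomposition of the posterior second moment for a Gaussian. I would condition on $Y$, bound the posterior tail by Markov applied to $\|\theta-\theta_0\|_2^2$, evaluate this second moment explicitly via the Gaussian structure of $\Pi(\cdot\mid Y)$, and then take an outer expectation over the sampling distribution of $Y$.

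Concretely, the first step will be to write, for any fixed $Y$,
\[
\Pi\big(\|\theta-\theta_0\|_2 > M\varepsilon_n \,\big|\, Y\big)
\ \le\
\frac{1}{M^2\varepsilon_n^2}\,\mathbb{E}_\Pi\!\big[\|\theta-\theta_0\|_2^2 \,\big|\, Y\big],
\]
by conditional Markov. The second step is to compute the inner second moment exactly: decomposing $\theta = \widehat{m} + \zeta$ with $\zeta \mid Y \sim N(0,\widehat{V})$ and expanding the squared norm yields
\[
\mathbb{E}_\Pi\!\big[\|\theta-\theta_0\|_2^2 \,\big|\, Y\big]
\ =\ \|\widehat{m}-\theta_0\|_2^2 + \mathrm{tr}(\widehat{V}),
\]
since the cross term vanishes ($\mathbb{E}_\Pi[\zeta\mid Y]=0$) and $\mathbb{E}_\Pi[\|\zeta\|_2^2\mid Y]=\mathrm{tr}(\widehat{V})$. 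The third step is to take $\mathbb{E}_{\theta_0}$ of both sides (using Fubini/the tower property, which is valid because both $\widehat{m}$ and $\widehat{V}$ are $Y$-measurable and the integrand is nonnegative), which directly produces the displayed inequality \eqref{eq:markov_contr}. The ``in particular'' conclusion is then immediate: under \eqref{eq:moment_cond}, the right-hand side is $O(M^{-2})$, which vanishes as $M\to\infty$.

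I do not anticipate a substantive obstacle in this proof: both ingredients (conditional Markov and the Gaussian bias--variance identity) are elementary, and no measurability subtleties arise because the posterior mean and covariance are built as explicit functions of $Y$ through the cross-fit empirical Bayes construction. The value of the lemma is organizational---it reduces posterior contraction in Theorem~\ref{thm:main_contr} to verifying the second-moment bound \eqref{eq:moment_cond}, which is supplied by Theorem~\ref{thm:main} for the sparse monotone class $\Theta_{\downarrow}(s,R)$ and by Corollary~\ref{cor:sobolev} for the Sobolev ellipsoid $\mathcal{W}(\beta,R)$, together with the elementary trace bound $\mathrm{tr}(\widehat{V})\le \sum_i \sigma^2/n \cdot \widehat{\tau}_i^2/(\widehat{\tau}_i^2+\sigma^2/n)$, which is of the same order as the squared-bias term under the effective-dimension heuristic.
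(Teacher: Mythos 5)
Your proposal is correct and follows exactly the paper's own argument: conditional Markov applied to $\|\theta-\theta_0\|_2^2$, the Gaussian identity $\mathbb{E}_{\Pi(\cdot\mid Y)}\|\theta-\theta_0\|_2^2 = \|\widehat{m}-\theta_0\|_2^2 + \mathrm{tr}(\widehat{V})$, and an outer expectation over $Y$. Your explicit decomposition $\theta = \widehat{m} + \zeta$ merely spells out the bias--variance identity the paper invokes in one line, so there is nothing further to add.
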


\begin{proof}
By Markov's inequality conditional on $Y$,
\[
\Pi(\|\theta-\theta_0\|_2^2 > M^2\varepsilon_n^2 \mid Y)
\le
\frac{\mathbb{E}_{\Pi(\cdot\mid Y)}\|\theta-\theta_0\|_2^2}{M^2\varepsilon_n^2}.
\]
For a Gaussian posterior,
\[
\mathbb{E}_{\Pi(\cdot\mid Y)}\|\theta-\theta_0\|_2^2
=
\|\widehat{m}-\theta_0\|_2^2 + \mathrm{tr}(\widehat{V}).
\]
Taking $\mathbb{E}_{\theta_0}$ yields \eqref{eq:markov_contr}.
\end{proof}

\subsection{Contraction over ordered sparsity classes}

Define the ordered sparse class
\begin{equation}
\Theta_{\downarrow}(s,R)
=
\left\{
\theta\in\mathbb{R}^p:
\exists k\le s\;\text{s.t.}\;
\theta_1^2\ge\cdots\ge\theta_k^2\ge 0,\;
\theta_1^2\le R,\;
\theta_i=0\ (i>k)
\right\}.
\label{eq:Theta_sparse}
\end{equation}

\begin{theorem}[Posterior contraction: ordered sparse class]
\label{thm:contr_sparse}
Let $\Pi(\cdot\mid Y)$ be the isotonic EB posterior \eqref{eq:post}.
Assume the posterior mean $\widehat{m}$ satisfies the risk bound
\begin{equation}
\sup_{\theta_0\in\Theta_{\downarrow}(s,R)}
\mathbb{E}_{\theta_0}\|\widehat{m}-\theta_0\|_2^2
\;\le\;
C_1\, s\frac{\sigma^2}{n},
\label{eq:mean_risk_sparse}
\end{equation}
and that the posterior variance trace is controlled as
\begin{equation}
\sup_{\theta_0\in\Theta_{\downarrow}(s,R)}
\mathbb{E}_{\theta_0}\mathrm{tr}(\widehat{V})
\;\le\;
C_2\, s\frac{\sigma^2}{n}.
\label{eq:var_trace_sparse}
\end{equation}
Then the posterior contracts at rate
\begin{equation}
\varepsilon_n^2
\asymp
s\frac{\sigma^2}{n},
\label{eq:eps_sparse}
\end{equation}
in the sense that for any $M\to\infty$,
\[
\sup_{\theta_0\in\Theta_{\downarrow}(s,R)}
\mathbb{E}_{\theta_0}\Big[
\Pi(\|\theta-\theta_0\|_2 > M\varepsilon_n \mid Y)
\Big]
\;\longrightarrow\;
0.
\]
\end{theorem}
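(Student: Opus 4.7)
The plan is to observe that Theorem~\ref{thm:contr_sparse} is a direct corollary of Lemma~\ref{lem:gauss_contr} (the generic Gaussian second-moment contraction bound) applied to the isotonic EB posterior with its explicit coordinatewise mean $\widehat m$ and diagonal covariance $\widehat V$ from \eqref{eq:post}. That lemma reduces posterior contraction at rate $\varepsilon_n$ to a uniform bound of order $\varepsilon_n^2$ on $\mathbb{E}_{\theta_0}\|\widehat m-\theta_0\|_2^2+\mathbb{E}_{\theta_0}\mathrm{tr}(\widehat V)$, so the entire argument reduces to assembling such a bound from the two hypotheses of the theorem.

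First I would set $\varepsilon_n^2 \asymp s\sigma^2/n$, as dictated by \eqref{eq:eps_sparse}. Adding the two hypothesis bounds \eqref{eq:mean_risk_sparse} and \eqref{eq:var_trace_sparse} gives $\mathbb{E}_{\theta_0}\|\widehat m-\theta_0\|_2^2+\mathbb{E}_{\theta_0}\mathrm{tr}(\widehat V)\le(C_1+C_2)\varepsilon_n^2$, uniformly over $\theta_0\in\Theta_{\downarrow}(s,R)$. Plugging this into \eqref{eq:markov_contr} of Lemma~\ref{lem:gauss_contr} and dividing by $M^2\varepsilon_n^2$ cancels the $\varepsilon_n^2$ factor and leaves $(C_1+C_2)/M^2$, which vanishes as $M\to\infty$. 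Taking the supremum over $\theta_0\in\Theta_{\downarrow}(s,R)$ before passing to the limit (permissible because the bound is uniform) yields the claimed contraction statement.

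The main obstacle is therefore not the proof of the theorem itself, which collapses to a one-line Markov substitution, but the verification of the two hypotheses in the isotonic EB context. Hypothesis \eqref{eq:mean_risk_sparse} is the content of Theorem~\ref{thm:main} up to the stated log factors; to use it here one either absorbs $\log(ep/s)\log(1+Rn/\sigma^2)$ into the rate (mildly inflating $\varepsilon_n$) or invokes the unconditional end-to-end version available under the dyadic-margin condition via Proposition~\ref{prop:margin_implies_B1}. Hypothesis \eqref{eq:var_trace_sparse} requires a separate argument, and this is where I would expect the main work to live if one aimed for a fully self-contained statement: using $\widehat v_i=(\sigma^2/n)\,\widehat\tau_i^2/(\widehat\tau_i^2+\sigma^2/n)\le\min(\widehat\tau_i^2,\sigma^2/n)$, the trace splits into a ``large-signal'' part contributing at most $(\sigma^2/n)$ per active index and a ``small-signal'' part in which $\widehat\tau_i^2$ is itself small; the sparsity and monotonicity of $\widehat\tau^2$, inherited from the isotonic regression on the squared-Gaussian proxy, bound the effective support by $O(s)$ and so give a trace of order $s\sigma^2/n$. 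This parallels Step~2 of the proof of Theorem~\ref{thm:main} and reuses the weighted isotonic machinery of Lemma~\ref{lem:isotonic_weighted_main}, so no new technical device is required beyond what is already developed in the appendix.
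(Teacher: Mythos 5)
Your proof is correct and is exactly the paper's argument: the paper proves Theorem~\ref{thm:contr_sparse} by combining Lemma~\ref{lem:gauss_contr} with the hypotheses \eqref{eq:mean_risk_sparse}--\eqref{eq:var_trace_sparse}, precisely the Markov-substitution step you carry out. Your additional remarks on verifying the hypotheses (the log-factor inflation from Theorem~\ref{thm:main} and the trace bound via $\widehat v_i \le \min(\widehat\tau_i^2, \sigma^2/n)$) match the paper's own post-theorem remark and are not part of the proof proper, since the theorem takes both bounds as assumptions.
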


\begin{proof}
Combine Lemma~\ref{lem:gauss_contr} with \eqref{eq:mean_risk_sparse}--\eqref{eq:var_trace_sparse}.
\end{proof}

\paragraph{Remark (verifying the trace bound).}
From \eqref{eq:post}, $\widehat{v}_i\le \sigma^2/n$ and
$\mathrm{tr}(\widehat{V})=\sum_i \widehat{v}_i$ is of the same order as the ``effective dimension''
induced by the shrinkage factors. In particular, under isotonic EB, $\mathrm{tr}(\widehat{V})$
is typically bounded by a constant multiple of the mean-squared error of $\widehat{m}$; hence
\eqref{eq:var_trace_sparse} holds under the same conditions that yield \eqref{eq:mean_risk_sparse}.

\subsection{Contraction over Sobolev ellipsoids}

Define the Sobolev ellipsoid (weighted $\ell_2$-ball)
\begin{equation}
\mathcal{W}(\beta,R)
=
\left\{
\theta\in\mathbb{R}^p:
\sum_{i=1}^p i^{2\beta}\theta_i^2 \le R^2
\right\},
\qquad \beta>1/2.
\label{eq:sobolev_contr}
\end{equation}

\begin{theorem}[Posterior contraction: Sobolev ellipsoids]
\label{thm:contr_sob}
Assume that for each $\beta>1/2$ the posterior mean $\widehat{m}$ satisfies
\begin{equation}
\sup_{\theta_0\in\mathcal{W}(\beta,R)}
\mathbb{E}_{\theta_0}\|\widehat{m}-\theta_0\|_2^2
\;\le\;
C_1\, R^{\frac{2}{2\beta+1}}\Big(\frac{\sigma^2}{n}\Big)^{\frac{2\beta}{2\beta+1}},
\label{eq:mean_risk_sob}
\end{equation}
and likewise
\begin{equation}
\sup_{\theta_0\in\mathcal{W}(\beta,R)}
\mathbb{E}_{\theta_0}\mathrm{tr}(\widehat{V})
\;\le\;
C_2\, R^{\frac{2}{2\beta+1}}\Big(\frac{\sigma^2}{n}\Big)^{\frac{2\beta}{2\beta+1}}.
\label{eq:var_trace_sob}
\end{equation}
Then the isotonic EB posterior contracts at the Sobolev rate
\begin{equation}
\varepsilon_n^2
\asymp
R^{\frac{2}{2\beta+1}}\Big(\frac{\sigma^2}{n}\Big)^{\frac{2\beta}{2\beta+1}},
\label{eq:eps_sob}
\end{equation}
i.e., for any $M\to\infty$,
\[
\sup_{\theta_0\in\mathcal{W}(\beta,R)}
\mathbb{E}_{\theta_0}\Big[
\Pi(\|\theta-\theta_0\|_2 > M\varepsilon_n \mid Y)
\Big]
\;\longrightarrow\;
0.
\]
\end{theorem}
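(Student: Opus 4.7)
The plan is to reduce Theorem~\ref{thm:contr_sob} to a direct application of the generic contraction Lemma~\ref{lem:gauss_contr}, using the two hypothesized bounds \eqref{eq:mean_risk_sob} and \eqref{eq:var_trace_sob} as inputs. Because the isotonic empirical Bayes posterior is a product of Gaussians with mean $\widehat{m}$ and diagonal covariance $\widehat{V}$, the conditional second moment about $\theta_0$ decomposes cleanly as $\mathbb{E}_{\Pi(\cdot\mid Y)}\|\theta-\theta_0\|_2^2 = \|\widehat{m}-\theta_0\|_2^2 + \mathrm{tr}(\widehat{V})$, which is exactly the quantity appearing in the premise of Lemma~\ref{lem:gauss_contr}. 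This is the structural reason the contraction step is essentially mechanical once the two moment bounds are in hand.

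First I would sum \eqref{eq:mean_risk_sob} and \eqref{eq:var_trace_sob} to verify the second-moment condition \eqref{eq:moment_cond} with $\varepsilon_n^2 \asymp R^{2/(2\beta+1)}(\sigma^2/n)^{2\beta/(2\beta+1)}$, uniformly over $\mathcal{W}(\beta,R)$. Plugging this into the Markov bound \eqref{eq:markov_contr} yields
\[
\sup_{\theta_0\in\mathcal{W}(\beta,R)}\mathbb{E}_{\theta_0}\bigl[\Pi(\|\theta-\theta_0\|_2 > M\varepsilon_n\mid Y)\bigr] \;\le\; \frac{C_1+C_2}{M^2},
\]
which tends to $0$ as $M\to\infty$. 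The supremum over $\mathcal{W}(\beta,R)$ is preserved because both hypothesized bounds are assumed to be uniform on this class, and the constant $C_1+C_2$ in the resulting Markov inequality does not depend on $\theta_0$.

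The only substantive work therefore lies in verifying the two hypotheses themselves, which is where the Sobolev-specific analysis enters. The mean-squared error bound \eqref{eq:mean_risk_sob} follows from Corollary~\ref{cor:sobolev}: a Sobolev sequence satisfies $\theta_i^2 \lesssim R^2 i^{-2\beta-1}$, so the choice of effective sparsity $s_n \asymp (nR^2/\sigma^2)^{1/(2\beta+1)}$ balances the squared bias from tail truncation against the variance component of the sparse-class risk bound in Theorem~\ref{thm:unknown_sigma}. For the trace bound \eqref{eq:var_trace_sob}, I would exploit that each coordinatewise posterior variance satisfies $\widehat{v}_i \le (\sigma^2/n)\cdot \widehat{\tau}_i^2/(\widehat{\tau}_i^2+\sigma^2/n)$, so $\mathrm{tr}(\widehat{V})$ is bounded by $(\sigma^2/n)$ times the ``effective dimension'' induced by the monotone shrinkage profile. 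Under the same localization argument used to establish the MSE rate, this effective dimension is of order $s_n$, giving a trace bound at the same Sobolev rate.

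The main obstacle is not the contraction step itself, which is a one-line Markov argument, but rather the uniform trace bound \eqref{eq:var_trace_sob}, since the data-dependent $\widehat{\tau}^2$ is a nonlinear function of $Y$ constrained by PAVA and its expectation must be controlled uniformly over the ellipsoid. Using the cross-fitted construction of Theorem~\ref{thm:unknown_sigma} decouples $\widehat{\tau}^2$ from the Gaussian coordinates it shrinks and turns the trace into a bounded functional of an independent variance-profile estimate, so the expected effective dimension inherits the Sobolev rate from the same oracle inequality (Lemma~\ref{lem:isotonic_weighted_main}) that drives the MSE bound.
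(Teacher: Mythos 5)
Your proposal is correct and matches the paper's proof, which is exactly the one-line application of Lemma~\ref{lem:gauss_contr} to the hypothesized bounds \eqref{eq:mean_risk_sob}--\eqref{eq:var_trace_sob}; since the theorem takes these moment bounds as assumptions, your additional discussion of verifying them (via Corollary~\ref{cor:sobolev} and the effective-dimension trace argument) is not needed for the proof itself, though it correctly mirrors the paper's own remark on verification, modulo the logarithmic factor $\log(p/s_n)$ present in Corollary~\ref{cor:sobolev}.
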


\begin{proof}
Apply Lemma~\ref{lem:gauss_contr} with \eqref{eq:mean_risk_sob}--\eqref{eq:var_trace_sob}.
\end{proof}

\begin{corollary}[Posterior contraction in prediction norm for linear regression]
\label{cor:contr_pred}
Consider the Gaussian linear model
\begin{equation}
Y = X\beta_0 + \varepsilon,
\qquad
\varepsilon \sim N(0,\sigma^2 I_n),
\label{eq:lm}
\end{equation}
with fixed design $X\in\mathbb{R}^{n\times p}$ of rank $r\le \min\{n,p\}$, and known $\sigma^2$.
Let $X = UDV^\top$ be the singular value decomposition with
$D=\mathrm{diag}(d_1,\dots,d_r)$, $d_1\ge\cdots\ge d_r>0$.
Define the canonical coordinates
\[
Z := U^\top Y \in \mathbb{R}^r,
\qquad
\theta_0 := D V^\top \beta_0 \in \mathbb{R}^r,
\]
so that $Z = \theta_0 + \xi$ with $\xi\sim N(0,\sigma^2 I_r)$.

Let $\Pi(\cdot\mid Y)$ denote the isotonic empirical Bayes posterior on $\theta$
defined by the (data-dependent) Gaussian prior
$\theta_i\mid\tau_i^2\sim N(0,\tau_i^2)$ with $\tau_1^2\ge\cdots\ge\tau_r^2\ge 0$,
where $\widehat{\tau}^2$ is estimated by isotonic marginal maximum likelihood.
Map this posterior back to $\beta$ via
\begin{equation}
\beta = V D^{-1}\theta,
\qquad \text{so that}\qquad
X\beta = U\theta.
\label{eq:mapback}
\end{equation}
Then for any $\varepsilon_n>0$ and any $M>0$,
\begin{equation}
\mathbb{E}_{\beta_0}\Big[
\Pi\big(\|X(\beta-\beta_0)\|_2 > M\varepsilon_n \mid Y\big)
\Big]
=
\mathbb{E}_{\theta_0}\Big[
\Pi\big(\|\theta-\theta_0\|_2 > M\varepsilon_n \mid Z\big)
\Big].
\label{eq:pred_eq_theta}
\end{equation}
In particular, if the canonical posterior mean $\widehat{m}$ and covariance $\widehat{V}$
satisfy
\begin{equation}
\sup_{\theta_0\in\Theta}
\mathbb{E}_{\theta_0}\Big[
\|\widehat{m}-\theta_0\|_2^2 + \mathrm{tr}(\widehat{V})
\Big]
\;\lesssim\;
\varepsilon_n^2
\label{eq:pred_moment_cond}
\end{equation}
over a class $\Theta\subset\mathbb{R}^r$ (e.g., an ordered-sparse or Sobolev-type class in canonical
coordinates), then
\begin{equation}
\sup_{\beta_0:\, DV^\top\beta_0\in\Theta}
\mathbb{E}_{\beta_0}\Big[
\Pi\big(\|X(\beta-\beta_0)\|_2 > M\varepsilon_n \mid Y\big)
\Big]
\;\le\;
\frac{1}{M^2}\cdot
\sup_{\theta_0\in\Theta}
\frac{
\mathbb{E}_{\theta_0}\big[
\|\widehat{m}-\theta_0\|_2^2 + \mathrm{tr}(\widehat{V})
\big]
}{\varepsilon_n^2},
\label{eq:pred_contr}
\end{equation}
and hence the posterior contracts in prediction norm at rate $\varepsilon_n$ as $M\to\infty$.

Moreover, specializing to the ordered sparse class in canonical coordinates,
\[
\Theta_{\downarrow}(s,R)
=
\left\{
\theta\in\mathbb{R}^r:
\exists k\le s\ \text{s.t.}\ \theta_1^2\ge\cdots\ge\theta_k^2\ge 0,\ \theta_1^2\le R,\ \theta_i=0\ (i>k)
\right\},
\]
if Theorem~\ref{thm:contr_sparse} holds for $\theta_0\in\Theta_{\downarrow}(s,R)$ with
\[
\varepsilon_n^2 \asymp s\,\frac{\sigma^2}{n},
\]
then the induced posterior on $\beta$ satisfies
\[
\sup_{\beta_0:\, DV^\top\beta_0\in\Theta_{\downarrow}(s,R)}
\mathbb{E}_{\beta_0}\Big[
\Pi\big(\|X(\beta-\beta_0)\|_2 > M\varepsilon_n \mid Y\big)
\Big]
\longrightarrow 0,
\qquad M\to\infty.
\]
\end{corollary}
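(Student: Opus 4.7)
The plan is to reduce the prediction-norm contraction to the already-established canonical-norm contraction by exploiting two observations: (i) the singular value decomposition of $X$ induces a deterministic isometry between the prediction norm and the canonical norm, and (ii) the isotonic EB posterior on $\beta$ is defined as the pushforward under $\beta = VD^{-1}\theta$ of a posterior on $\theta$ that depends on $Y$ only through $Z = U^\top Y$. Given these two facts, the main identity \eqref{eq:pred_eq_theta} is essentially deterministic, and the contraction bound \eqref{eq:pred_contr} follows from Lemma~\ref{lem:gauss_contr}.

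First I would establish the pointwise identity $X\beta = U\theta$. Since $X = UDV^\top$ and, by construction of the pushforward prior, $\beta$ lives in the column space of $V$ with $\beta = VD^{-1}\theta$, we obtain $X\beta = UDV^\top V D^{-1}\theta = U\theta$, and similarly $X\beta_0 = U\theta_0$. Orthonormality of the columns of $U$ then yields
\[
\|X(\beta-\beta_0)\|_2 = \|U(\theta-\theta_0)\|_2 = \|\theta-\theta_0\|_2,
\]
as a deterministic identity. Consequently, the events $\{\|X(\beta-\beta_0)\|_2 > M\varepsilon_n\}$ and $\{\|\theta-\theta_0\|_2 > M\varepsilon_n\}$ coincide on the support of the posterior.

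Next I would argue that the conditioning on $Y$ can be replaced by conditioning on $Z$. Since the isotonic EB prior on $\theta$ is defined coordinatewise in the canonical basis and the marginal likelihood used to estimate $\widehat{\tau}^2$ is a function of $Z = U^\top Y$ alone (the orthogonal complement $U_\perp^\top Y$ is ancillary for $\theta$ under the orthogonal decomposition $\|Y - U\theta\|_2^2 = \|Z-\theta\|_2^2 + \|U_\perp^\top Y\|_2^2$), the induced posterior satisfies $\Pi(\cdot \mid Y) = \Pi(\cdot \mid Z)$. Combining this with the isometry gives \eqref{eq:pred_eq_theta}. Then Lemma~\ref{lem:gauss_contr} applied in the canonical coordinates produces the Markov bound in \eqref{eq:pred_contr}. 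Finally, specialization to $\Theta_{\downarrow}(s,R)$ follows by invoking Theorem~\ref{thm:contr_sparse} to verify condition \eqref{eq:pred_moment_cond} with $\varepsilon_n^2 \asymp s\sigma^2/n$, and the analogous Sobolev rate follows from Theorem~\ref{thm:contr_sob}.

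The main obstacle, if any, is a bookkeeping one rather than a technical one: I need to be careful about dimensions and supports. The canonical parameter $\theta$ lives in $\mathbb{R}^r$ where $r = \mathrm{rank}(X)$, so $D^{-1}$ is well-defined and the pushforward $\beta = VD^{-1}\theta$ places $\beta$ on the $r$-dimensional column space of $V$. Any component of $\beta_0$ outside this subspace is unidentifiable under the prediction norm (it is annihilated by $X$), so the identification of events and the equality in \eqref{eq:pred_eq_theta} holds precisely on the identifiable part. Once this is stated cleanly, the remainder of the proof is immediate from the two previously established ingredients.
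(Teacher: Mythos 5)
Your proposal is correct and takes essentially the same route as the paper's proof: reduce via the SVD isometry $\|X(\beta-\beta_0)\|_2=\|U(\theta-\theta_0)\|_2=\|\theta-\theta_0\|_2$, then apply the second-moment contraction bound of Lemma~\ref{lem:gauss_contr} and invoke Theorem~\ref{thm:contr_sparse} for the ordered sparse specialization. Your explicit justification that $\Pi(\cdot\mid Y)=\Pi(\cdot\mid Z)$ (via ancillarity of $U_\perp^\top Y$) and your remark on the null-space component of $\beta_0$ merely spell out details the paper leaves implicit, and do not change the argument.
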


\begin{proof}
Under \eqref{eq:lm}, $Z=U^\top Y = U^\top X\beta_0 + U^\top\varepsilon
= DV^\top\beta_0 + \xi = \theta_0 + \xi$ with $\xi\sim N(0,\sigma^2 I_r)$.
By \eqref{eq:mapback}, $X(\beta-\beta_0)=U(\theta-\theta_0)$, and since $U$ has orthonormal columns,
$\|X(\beta-\beta_0)\|_2=\|\theta-\theta_0\|_2$. This yields \eqref{eq:pred_eq_theta}.
The moment bound \eqref{eq:pred_contr} follows by applying the second-moment contraction lemma
(Lemma~\ref{lem:gauss_contr}) to the Gaussian posterior on $\theta$ and using the norm equality.
\end{proof}

\section{Extensions and Variants}

This appendix briefly notes extensions of the main results.

\paragraph{Approximate monotonicity.}
The main theorem can be extended to allow approximate monotonicity. Suppose instead that $\tau_i^2 \le \tau_{i-1}^2 + \delta_n$ for some tolerance $\delta_n > 0$. Then the risk bound becomes
\[
\sup_{\theta \in \Theta_{\downarrow}(s,R)}
\mathbb{E}_{\theta} \|\widehat{\theta} - \theta\|_2^2
\;\lesssim\;
s \frac{\sigma^2}{n} \log p + p \delta_n.
\]
The proof follows the same strategy with an additional error term accounting for the violation of strict monotonicity.

\paragraph{Unknown noise variance.}
When $\sigma^2$ is unknown, Theorem~\ref{thm:unknown_sigma} provides a cross-fit empirical Bayes construction that achieves the same near-minimax rate with an additional lower-order term for variance estimation.

\paragraph{Regression models.}
The extension to general linear regression via SVD is developed in Section~\ref{sec:global_local_framework} (canonical coordinates) and Corollary~\ref{cor:main_pred} (prediction-norm contraction).

\end{document}